\newcommand{\kne}{\mathrm{KN}}
\newcommand{\pd}{{\partial}}
\newcommand{\frl}{\mathfrak{F}}
\newcommand{\frid}{\mathfrak{I}}
\newcommand{\mcl}{\mathfrak{L}}
\newcommand{\lspan}{\mathrm{span\,}}
\newcommand{\wea}{\mathfrak{W}}
\newcommand{\fla}{\mathbf{A}}
\newcommand{\flb}{\mathbf{B}}
\newcommand{\flm}{\mathbf{M}}
\newcommand{\fln}{\mathbf{N}}
\newcommand{\flz}{\mathbf{Z}}
\newcommand{\fnt}{F}
\newcommand{\hlt}{H}
\newcommand{\idz}{\mathfrak{Z}}
\newcommand{\fanl}{\mathcal{A}}
\newcommand{\ufl}{\mathfrak{R}}
\newcommand{\fik}{\mathbb{K}}
\newcommand{\qalg}{\mathcal{Q}}
\newcommand{\ipol}{\mathcal{I}}
\newcommand{\wfla}{\hat{\fla}}
\newcommand{\wflb}{\hat{\flb}}
\newcommand{\ple}{M}
\newcommand{\qle}{N}
\newcommand{\cle}{A}
\newcommand{\dle}{B}
\newcommand{\nmf}{n_1}
\newcommand{\nmg}{n_2}
\newcommand{\mtf}{v}
\newcommand{\wflm}{\hat{\flm}}
\newcommand{\wfln}{\hat{\fln}}
\newcommand{\al}{{\alpha}}
\newcommand{\be}{{\beta}}
\newcommand{\la}{{\lambda}}
\newcommand{\mat}{\mathbf{S}}
\newcommand{\ub}{\underbrace}
\newcommand{\cl}{\colon}
\newcommand{\CE}{\mathcal{E}}
\newcommand{\ess}{\mathcal{E}'}
\newcommand{\zp}{\mathbb{Z}_{\ge 0}}
\newcommand{\zsp}{\mathbb{Z}_{>0}}
\newcommand{\ad}{{\rm ad\,}}
\newcommand{\mg}{\mathfrak{g}}
\newcommand{\vf}{\varphi}
\newcommand{\Com}{\mathbb{C}}
\newtheorem{theorem}{Theorem}
\newtheorem{proposition}{Proposition}
\newtheorem{statement}{Statement}
\newtheorem{lemma}{Lemma}
\theoremstyle{definition}
\newtheorem{example}{Example}
\newtheorem{remark}{Remark}
\begin{document}

\keywords{Wahlquist-Estabrook prolongation;  
infinite-dimensional Lie algebras; B\"acklund transformations;  
multicomponent Landau-Lifshitz equations; algebraic curves}

\subjclass{37K30, 37K35}

\title[Prolongation Lie algebras and 
multicomponent Landau-Lifshitz systems]
{Infinite-dimensional prolongation Lie algebras \\ 
and multicomponent Landau-Lifshitz systems \\
associated with higher genus curves}
\date{}

\author{Sergey Igonin}
\address{S.~Igonin,
Department of Mathematics, Utrecht University, P.O. Box 80010, 3508 TA Utrecht,
the Netherlands}
\email{s-igonin@yandex.ru}

\author{Johan van de Leur}
\address{J.~van~de~Leur, 
Department of Mathematics, Utrecht University, P.O. Box 80010, 3508 TA Utrecht,
the Netherlands}
\email{J.W.vandeLeur@uu.nl}

\author{Gianni Manno}
\address{G.~Manno \\
Mathematisches Institut, Fakult\"at f\"ur Mathematik und Informatik, 
Friedrich-Schiller-Universit\"at Jena, D-07737 Jena, Germany}
\address{Universit\`a degli Studi di Milano-Bicocca, Dipartimento di Matematica e Applicazioni, Via Cozzi 53, 20125 Milano, Italy}
\email{gianni.manno@unimib.it}

\author{Vladimir Trushkov}
\address{V.~Trushkov \\
University of Pereslavl, Sovetskaya 2, 152020 Pereslavl-Zalessky, Yaroslavl region, Russia}
\email{vvtrushkov@yandex.ru}

\begin{abstract}
The Wahlquist-Estabrook prolongation method 
constructs for some PDEs a Lie algebra 
that is responsible for Lax pairs and B\"acklund transformations of certain type. 
We present some general properties of Wahlquist-Estabrook algebras  
for $(1+1)$-dimensional evolution PDEs and compute 
this algebra for the $n$-component Landau-Lifshitz system of Golubchik and Sokolov for any $n\ge 3$.  

We prove that 
the resulting algebra is isomorphic to the direct sum of a $2$-dimensional abelian Lie algebra 
and an infinite-dimensional Lie algebra $L(n)$ of certain matrix-valued functions 
on an algebraic curve of genus $1+(n-3)2^{n-2}$. 
This curve was used by Golubchik, Sokolov, Skrypnyk, Holod 
in constructions of Lax pairs. 
Also, we find a presentation for the algebra $L(n)$
in terms of a finite number of generators and relations. 
These results help to obtain a partial answer 
to the problem of classification of multicomponent Landau-Lifshitz systems 
with respect to B\"acklund transformations.

Furthermore, we construct a family of integrable evolution PDEs 
that are connected with the $n$-component Landau-Lifshitz system by Miura type transformations 
parametrized by the above-mentioned curve. 
Some solutions of these PDEs are described. 
\end{abstract}

\maketitle

\section{Introduction}

\subsection{Motivation for the studied problem and a summary of the results}
\label{motiv}

In the last 30 years, it has been relatively well understood how 
to obtain integrable PDEs from some infinite-dimensional Lie algebras
(see, e.g.,~\cite{adler,djkm,dickey,ds,feher,mll,gds,jimbo,reyman-semenov,
skr,skr-jmp} and references therein).
We study the inverse problem: given a 
PDE\footnote{A ``PDE'' means a ``system of partial differential equations''.}, 
how to determine whether this PDE is related to an infinite-dimensional
Lie algebra and how to construct the corresponding Lie algebra?

A partial answer to this question is provided by 
the so-called Wahlquist-Estabrook prolongation method~\cite{dodd,nonl,pirani79,Prol}.  
For a given $(1+1)$-dimensional evolution PDE, 
this method constructs a Lie algebra in terms of generators and relations. 
It is called the \emph{Wahlquist-Estabrook algebra} of the PDE 
(WE algebra for short). The method is applicable also 
to some non-evolution PDEs (see, e.g.,~\cite{finley-k2,pirani79}).
  
The construction of the WE algebra for a PDE uses only the PDE itself. 
Here the PDE does not have to be integrable. 
When the WE algebra turns out to be infinite-dimensional, 
this is usually a serious indication that 
the PDE possesses some integrability properties. 
  
Before describing the results of this paper, 
we would like to recall some known applications of WE algebras. 
Any matrix representation of the WE algebra of a PDE determines a 
zero-curvature representation (ZCR) for this PDE. 
(For $(1+1)$-dimensional PDEs, 
the notion of ZCR is essentially equivalent to that of Lax pair.)
Vector field representations of the WE algebra often lead to B\"acklund 
transformations. 
Computing the structure of WE algebras for PDEs, 
one can get many interesting infinite-dimensional Lie algebras 
(see, e.g.,~\cite{kdv1,finley-k2,schief,cfa,ll} and references therein).

Using some generalization of WE algebras, 
one obtains powerful necessary conditions for two given PDEs 
to be connected by a B\"acklund transformation (BT for short)~\cite{cfa,igon-mpi,cfg-2009}. 
For example, the following result has been proved recently in~\cite{igon-mpi} by means of this theory. 
For any $e_1,e_2,e_3\in\Com$, 
consider the Krichever-Novikov equation 
\begin{equation}
\label{kne}
  \kne(e_1,e_2,e_3)=\left\{
 u_t=u_{xxx}-\frac32\frac{u_{xx}^2}{u_x}+
 \frac{(u-e_1)(u-e_2)(u-e_3)}{u_x},\qquad u=u(x,t)\right\},
\end{equation}
and the algebraic curve 
$C(e_1,e_2,e_3)=\Big\{(z,y)\in\Com^2\ \Big|\ 
y^2=(z-e_1)(z-e_2)(z-e_3)\Big\}$.
\begin{proposition}[\cite{igon-mpi}]\label{knprop}
Let $e_1,e_2,e_3,e'_1,e'_2,e'_3\in\Com$ be such that 
$e_i\neq e_j$ and $e'_i\neq e'_j$ for all $i\neq j$. 

If the curve $C(e_1,e_2,e_3)$ is not birationally equivalent to 
the curve $C(e'_1,e'_2,e'_3)$, 
then the equation $\kne(e_1,e_2,e_3)$ is not connected with the equation $\kne(e'_1,e'_2,e'_3)$ by any B\"acklund transformation. 

Also, if $e_1\neq e_2\neq e_3\neq e_1$, then $\kne(e_1,e_2,e_3)$ is not connected with the KdV equation by any BT. 
\end{proposition}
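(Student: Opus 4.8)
The plan is to invoke the general obstruction theory for B\"acklund transformations developed in \cite{cfa,igon-mpi}, which attaches to every $(1+1)$-dimensional evolution PDE and every point of its infinite jet space a certain Lie algebra --- the \emph{fundamental algebra} of the PDE at that point, a generalization of the Wahlquist-Estabrook algebra --- and which asserts that if two PDEs are connected by a BT, then their fundamental algebras at generic points are related through the fundamental algebra of an intermediate (covering) PDE by surjective homomorphisms. Consequently, any quantity that can be reconstructed from the isomorphism type of the fundamental algebra of a PDE, and that is preserved by such surjections, is an invariant with respect to B\"acklund transformations. The strategy is therefore: (i) compute the relevant part of the fundamental algebra of $\kne(e_1,e_2,e_3)$; (ii) show that the birational equivalence class of the curve $C(e_1,e_2,e_3)$ can be read off from it; (iii) conclude for both assertions.

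First I would carry out the Wahlquist-Estabrook-type computation for the third-order equation $\kne(e_1,e_2,e_3)$: writing out the defining relations coming from the zero-curvature condition and simplifying, one finds that the resulting prolongation algebra surjects onto a Lie algebra $\mathfrak{g}(e_1,e_2,e_3)$ of $\mathfrak{sl}_2(\Com)$-valued regular functions on the affine curve $C(e_1,e_2,e_3)$ with prescribed behaviour over the branch points $z=e_i$ and over infinity; this is essentially the Lie algebra underlying the known Lax pair of the Krichever-Novikov equation. The crucial structural point is that the commutative (``spectral'') subalgebras of $\mathfrak{g}(e_1,e_2,e_3)$, and hence the field of functions $\Com(z,y)$ with $y^2=(z-e_1)(z-e_2)(z-e_3)$ on which this algebra lives, are intrinsic to its abstract Lie-algebra structure rather than artifacts of the chosen presentation; in particular, the curve $C(e_1,e_2,e_3)$ up to birational equivalence --- equivalently, its $j$-invariant --- is recovered from the fundamental algebra of $\kne(e_1,e_2,e_3)$, and it is preserved under the surjections in the obstruction theorem.

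Given this, the first assertion follows: if $\kne(e_1,e_2,e_3)$ and $\kne(e'_1,e'_2,e'_3)$ were connected by a B\"acklund transformation, the obstruction theory of \cite{cfa,igon-mpi} would force $C(e_1,e_2,e_3)$ and $C(e'_1,e'_2,e'_3)$ to be birationally equivalent, contrary to hypothesis. For the second assertion I would compute the fundamental algebra of the KdV equation --- the classical Wahlquist-Estabrook computation --- and observe that the associated spectral curve is rational, i.e.\ of genus $0$, whereas for pairwise distinct $e_i$ the curve $C(e_1,e_2,e_3)$ is an elliptic curve of genus $1$; since the genus is a birational invariant and is respected by the surjections above, the two fundamental algebras cannot be related as the theorem requires, so no BT exists.

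The main obstacle is clearly step (i)--(ii): one must compute the infinite-dimensional prolongation algebra of the Krichever-Novikov equation accurately enough to identify it with an algebra of $\mathfrak{sl}_2(\Com)$-valued functions on $C(e_1,e_2,e_3)$, and then prove that the birational type of the curve is genuinely encoded in the abstract Lie-algebra structure. This is precisely the technical heart of \cite{igon-mpi}, and it is of the same nature --- and comparable difficulty --- as the structural computation carried out in the present paper for the $n$-component Landau-Lifshitz system and its algebra $L(n)$.
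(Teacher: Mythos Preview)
The paper does not actually prove Proposition~\ref{knprop}; it is quoted as a result of~\cite{igon-mpi} and used only as motivation. What the paper does say about the method is in Subsection~\ref{pr-backl}: the proof in~\cite{igon-mpi} uses a result ``similar to Proposition~\ref{faexbt}'', namely that if two evolution PDEs are connected by a BT then their fundamental Lie algebras at suitable points are \emph{cofinitely-equivalent} (each contains a finite-codimension subalgebra, and these subalgebras are isomorphic). Your outline is consistent with this: compute the fundamental algebra of $\kne(e_1,e_2,e_3)$, identify it with an algebra of $\mathfrak{sl}_2$-valued functions on $C(e_1,e_2,e_3)$, and show that the birational class of the curve is an invariant of the algebra that survives the equivalence forced by a BT.

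One point where your sketch diverges from the mechanism the paper describes: you phrase the obstruction as ``surjective homomorphisms from the fundamental algebra of an intermediate covering PDE'', whereas the paper states the obstruction as cofinite equivalence (Proposition~\ref{faexbt}). These are not the same condition, and the actual work in~\cite{igon-mpi} is to show that the birational type of $C(e_1,e_2,e_3)$ is an invariant under passage to finite-codimension subalgebras of the relevant infinite-dimensional Lie algebra --- not merely under surjections. That is the step where the genus (or $j$-invariant) argument has to be made rigorous, and it is the genuine technical content you would need to supply. Your identification of this as ``the technical heart of~\cite{igon-mpi}'' is accurate; the present paper carries out the analogous computation for $L(n)$ but does not reproduce the Krichever--Novikov case.
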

Similar results are proved in~\cite{igon-mpi} for the Landau-Lifshitz and 
nonlinear Schr\"odinger equations as well.

BTs of Miura type (differential substitutions) for~\eqref{kne} were studied 
in~\cite{svin-sok83}. 
According to~\cite{svin-sok83}, the equation $\kne(e_1,e_2,e_3)$
is connected with the KdV equation by a BT of Miura type 
iff $e_i=e_j$ for some $i\neq j$.

The papers~\cite{igon-mpi,cfg-2009} and 
Proposition~\ref{knprop} consider the most general class of BTs, 
which is much larger than the class of 
BTs of Miura type studied in~\cite{svin-sok83}.
WE algebras played 
an important role in obtaining these results about 
BTs in~\cite{cfa,igon-mpi,cfg-2009}. 
A method to obtain results similar to Proposition~\ref{knprop} 
is discussed in Subsection~\ref{pr-backl} of the present paper.

In our opinion, the above-mentioned applications of WE algebras 
strongly suggest to study these algebras for more PDEs. 
According to~\cite{ll}, the WE algebra of the Landau-Lifshitz equation 
is isomorphic to the infinite-dimensional Lie algebra 
of certain matrix-valued functions on an algebraic curve of genus~$1$.
One of our goals is to present examples 
of WE algebras related to higher genus curves.  

To this end,
we study a multicomponent generalization of the Landau-Lifshitz equation 
from~\cite{mll,skr-jmp}.  
To describe this PDE, we need some notation.
Let $\fik$ be either $\Com$ or $\mathbb{R}$.
Fix an integer $n\ge 2$. 
For any $n$-dimensional vectors 
$V={(v^1,\dots,v^n)}$ and $W={(w^1,\dots,w^n)}$, set 
$\langle V,W\rangle=\sum_{i=1}^nv^iw^i$.

Let $r_1,\dots,r_n\in\fik$ be such that $r_i\neq r_j$ for all $i\neq j$. 
Denote by $R=\mathrm{diag}\,(r_1,\dots,r_n)$ the diagonal $(n\times n)$-matrix 
with entries $r_i$.
Consider the PDE
\begin{equation}
\label{main}
S_t=\Big(S_{xx}+\frac32\langle S_x,S_x\rangle S\Big)_x+\frac32\langle S,RS\rangle S_x,
\qquad\quad \langle S,S\rangle=1,\qquad\quad
R=\mathrm{diag}\,(r_1,\dots,r_n),
\end{equation} 
where $S=\big(s^1(x,t),\dots,s^n(x,t)\big)$ 
is a column-vector of dimension~$n$, and 
$s^i(x,t)$ take values in $\fik$.

System~\eqref{main} was introduced in~\cite{mll}. 
According to~\cite{mll}, 
for $n=3$ it coincides with the higher symmetry (the commuting flow) 
of third order for the Landau-Lifshitz equation. 
Thus~\eqref{main} can be regarded as an $n$-component generalization of the Landau-Lifshitz equation. 

The paper~\cite{mll} considers also the following algebraic curve 
\begin{equation}
\label{curve}
\la_i^2-\la_j^2=r_j-r_i,\qquad\qquad i,j=1,\dots,n,
\end{equation}
in the space $\fik^n$ with coordinates $\la_1,\dots,\la_n$.
According to~\cite{mll}, this curve is of genus ${1+(n-3)2^{n-2}}$, 
and system~\eqref{main} possesses a ZCR parametrized by points of this curve. 

System~\eqref{main} has an infinite number of symmetries, 
conservation laws~\cite{mll}, and an auto-B\"acklund transformation with a parameter~\cite{ll-backl}. 
Soliton-like solutions of~\eqref{main} can be found in~\cite{ll-backl}. 
In~\cite{skr-jmp} system~\eqref{main}
and its symmetries are constructed by means of the Kostant--Adler scheme.

The results of this paper can be summarized as follows.  

In Section~\ref{wea} some general properties of WE algebras are presented. 
In particular, a rigorous definition of these algebras is given for a wide class of PDEs. 
An outline of these properties is presented in Subsection~\ref{detdesc}.

In Sections~\ref{comput},~\ref{sect_expl_str},
for all $n\ge 3$, the WE algebra of system~\eqref{main} is computed. 
We prove that the WE algebra of~\eqref{main} is isomorphic 
to the direct sum $\fik^2\oplus L(n)$.
Here $\fik^2$ is a $2$-dimensional abelian Lie algebra,  
and $L(n)$ is an infinite-dimensional Lie algebra 
of certain matrix-valued functions on the curve~\eqref{curve}. 
Applications of this result to some classification problems for 
B\"acklund transformations of~\eqref{main} are discussed in 
Subsection~\ref{pr-backl}. 

To our knowledge, 
this is the first example of a computation of WE algebras for PDEs related to algebraic curves of genus $>1$. 
Also, this seems to be the first example of an explicit description of the WE algebra for a PDE with more than $3$ dependent variables. 
(In system~\eqref{main}, the dependent variables are  $s^1(x,t),\dots,s^n(x,t)$.) 

In Remark~\ref{we-curve} in Subsection~\ref{detdesc} 
we discuss how one can recover the curve~\eqref{curve} 
from the WE algebra of~\eqref{main}.

As a by-product, we obtain a presentation for the algebra~$L(n)$ 
in terms of a finite number of generators and relations. 

The algebra $L(n)$ is very similar to infinite-dimensional 
Lie algebras that were studied in a different context  in~\cite{mll,naukma,skr,skr-jmp}. 
Note that a presentation in terms of a finite number of generators 
and relations was not known for $L(n)$ in the case~$n>3$. 
For~$n=3$ such a presentation was obtained in~\cite{ll}
in the computation of the WE algebra 
of the classical Landau-Lifshitz equation.

In Section~\ref{miura} we construct new 
B\"acklund transformations of Miura type, 
which connect system~\eqref{main} 
with a family of integrable evolution PDEs 
parametrized by the curve~\eqref{curve}.
Also, some solutions of these PDEs are described.
The constructed BTs correspond to certain 
vector field representations of the WE algebra 
of~\eqref{main}. 

These results are explained in more detail in Subsection~\ref{detdesc}.

Weaker versions of some of these results appeared in our preprint~\cite{gll-2008}. 
For completeness, we include some results of~\cite{gll-2008} in the present paper. 

\subsection{A more detailed description of the results}
\label{detdesc}

In Section~\ref{wea} we give a definition of 
WE algebras for evolution systems   
\begin{equation}
\label{sys_intr}
\frac{\pd u^i}{\pd t}
=F^i(u^1,\dots,u^m,\,u^1_1,\dots,u^m_1,\dots,u^1_d,\dots,u^m_d),\,\quad
u^i=u^i(x,t),\,\quad u^i_k=\frac{\pd^k u^i}{\pd x^k},\,\quad 
i=1,\dots,m. 
\end{equation}
The main idea of our definition 
is very similar to that of~\cite{dodd,nonl,pirani79,Prol}. 
However, 
instead of the standard approach of differential forms and vector fields, 
we use formal power series with coefficients in Lie algebras. 
The formal power series approach has the following advantage. 

In the classical Wahlquist-Estabrook 
prolongation theory~\cite{dodd,nonl,pirani79,Prol},
one imposes some conditions on the functions~$F^i$ in~\eqref{sys_intr}, 
in order to get a well-defined WE algebra. 
We do not impose any conditions on~$F^i$. 
The formal power series approach makes it possible 
to define the WE algebra   
rigorously for every system~\eqref{sys_intr}, where $F^i$ can be arbitrary. 

The definition goes as follows. 
Suppose that $u^i$ take values in $\fik$.
Let $D_x$, $D_t$ be the total derivative operators corresponding to~\eqref{sys_intr}. 

Fix $a^i_k\in\fik$ for $i=1,\dots,m$ and $k=0,1,2,\dots$ 
such that the functions $F^i$ from~\eqref{sys_intr} are defined 
on a neighborhood of the point $u^i_k=a^i_k$. 
Here $u^i_0$ is $u^i$. 
Consider the equation 
\begin{equation}\label{zcr_intr}
D_x(B)-D_t(A)+[A,B]=0, 
\end{equation}
where $A$ is a power series in the variables 
$u^i-a^i_0$, and $B$ is a power series in the variables 
$u^i_k-a^i_k$ for $k\le d-1$.  
Here $d\ge 1$ is such that $F^i$ may depend only on $u^j_{l}$ for $l\le d$.

The coefficients of the power series $A$, $B$ are regarded 
as generators of the WE algebra, and 
equation~\eqref{zcr_intr}
provides relations for these generators. 
A more detailed description of this construction 
is given in Section~\ref{defwea}.

Thus the WE algebra is determined by system~\eqref{sys_intr} 
and numbers~$a^i_k$.  
In Section~\ref{we-ai} we show that in many cases the WE  
algebra does not depend on the choice of~$a^i_k$.

\begin{remark}
Let $q$ be a nonnegative integer. 
One can also study equation~\eqref{zcr_intr} in the case when 
$A$ may depend on $u^i_k-a^i_k$ for $k\le q$ and 
$B$ may depend on $u^{i'}_{k'}-a^{i'}_{k'}$ for $k'\le q+d-1$.

If $q=0$, we get the WE algebra.
When $q>0$, the problem becomes much more complicated, 
because one needs to use gauge transformations, in order to simplify solutions $A$, $B$ of~\eqref{zcr_intr}. 
Studying equation~\eqref{zcr_intr} for $q>0$ and using gauge transformations, 
one can obtain the so-called \emph{fundamental Lie algebra} for~\eqref{sys_intr}, which generalizes the WE algebra.
The notion of fundamental Lie algebras for PDEs is described in~\cite{cfg-2009} 
and is briefly discussed in Subsection~\ref{pr-backl} of the present paper.
\end{remark}

In Section~\ref{comput} 
this construction is applied to system~\eqref{main}.
If $n=2$ then~\eqref{main} is equivalent to a scalar equation 
of the form $u_t=u_{xxx}+f(u,u_x,u_{xx})$. 
For scalar equations of this type, WE algebras have already been 
studied quite well (see, e.g.,~\cite{kdv1,cfa} and references therein).  
In the case $n=2$ the curve~\eqref{curve} is rational. 
For these reasons, we assume $n\ge 3$.

Using the definition of WE algebras, 
we first obtain the WE algebra of~\eqref{main} 
in terms of generators and relations. 
Namely, in Section~\ref{comput} 
it is shown that the WE algebra of~\eqref{main}  
is isomorphic to the direct sum of a $2$-dimensional abelian Lie algebra 
and an infinite-dimensional Lie algebra $\mg(n)$. 
The algebra $\mg(n)$ is given by generators $p_1,\dots,p_n$ 
and the relations 
\begin{gather}
\label{int_rel1}
[p_i,[p_j,p_k]]=0,\quad\qquad i\neq j\neq k\neq i,
\quad\qquad i,j,k=1,\dots,n,
\\
\label{int_rel2}
[p_i,[p_i,p_k]]-[p_j,[p_j,p_k]] = (r_j-r_i)p_k,
\qquad i\neq k,\qquad j\neq k,\qquad i,j,k=1,\dots,n.
\end{gather}

In Section~\ref{sect_expl_str} we prove that $\mg(n)$ is isomorphic 
to the infinite-dimensional Lie algebra $L(n)$ 
of certain $\mathfrak{so}_{n,1}$-valued functions on the curve~\eqref{curve}. 
Here $\mathfrak{so}_{n,1}$ is the Lie algebra of the matrix Lie group $\mathrm{O}(n,1)$, 
which consists of linear transformations that preserve the standard 
bilinear form of signature $(n,1)$. 
From the isomorphism $\mg(n)\cong L(n)$  
we get for~$L(n)$ a presentation 
in terms of $n$~generators and relations~\eqref{int_rel1},~\eqref{int_rel2}.

One has also $L(n)=\bigoplus_{i=1}^\infty L_i$ for some vector subspaces 
$L_i\subset L(n)$ with the following properties 
$$
[L_i,L_j]\subset L_{i+j}+L_{i+j-2},
\qquad\quad\dim L_{2k-1}=n,\qquad\quad\dim L_{2k}=\frac{n(n-1)}{2},\qquad\quad 
i,j,k\in\zsp.
$$
Thus the Lie algebra $L(n)$ is quasigraded (almost graded) 
in the sense of~\cite{quasigraded,skr-jmp}. 

For $n=3$ relations~\eqref{int_rel1},~\eqref{int_rel2} and 
the isomorphism $\mg(3)\cong L(3)$ 
were obtained in~\cite{ll} in the computation of the WE algebra 
of the classical Landau-Lifshitz equation.
  
\begin{remark}\label{we-curve}
Clearly, relations~\eqref{int_rel2} look somewhat similar to equations~\eqref{curve}. 
And indeed, formulas~\eqref{qie}, \eqref{is} and Theorem~\ref{gnL} 
in Section~\ref{sect_expl_str} explain how $p_i$ is related to~$\la_i$. 
Relations~\eqref{int_rel2} are obtained 
by the Wahlquist-Estabrook method applied to~\eqref{main}. 
Therefore, at least in some examples, WE algebras help to answer 
the following question. 
Given a PDE, which is suspected to be integrable, 
how to find an algebraic curve such that the PDE 
possesses a ZCR parametrized by this curve?  
More precisely, we mean the following.

According to Section~\ref{defwea}, 
one has a universal procedure that constructs the WE algebra 
in terms of generators and relations for any system~\eqref{sys_intr}. 
Applying this procedure to system~\eqref{main},
one gets relations~\eqref{int_rel1},~\eqref{int_rel2}. 
If we want to find a ZCR parametrized by an algebraic curve, 
we should assume that $p_i$ corresponds to a matrix-valued 
function on a curve. Then, looking at relations~\eqref{int_rel2}, 
one can guess that one should consider the curve~\eqref{curve}. 
\end{remark}

Our proof of the isomorphism $\mg(n)\cong L(n)$ goes as follows. 
The ZCR for~\eqref{main} 
described in~\cite{mll,skr-jmp} can be interpreted 
as a ZCR with values in~$L(n)$. 
This ZCR corresponds to a representation of the WE algebra of~\eqref{main}. 
Therefore, we obtain a homomorphism from the WE algebra to $L(n)$.
Using some filtrations on the algebras $\mg(n)$ and $L(n)$, 
we prove that this homomorphism induces an isomorphism between 
$\mg(n)$ and $L(n)$. 

Recall that a \emph{Miura type transformation} (MTT)
for system~\eqref{sys_intr} is given by  
\begin{gather}
\label{vt}
v^i_t=G^i(v^j,v^j_{x},v^j_{xx},\dots),\quad\qquad v^i=v^i(x,t),
\quad\qquad i,j=1,\dots,m,\\
\label{mt}
u^i=H^i(v^j,v^j_{x},v^j_{xx},\dots),\qquad\qquad i,j=1,\dots,m.
\end{gather} 
Here~\eqref{vt} is another evolution PDE, 
and formulas~\eqref{mt} must satisfy the following properties. 
For any solution $v^i$ of~\eqref{vt}, the functions $u^i$ given 
by~\eqref{mt} obey equations~\eqref{sys_intr}. 
And for any solution $u^i$ of~\eqref{sys_intr}, 
locally there exist functions $v^i$ 
satisfying~\eqref{vt},~\eqref{mt}. 
 
MTTs play an essential role in the classification 
of some types of integrable PDEs (see, e.g.,~\cite{svin-sok83}). 

To our knowledge, before the present paper, 
there were no examples of MTTs for system~\eqref{main}. 
In Section~\ref{miura} we construct a family of such MTTs parametrized by 
points of the curve~\eqref{curve}. 

Namely, we find an evolution system of the form 
\begin{equation}\label{mtsysmain}
\mtf^i_t=P^i(\la_1,\dots,\la_n,
\mtf^j,\mtf^j_x,\mtf^j_{xx},\mtf^j_{xxx}),\quad\qquad i=1,\dots,n,\quad\qquad 
\sum_{i=1}^n(\mtf^i)^2=1,
\end{equation}
and a transformation 
\begin{equation}\label{siwi}
s^i=R^i(\la_1,\dots,\la_n,\mtf^j,\mtf^j_x),\qquad\qquad i=1,\dots,n.
\end{equation}
Here $\la_1,\dots,\la_n\in\Com$ are parameters satisfying~\eqref{curve} 
and $\la_i\neq 0$ for all $i=1,\dots,n$. 
Formulas~\eqref{mtsysmain},~\eqref{siwi} are defined locally 
on some open subset of the space of jets of functions $s^i$, $\mtf^i$. 

For any solution $\mtf^1,\dots,\mtf^n$ of~\eqref{mtsysmain}, the function 
$S=(s^1,\dots,s^n)$ given by~\eqref{siwi} obeys~\eqref{main}. 
For any fixed solution $S=(s^1,\dots,s^n)$ of~\eqref{main} 
and any fixed nonzero numbers $\la_1,\dots,\la_n$ satisfying~\eqref{curve}, 
locally there is an $(n-1)$-parametric family of solutions 
$\mtf^1,\dots,\mtf^n$ of equations~\eqref{mtsysmain},~\eqref{siwi}.

This seems to be the first example of MTTs parametrized 
by an algebraic curve of genus $>1$.
To construct this MTT, 
we find a nonlinear reduction of the auxiliary linear system 
corresponding to the ZCR for~\eqref{main}. 

It is well known that, if system~\eqref{sys_intr} is integrable 
and system~\eqref{vt} is connected with~\eqref{sys_intr} by an MTT~\eqref{mt}, 
then~\eqref{vt} is also integrable.   
Therefore, since~\eqref{main} is integrable, 
we see that~\eqref{mtsysmain} is integrable as well. 
In particular, one can transfer the known ZCR, conservation laws, 
and auto-B\"acklund transformations of~\eqref{main} to system~\eqref{mtsysmain} 
by means of the transformation~\eqref{siwi}. 

In Remark~\ref{mttvector} it is shown that the constructed MTTs 
correspond to some vector field representations of the WE 
algebra of~\eqref{main}.
In Section~\ref{miura}
we show also how to obtain solutions for~\eqref{mtsysmain} 
from solutions of~\eqref{main} and describe some solutions for~\eqref{mtsysmain} explicitly. 

Section~\ref{sec_ap} contains the proof of the technical Lemma~\ref{lemma} 
about $\mg(n)$.

\begin{remark}
Several more integrable PDEs with ZCRs parametrized by the 
curve~\eqref{curve} were introduced in~\cite{mll,naukma,skr}. 
It was noticed in~\cite{skr} that the formulas
$\la=\la_i^2+r_i,\,\ y=\prod_{i=1}^n\la_i$ 
provide a map from the curve~\eqref{curve}
to the hyperelliptic curve $y^2=\prod_{i=1}^n(\la-r_i)$.
According to~\cite{mll}, for $n>3$ 
the curve~\eqref{curve} itself is not hyperelliptic.
\end{remark}

\subsection{Some problems on B\"acklund transformations}
\label{pr-backl}

In this subsection, all functions are assumed to be analytic. 
Recall that system~\eqref{main} is determined by 
constants $r_1,\dots,r_n$.
Denote system~\eqref{main} by $\mathbf{L}(r_1,\dots,r_n)$.

Similarly to Proposition~\ref{knprop}, it is natural to ask the following 
question. Let $r_1,\dots,r_n,r'_1,\dots,r'_n\in\fik$ be such that 
$r_i\neq r_j$ and $r'_i\neq r'_j$ for all $i\neq j$. 
Is the system~$\mathbf{L}(r_1,\dots,r_n)$ connected 
with the system~$\mathbf{L}(r'_1,\dots,r'_n)$ 
by any B\"acklund transformation (BT)? 

In other words, we are interested in classification of systems $\mathbf{L}(r_1,\dots,r_n)$ 
for $r_1,\dots,r_n\in\fik$ with respect to B\"acklund transformations.
In the present subsection, 
we would like to discuss some work in progress about questions 
of this type. 

It is well known that, in order to study BTs for a  
PDE~\eqref{sys_intr}, one needs to consider overdetermined systems 
\begin{gather}
\label{pseud}
w^j_x=f^j(w^l,x,t,u^i,u^i_x,u^i_{xx},\dots),\quad\qquad
w^j_t=g^j(w^l,x,t,u^i,u^i_x,u^i_{xx},\dots),\\
\notag 
w^j=w^j(x,t),\qquad\quad  
j,l=1,\dots,q,
\end{gather}
such that system~\eqref{pseud} is compatible modulo~\eqref{sys_intr}.

The WE algebra of~\eqref{sys_intr} helps 
to describe systems of the following type
\begin{equation}\label{wepsp}
w^j_x=f^j(w^l,u^i),\qquad 
w^j_t=g^j(w^l,u^i,u^i_x,u^i_{xx},\dots),\qquad
w^j=w^j(x,t),\qquad 
j,l=1,\dots,q,
\end{equation}
where equations~\eqref{wepsp} 
are assumed to be compatible modulo~\eqref{sys_intr}.
It is well known that systems~\eqref{wepsp} 
correspond to representations of the WE algebra by vector fields 
on the manifold $W$ with coordinates $w^1,\dots,w^q$.

A similar description of systems~\eqref{pseud} is given in~\cite{cfg-2009}. 
We do not have a possibility to report here all details of this theory, 
so we present only a sketchy overview of the main ideas. 

For a given PDE~\eqref{sys_intr}, 
the preprint~\cite{cfg-2009} defines the \emph{fundamental Lie algebra}, 
which generalizes the WE algebra and satisfies the following property. 
Any compatible system~\eqref{pseud} is gauge equivalent 
to a system arising from a vector field representation 
of the fundamental Lie algebra of~\eqref{sys_intr}. 
More precisely, the fundamental Lie algebra is defined 
for each point of the infinite prolongation of~\eqref{sys_intr} 
in the corresponding jet space (see~\cite{cfg-2009} for details).  

This Lie algebra is called fundamental, 
because it is analogous to the fundamental group in topology. 
According to~\cite{nonl}, 
there is a notion of coverings of PDEs such that 
compatible systems~\eqref{pseud} are coverings of~\eqref{sys_intr}.  
This notion is similar to the classical concept 
of coverings from topology. 
Recall that the fundamental group of a manifold~$M$ is responsible 
for topological coverings of~$M$. 
In a somewhat similar way, 
the fundamental Lie algebra of~\eqref{sys_intr} 
is responsible for coverings~\eqref{pseud} of~\eqref{sys_intr}. 
The fundamental Lie algebra of a PDE has also some 
coordinate-independent geometric meaning (see~\cite{cfg-2009}). 

Let $\mathfrak{L}_1$ and $\mathfrak{L}_2$ be Lie algebras. 
We say that $\mathfrak{L}_1$ is 
\emph{cofinitely-equivalent} to $\mathfrak{L}_2$ 
if for each $i=1,2$ there is a subalgebra 
$\mathfrak{H}_i\subset\mathfrak{L}_i$ of finite codimension such that 
$\mathfrak{H}_1$ is isomorphic to $\mathfrak{H}_2$.

For example, let $\mathfrak{L}_1$ be an infinite-dimensional Lie algebra 
and $\mathfrak{L}_2\subset\mathfrak{L}_1$ be a subalgebra of finite codimension. 
Then $\mathfrak{L}_1$ is cofinitely-equivalent to $\mathfrak{L}_2$, 
because one can take $\mathfrak{H}_1=\mathfrak{H}_2=\mathfrak{L}_2$.

The following result is proved in~\cite{cfg-2009}.
\begin{proposition}[\cite{cfg-2009}]
\label{faexbt} 
Let $\CE_1$ and $\CE_2$ be evolution PDEs. 
Suppose that $\CE_1$ and $\CE_2$ are connected by a BT. 
Then for each $i=1,2$ 
there is a point $a_i$ in the infinite prolongation of $\CE_i$ 
such that the fundamental Lie algebra of~$\CE_1$ at the point~$a_1$ 
is cofinitely-equivalent to the fundamental Lie algebra of~$\CE_2$ at~$a_2$. 
\end{proposition}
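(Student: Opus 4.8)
The plan is to reduce the statement to a single covering-theoretic lemma, exactly as one reduces the analogous topological fact about fundamental groups to the subgroup theorem for finite-sheeted coverings. Recall first that, by definition (see~\cite{nonl,cfg-2009}), a B\"acklund transformation connecting $\CE_1$ and $\CE_2$ consists of an evolution PDE $\CE_0$ together with two coverings $\tau_i\colon\CE_0\to\CE_i$, $i=1,2$, each of which is of finite type, i.e.\ has finite-dimensional fibres. So the first step is to pick a point $a_0$ in the infinite prolongation of $\CE_0$ generic enough that the images $a_i=\tau_i(a_0)$ lie in the loci on which the fundamental Lie algebras are defined, and to observe that it suffices to prove the following: for a finite-type covering $\tau\colon\CE_0\to\CE_1$ and $a_1=\tau(a_0)$, the fundamental Lie algebra $\mathfrak{f}(\CE_0,a_0)$ is cofinitely-equivalent to $\mathfrak{f}(\CE_1,a_1)$. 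Indeed, granting this for both $\tau_1$ and $\tau_2$, one gets that $\mathfrak{f}(\CE_1,a_1)$ and $\mathfrak{f}(\CE_2,a_2)$ are each cofinitely-equivalent to $\mathfrak{f}(\CE_0,a_0)$, and cofinite-equivalence is an equivalence relation on Lie algebras: reflexivity and symmetry are immediate, while transitivity follows by intersecting the relevant finite-codimension subalgebras and transporting them along the given isomorphisms, a short routine verification.

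The second, and main, step is the covering lemma itself. Here I would use the description of the fundamental Lie algebra from~\cite{cfg-2009}: near a point of the infinite prolongation of $\CE_1$, finite-dimensional coverings of $\CE_1$ are, up to gauge equivalence, in bijection with vector-field representations of $\mathfrak{f}(\CE_1,a_1)$, and $\mathfrak{f}(\CE_1,a_1)$ itself is recovered as an inverse limit over gauge-equivalence classes of such data (this generalizes the fact, recalled before equation~\eqref{wepsp}, that systems~\eqref{wepsp} correspond to vector-field representations of the WE algebra). The covering $\tau$ produces a representation $\rho$ of $\mathfrak{f}(\CE_1,a_1)$ by vector fields on the fibre $F$ of $\tau$ over $a_1$, with $a_0$ a distinguished point of $F$. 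I would then define the candidate subalgebra $\mathfrak{H}_1\subset\mathfrak{f}(\CE_1,a_1)$ as the subalgebra canonically attached to this representation localized at $a_0\in F$ (morally an isotropy-type subalgebra enriched by the gauge data) and check two things: (i) $\mathfrak{H}_1$ has codimension at most $\dim F<\infty$ in $\mathfrak{f}(\CE_1,a_1)$; and (ii) the universal object defining $\mathfrak{f}(\CE_0,a_0)$ is the pullback along $\tau$ of the universal object defining $\mathfrak{f}(\CE_1,a_1)$, so that $\mathfrak{f}(\CE_0,a_0)$ is isomorphic to $\mathfrak{H}_1$. Points (i) and (ii) together yield precisely the required cofinite-equivalence, with $\mathfrak{H}_1$ on the $\CE_1$-side and $\mathfrak{f}(\CE_0,a_0)$ itself on the $\CE_0$-side.

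The hard part will be step (ii) of the lemma: identifying the fundamental Lie algebra of the covering PDE $\CE_0$ with the correct finite-codimension subalgebra of $\mathfrak{f}(\CE_1,a_1)$. Unlike in topology, where the fundamental group of a covering space is literally a subgroup of the fundamental group of the base, the fundamental Lie algebra is a pointwise object built from gauge-equivalence classes of zero-curvature representations via an inverse limit (cf.\ the Remark following~\eqref{zcr_intr} on the role of gauge transformations), so one must show that pulling zero-curvature representations back along $\tau$, and then passing to gauge-equivalence classes, is compatible with these inverse limits, and that the finite-dimensionality of the fibres of $\tau$ translates \emph{exactly} into finite codimension of the image. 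Handling the gauge freedom coherently throughout this compatibility check is the principal technical obstacle; the remaining bookkeeping --- that a generic choice of $a_0$ makes $a_1$ and $a_2$ admissible, and that the two finite-codimension subalgebras produced can be taken compatibly --- is routine.
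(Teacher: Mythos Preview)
The paper does not contain a proof of this proposition: it is stated with attribution to~\cite{cfg-2009}, and the sentence immediately preceding it says ``The following result is proved in~\cite{cfg-2009}.'' There is therefore no in-paper argument to compare your sketch against.

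That said, your outline follows exactly the analogy the paper itself draws (in Subsection~\ref{pr-backl}) between the fundamental Lie algebra and the topological fundamental group: a B\"acklund transformation is unpacked as a pair of finite-type coverings $\tau_i\colon\CE_0\to\CE_i$, and you reduce to a single covering lemma (the fundamental Lie algebra of a finite-type covering is cofinitely-equivalent to that of the base), then conclude by transitivity of cofinite-equivalence. This is the natural strategy and is presumably what \cite{cfg-2009} does. Your transitivity check is correct, and your identification of the genuinely hard step --- showing that pullback of zero-curvature data along $\tau$ is compatible with the gauge-equivalence/inverse-limit description so that $\mathfrak{f}(\CE_0,a_0)$ is isomorphic to a finite-codimension subalgebra of $\mathfrak{f}(\CE_1,a_1)$ --- is on target. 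What cannot be assessed from the present paper is whether your proposed ``isotropy-type'' subalgebra $\mathfrak{H}_1$ is the right object, or whether extra hypotheses (e.g.\ analyticity, transitivity of the vector-field action, or regularity of the point $a_0$) are needed to make step~(ii) go through; for that one has to consult~\cite{cfg-2009}.
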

In fact the preprint~\cite{cfg-2009} proves a more general result 
about PDEs that are not necessarily evolution. 
A result similar to Proposition~\ref{faexbt} is used in~\cite{igon-mpi} 
in order to prove Proposition~\ref{knprop}.

For a given evolution PDE~\eqref{sys_intr}, 
there is a natural homomorphism from the fundamental Lie algebra 
to the WE algebra. This homomorphism reflects the fact that 
systems~\eqref{wepsp} are a particular case of systems~\eqref{pseud}.

Recall that~\eqref{main} is an evolution PDE, 
so we can consider the fundamental Lie algebras of~\eqref{main}. 
These algebras are studied in~\cite{hwemll}.
Fix a point $a$ in the infinite prolongation of~\eqref{main}. 
Denote by $\psi$ the homomorphism from the fundamental Lie algebra 
of~\eqref{main} at~$a$ to the WE algebra of~\eqref{main}. 

As has been said in Subsection~\ref{motiv}, 
the WE algebra is isomorphic to $\fik^2\oplus L(n)$. 
Using this description of the WE algebra, 
the preprint~\cite{hwemll} shows that the image of~$\psi$ 
is isomorphic to $L(n)$. 
The kernel of $\psi$ is studied in~\cite{hwemll} as well.  
Loosely speaking, the results of~\cite{hwemll} imply that 
the ``main part'' of the fundamental Lie algebra of~\eqref{main} 
is equal to the image of~$\psi$ and, therefore, is isomorphic to $L(n)$. 

Thus the structure of the WE algebra (described in the present paper) 
plays a very important 
role in the description of the fundamental Lie algebras for~\eqref{main} 
given in~\cite{hwemll}.  

Also, WE algebras help to obtain a partial answer to the above 
question about $\mathbf{L}(r_1,\dots,r_n)$ and 
$\mathbf{L}(r'_1,\dots,r'_n)$. 
Namely, using Proposition~\ref{faexbt} and 
the results of~\cite{cfg-2009,hwemll}, one can prove the following. 
\begin{statement}\label{statem}
If the WE algebra of $\mathbf{L}(r_1,\dots,r_n)$ 
is not cofinitely-equivalent to the WE algebra of $\mathbf{L}(r'_1,\dots,r'_n)$, 
then $\mathbf{L}(r_1,\dots,r_n)$ is not connected 
with~$\mathbf{L}(r'_1,\dots,r'_n)$ by any BT.
\end{statement}
We do not prove Statement~\ref{statem} in the present paper. 
A proof of this statement will appear elsewhere.

Since we have an explicit description of the WE algebra for~\eqref{main}, 
Statement~\ref{statem} provides an algebraic necessary condition 
for existence of a BT connecting 
$\mathbf{L}(r_1,\dots,r_n)$ and $\mathbf{L}(r'_1,\dots,r'_n)$. 

Recall that the WE algebra of $\mathbf{L}(r_1,\dots,r_n)$ is isomorphic 
to $\fik^2\oplus L(n)$, where $L(n)$ consists of   
certain matrix-valued functions on the curve~\eqref{curve}. 
Similarly to Proposition~\ref{knprop}, 
it is natural to expect that the condition of Statement~\ref{statem}
can be reformulated in terms of properties of algebraic curves 
or other algebraic varieties, but this is not clear yet. 

Note that the present paper is self-contained 
and can be studied independently of~\cite{cfa,igon-mpi,cfg-2009,hwemll}. 

\subsection{Abbreviations and notation}

The following abbreviations and notation are used in the paper. 
WE = Wahlquist-Estabrook, ZCR = zero-curvature representation, 
BT = B\"acklund transformation, MTT = Miura type transformation. 
The symbols $\zsp$ and $\zp$ denote the sets of positive and nonnegative 
integers respectively.

\section{The definition and some properties of Wahlquist-Estabrook algebras} 
\label{wea}

\subsection{The definition of Wahlquist-Estabrook (WE) algebras}
\label{defwea}

The main idea of our definition of WE algebras 
is very similar to that of~\cite{dodd}.
However, 
instead of the standard approach of differential forms and vector fields, 
we use formal power series with coefficients in Lie algebras. 

This will allow us to define 
the WE algebras for any evolution system of the form
\begin{equation}
\label{sys}
\frac{\pd u^i}{\pd t}
=F^i(u^1,\dots,u^m,\,u^1_1,\dots,u^m_1,\dots,u^1_d,\dots,u^m_d),\quad
u^i=u^i(x,t),\quad 
u^i_k=\frac{\pd^k u^i}{\pd x^k},\quad  
i=1,\dots,m. 
\end{equation}
Here the number $d\in\zsp$ is such that $F^i$ may depend only 
on $u^j_k$ for $k\le d$.

Following the jet bundle approach to PDEs~\cite{dodd}, 
we regard 
\begin{equation}
\label{coor}
u^i_k,\quad\qquad i=1,\dots,m,\quad\qquad k\in\zp,\quad\qquad u^i_0=u^i,
\end{equation} 
as coordinates of an infinite-dimensional manifold~$\CE$.

Let $\fik$ be either $\Com$ or $\mathbb{R}$. 
In this paper, 
all vector spaces and algebras are over the field~$\fik$.
The coordinates~$u^i_k$ take values in~$\fik$. 
If $\fik=\Com$ then any function of the variables~$u^i_k$ 
is assumed to be analytic. 
In the case~$\fik=\mathbb{R}$, any function is smooth.

For each $l\in\zp$, consider the manifold $\CE_l\cong\fik^{m(l+1)}$ 
with the coordinates~$u^i_k$ for $k\le l$. 
We have the natural projection $\pi_l\cl\CE\to\CE_l$ that ``forgets'' 
the coordinates $u^{i'}_{k'}$ for $k'>l$. 

The topology on $\CE$ is defined as follows. 
For any $l$ and any open subset $V\subset\CE_l$, 
the preimage~$\pi_l^{-1}(V)\subset\CE$ is, by definition, open in $\CE$. Such subsets form a base of the topology on~$\CE$. 
In other words, we consider the smallest topology on~$\CE$ such that 
all the maps $\pi_l$ are continuous. 

A function $f(u^i_k)$ is called \emph{admissible} 
if $f$ depends only on a finite number of the coordinates~\eqref{coor}.  
Let $\ess$ be an open subset of $\CE$ such that 
the functions~$F^i$ from~\eqref{sys} are defined on~$\ess$.
Denote by $\fanl$ the algebra of $\fik$-valued admissible functions on~$\ess$. 

The \emph{total derivative operators} corresponding to~\eqref{sys} are  
\begin{equation}
\label{tdo}
D_x=\frac{\pd}{\pd x}+
\sum_{i,k}
u^i_{k+1}\frac{\pd}{\pd u^i_k},\qquad\qquad 
D_t=\frac{\pd}{\pd t}+
\sum_{i,k}D_x^{k}(F^i)\frac{\pd}{\pd u^i_k}.
\end{equation}
We regard $D_x$, $D_t$ as derivations of the algebra $\fanl$. 
It is well known that $[D_x,D_t]=0$.

Let $\mcl$ be a Lie algebra. 
An \emph{admissible function with values in $\mcl$} 
is an element of the tensor product $\mcl\otimes_\fik\fanl$. 
From now on, all functions are supposed to be admissible.
One has the Lie bracket on $\mcl\otimes_\fik\fanl$ defined as follows 
${[h_1\otimes f_1,\,h_2\otimes f_2]}={[h_1,h_2]\otimes f_1f_2}$ 
for $h_1,h_2\in\mcl$ and $f_1,f_2\in\fanl$.

Recall that a 
\emph{zero-curvature representation} (ZCR) for system~\eqref{sys} is given by 
a pair of functions~$M$,~$N$ with values in a Lie algebra such that 
\begin{equation}
\label{zcrmn}
D_x(N)-D_t(M)+[M,N]=0.
\end{equation}

In the classical Wahlquist-Estabrook prolongation theory~\cite{dodd},
one imposes some conditions on the functions~$F^i$,~$M$,~$N$. 
These conditions imply that 
$M$ may depend only on~$u^1_0,\dots,u^m_0$ and 
$N$ may depend on $u^i_k$ for $k\le d-1$.

We do not impose any conditions on~$F^i$. 
We simply assume that $M=M(u^i_0)$ may depend only on~$u^1_0,\dots,u^m_0$, 
while $N=N(u^i_k)$ 
can be a function of any finite number of the variables~\eqref{coor}.

According to the next lemma, 
our assumption implies that actually $N(u^i_k)$ 
may depend only on $u^i_k$ for $k\le d-1$. 
This lemma is very similar to well-known computations in 
Wahlquist-Estabrook theory~\cite{dodd}.

\begin{lemma}
\label{nukd1}
If $M=M(u^i_0)$ and $N=N(u^i_k)$ satisfy~\eqref{zcrmn}, then  
$\dfrac{\pd N}{\pd u^j_l}=0$ for all $l\ge d$ 
and $j=1,\dots,m$. 
\end{lemma}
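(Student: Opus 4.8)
The plan is to substitute the assumed forms $M=M(u^i_0)$ and $N=N(u^i_k)$ into the zero-curvature equation~\eqref{zcrmn} and extract information degree by degree in the highest-order jet variables. First I would compute $D_x(N)$ and $D_t(M)$ explicitly using~\eqref{tdo}. Since $M$ depends only on $u^1_0,\dots,u^m_0$, we have $D_t(M)=\sum_i F^i\,\pd M/\pd u^i_0$, which depends only on $u^i_k$ for $k\le d$. On the other hand, if $N$ genuinely depends on some $u^j_l$ with $l\ge d$, pick the largest such $l$, call it $L$. Then $D_x(N)=\pd N/\pd x+\sum_{i,k}u^i_{k+1}\,\pd N/\pd u^i_k$ contains the term $\sum_j u^j_{L+1}\,\pd N/\pd u^j_L$, which is the only part of the whole left-hand side of~\eqref{zcrmn} that can depend on the variables $u^j_{L+1}$ (note $L+1\ge d+1$, so $D_t(M)$ and $[M,N]$ do not involve $u^j_{L+1}$, and the remaining terms of $D_x(N)$ involve $u^i_{k+1}$ only for $k\le L$).

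The key step is then to differentiate~\eqref{zcrmn} with respect to $u^j_{L+1}$ for each $j$. All terms except $\pd N/\pd u^j_L$ vanish under this differentiation (using that $\pd N/\pd u^i_k$ for $k\le L$ does not depend on $u^j_{L+1}$, since $L$ was chosen maximal), so we obtain $\pd N/\pd u^j_L=0$ for every $j=1,\dots,m$. This contradicts the choice of $L$ as an index on which $N$ depends, unless no such $L\ge d$ exists. Hence $\pd N/\pd u^j_l=0$ for all $l\ge d$ and all $j$, which is exactly the claim.

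I would present this cleanly by arguing directly rather than by contradiction: assume $N=N(u^i_k)$ with $k$ ranging up to some $L$, and show that if $L\ge d$ then differentiating~\eqref{zcrmn} with respect to $u^j_{L+1}$ forces $\pd N/\pd u^j_L=0$, so in fact $N$ does not depend on the variables of order $L$, and one can decrease $L$; iterating brings us down to $L\le d-1$. The main obstacle, though a minor one, is bookkeeping: one must be careful that the coefficients $\pd N/\pd u^i_k$ appearing in $D_x(N)$ are themselves functions of $u^i_k$ for $k\le L$ only, so that differentiating with respect to $u^j_{L+1}$ annihilates them, and that $D_x^k(F^i)$ for $k\le$ the relevant bound, together with $[M,N]$, involve no jet variable of order exceeding $\max(d,L)$ — this is what guarantees that $\pd N/\pd u^j_L\cdot u^j_{L+1}$ is the unique term surviving the differentiation. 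Everything else is a routine application of the chain rule to the explicit formulas~\eqref{tdo}.
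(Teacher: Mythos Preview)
Your proposal is correct and follows essentially the same argument as the paper: pick the maximal order $s$ on which $N$ depends, differentiate the zero-curvature equation with respect to $u^j_{s+1}$, and observe that only $\partial N/\partial u^j_s$ survives, yielding a contradiction. The paper presents this as a brief contradiction argument rather than an iteration, but the content is identical.
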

\begin{proof}
Let $s$ be the maximal integer such that 
$\dfrac{\pd N}{\pd u^j_{s}}\neq 0$ for some $j$. 
Suppose ${s\ge d}$. 

Since $F^i$ from~\eqref{sys} do not depend on $u^{i'}_{k'}$ for $k'>d$, 
using formulas~\eqref{tdo}, we obtain 
$$
\frac{\pd}{\pd u^j_{s+1}}\Big(D_x(N)\Big)=\frac{\pd N}{\pd u^j_{s}},
\qquad\qquad 
\frac{\pd}{\pd u^j_{s+1}}\Big(D_t(M)\Big)=
\frac{\pd}{\pd u^j_{s+1}}\Big([M,N]\Big)=0.
$$
Hence, differentiating~\eqref{zcrmn} 
with respect to~$u^j_{s+1}$, 
one gets $\dfrac{\pd N}{\pd u^j_{s}}=0$, which contradicts to our assumption.
\end{proof}

A point of the manifold $\CE$ is determined by the values of 
the coordinates~\eqref{coor} at this point. 
Let $a^i_k\in\fik$ be such that the point 
\begin{equation}
\label{upointmn}
a=\big(u^i_k=a^i_k,\,\ i=1,\dots,m,\,\ k\in\zp\big)\,\in\,\CE
\end{equation}
belongs to $\ess\subset\CE$. 
  
\begin{remark}  
\label{inform}
The main idea of the definition of WE algebras 
can be informally outlined as follows. 
Consider a ZCR of the form $M=M(u^i_0)$, $N=N(u^i_k)$. 
Let $\tilde M$ and $\tilde N$ be the Taylor series of $M$ and $N$ 
at the point~\eqref{upointmn}.
Then $\tilde M$ is a power series in the variables~$u^i_0-a^i_0$, 
and $\tilde N$ is a power series in the variables~$u^i_k-a^i_k$ 
for~$k\le d-1$. 

We regard the coefficients of the power series~$\tilde M,\,\tilde N$ 
as generators of a Lie algebra, and equation~\eqref{zcrmn}
provides relations for these generators. 
As a result, one obtains a Lie algebra given by generators and relations, 
which is called the~\emph{WE algebra} of system~\eqref{sys} at 
the point~\eqref{upointmn}. 
The details of this construction are presented below.

As we will show in Section~\ref{we-ai}, 
in many cases the WE algebra does not depend 
on the choice of numbers~$a^i_k$.
\end{remark}

For each $q\in\zp$, let $\mat_q$ be the set of matrices of 
size~$m\times (q+1)$ with nonnegative integer entries. 
For a matrix $\gamma\in\mat_q$, 
its entries are denoted by $\gamma_{ik}\in\zp$, 
where $i=1,\dots,m$ and $k=0,\dots,q$. 
Let $U^\gamma$ be the following product 
\begin{equation}
\label{ugamma}
U^\gamma=\prod_{\substack{i=1,\dots,m,\\k=0,\dots,q}}
\big(u^i_k-a^i_k\big)^{\gamma_{ik}}.
\end{equation}

We are going to study some formal power series in  
the variables~${u^i_k-a^i_k}$ for $k\le q$. 
Any such series can be written as 
$$
\sum_{\gamma\in \mat_{q}}c_\gamma\cdot U^\gamma,
$$
where $c_\gamma$ are the coefficients of it.
In what follows, 
we will sometimes omit the multiplication sign $\cdot$ in such formulas. 

Let $\frl$ be the free Lie algebra generated 
by the symbols $\fla_\al,\,\flb_\be$ for 
${\al\in\mat_0}$, ${\be\in\mat_{d-1}}$.
Then 
$$
\fla_\al\in\frl,\quad\qquad\flb_\be\in\frl,\qquad\quad
[\fla_\al,\flb_\be]\in\frl\qquad\quad 
\forall\,\al\in\mat_0,\qquad\quad\forall\,\be\in\mat_{d-1}.
$$
Consider the following power series with coefficients in~$\frl$
\begin{equation}
\label{flaflb}
\fla=\sum_{\al\in \mat_0}\fla_\al\cdot U^\al,\qquad\qquad 
\flb=\sum_{\beta\in \mat_{d-1}}\flb_\beta\cdot U^\beta.
\end{equation}

For any $\al\in \mat_0$ and $\beta\in \mat_{d-1}$, 
the expressions $D_x(U^\beta),\,D_t(U^\al)$ 
are functions of a finite number of the variables~$u^i_k$. 
Taking the corresponding Taylor series at the point~\eqref{upointmn}, 
we regard these expressions as power series. 
Let 
\begin{gather}
\label{dxbdta}
D_x(\flb)=\sum_{\beta\in \mat_{d-1}}\flb_\beta\cdot D_x(U^\beta),
\qquad\qquad
D_t(\fla)=\sum_{\al\in \mat_0}\fla_\al\cdot D_t(U^\al),\\
\label{lieab}
[\fla,\flb]=\sum_{\al\in\mat_0,\,\,\beta\in \mat_{d-1}} 
[\fla_\al,\flb_{\beta}]\cdot U^\al\cdot U^{\beta}.
\end{gather}
It is easily seen that 
$D_x(\flb)$, $D_t(\fla)$, $[\fla,\flb]$ 
can be regarded as power series with coefficients in~$\frl$. 
We have 
\begin{equation}
\label{zgamma}
D_x(\flb)-D_t(\fla)+[\fla,\flb]=
\sum_{\gamma\in \mat_{d}}z_\gamma\cdot U^\gamma
\end{equation}
for some $z_\gamma\in\frl$. 
Let $\frid\subset\frl$ be the ideal generated by the 
elements~$z_\gamma$ for all $\gamma\in\mat_{d}$.

The \emph{WE algebra} of system~\eqref{sys} at the point~\eqref{upointmn} 
is defined to be the quotient Lie algebra~$\frl/\frid$.
For $a\in\ess$, the WE algebra at $a$ is denoted by $\wea(a)$.

Let $\mcl$ be a Lie algebra. 
A \emph{formal ZCR at the point~\eqref{upointmn} 
with coefficients in~$\mcl$} is given by power series 
\begin{equation}
\label{hatmn}
\cle=\sum_{\al\in \mat_0}\cle_\al\cdot U^\al,\qquad\qquad 
\dle=\sum_{\beta\in \mat_{d-1}}\dle_\beta\cdot U^\beta,\qquad\qquad
\cle_\al,\dle_\beta\in\mcl,
\end{equation}
such that ${D_x(\dle)-D_t(\cle)+[\cle,\dle]}=0$, 
where $D_x(\dle)$, $D_t(\cle)$, $[\cle,\dle]$ are defined 
similarly to~\eqref{dxbdta}, \eqref{lieab}. 

Consider the natural map 
$\rho\cl\frl\to\frl/\frid=\wea(a)$ and set 
${\wfla_\al=\rho(\fla_\al)}$, ${\wflb_\beta=\rho(\flb_\beta)}$.
The definition of~$\frid$ implies that the power series 
\begin{equation}
\label{zcrwea}
\wfla=\sum_{\al\in \mat_0}\wfla_\al\cdot U^\al,\qquad\qquad 
\wflb=\sum_{\beta\in \mat_{d-1}}\wflb_\beta\cdot U^\beta
\end{equation}
satisfy 
$D_x\big(\wflb\big)-D_t\big(\wfla\big)+\big[\wfla,\wflb\big]=0$. Thus 
$\wfla$, $\wflb$ constitute a formal ZCR with coefficients 
in~$\wea(a)$.
\begin{proposition}
\label{tlhom}
Any formal ZCR~\eqref{hatmn} with coefficients in a Lie algebra~$\mcl$ 
determines a homomorphism $\wea(a)\to\mcl$ given by 
${\wfla_\al\mapsto\cle_\al}$ and ${\wflb_\beta\mapsto\dle_\beta}$. 
\end{proposition}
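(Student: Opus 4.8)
The plan is to use the universal property of the free Lie algebra $\frl$ together with the fact that all the power-series operations involved in the definition of $\wea(a)$ are computed coefficient-wise.

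First I would invoke the universal property: the assignment $\fla_\al\mapsto\cle_\al$, $\flb_\be\mapsto\dle_\be$ (for $\al\in\mat_0$, $\be\in\mat_{d-1}$) extends uniquely to a Lie algebra homomorphism $\phi\cl\frl\to\mcl$. To produce the desired homomorphism $\wea(a)=\frl/\frid\to\mcl$ it then suffices to check that $\phi(\frid)=0$, and since $\frid$ is the ideal generated by the elements $z_\gamma$ of~\eqref{zgamma}, it is enough to prove $\phi(z_\gamma)=0$ for all $\gamma\in\mat_d$.

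Next I would extend $\phi$ coefficient-wise to power series: for $\sum_\gamma h_\gamma\cdot U^\gamma$ with $h_\gamma\in\frl$, put $\phi\big(\sum_\gamma h_\gamma\cdot U^\gamma\big)=\sum_\gamma\phi(h_\gamma)\cdot U^\gamma$. By construction $\phi(\fla)=\cle$ and $\phi(\flb)=\dle$. The key point is that $\phi$ commutes with the three operations in~\eqref{zgamma}. Indeed, comparing~\eqref{dxbdta} and~\eqref{lieab} with the analogous formulas for $\cle$, $\dle$: the operators $D_x$ and $D_t$ act on coefficients only through the $\fik$-linear substitutions determined by the scalar power series $D_x(U^\beta)$, $D_t(U^\al)$, which do not involve the Lie algebra at all; and the bracket of two series is obtained by applying the Lie bracket to pairs of coefficients while multiplying the corresponding monomials. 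Since $\phi$ is $\fik$-linear and a Lie algebra homomorphism, this yields $\phi\big(D_x(\flb)\big)=D_x(\dle)$, $\phi\big(D_t(\fla)\big)=D_t(\cle)$, and $\phi\big([\fla,\flb]\big)=[\cle,\dle]$.

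Applying $\phi$ to~\eqref{zgamma} then gives
$$
D_x(\dle)-D_t(\cle)+[\cle,\dle]=\sum_{\gamma\in\mat_d}\phi(z_\gamma)\cdot U^\gamma .
$$
The left-hand side is zero because $\cle$, $\dle$ form a formal ZCR, and the monomials $U^\gamma$ are linearly independent in the ring of formal power series; hence $\phi(z_\gamma)=0$ for every $\gamma$, so $\phi(\frid)=0$ and $\phi$ descends to a homomorphism $\wea(a)\to\mcl$. Recalling $\wfla_\al=\rho(\fla_\al)$ and $\wflb_\beta=\rho(\flb_\beta)$, this homomorphism sends $\wfla_\al\mapsto\cle_\al$ and $\wflb_\beta\mapsto\dle_\beta$, as claimed. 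The only real subtlety — the main obstacle — is the bookkeeping that justifies $\phi$ commuting with $D_x$, $D_t$ and the bracket of series; once the coefficient-wise extension is set up and one has noted that $D_x(U^\beta)$, $D_t(U^\al)$ carry no Lie-algebra data, the rest is a routine formal verification.
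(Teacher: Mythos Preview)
Your proof is correct and follows essentially the same route as the paper's: construct the homomorphism $\phi\cl\frl\to\mcl$ from the universal property, extend it coefficient-wise to power series, observe that this extension sends $D_x(\flb)-D_t(\fla)+[\fla,\flb]$ to $D_x(\dle)-D_t(\cle)+[\cle,\dle]=0$, and conclude $\phi(z_\gamma)=0$ so that $\phi$ descends to $\wea(a)$. The paper's argument is slightly more terse about why the coefficient-wise extension respects $D_x$, $D_t$, and the bracket, but the structure is identical.
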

\begin{proof}
Since $\frl$ is a free Lie algebra generated by 
$\fla_\al$, $\flb_\beta$, 
one can consider the homomorphism $\mu\cl\frl\to\mcl$ 
given by $\mu(\fla_\al)=\cle_\al$, $\mu(\flb_\beta)=\dle_\beta$. 
For any power series of the form 
$$
C=\sum_{\gamma\in \mat_{q}}c_\gamma\cdot U^\gamma,
\quad\qquad c_\gamma\in\frl,\quad\qquad q\in\zp,  
$$
set
$$
\tilde\mu(C)=\sum_{\gamma\in \mat_{q}}\mu(c_\gamma)\cdot U^\gamma.
$$ 
Taking into account~\eqref{flaflb},~\eqref{hatmn}, we get 
\begin{gather}
\notag
\tilde\mu(\fla)=\sum_{\al\in \mat_0}\cle_\al\cdot U^\al=\cle,\qquad\qquad 
\tilde\mu(\flb)=\sum_{\beta\in \mat_{d-1}}\dle_\beta\cdot U^\beta=\dle,\\
\label{zcrzcr0}
\tilde\mu\Big(D_x(\flb)-D_t(\fla)+[\fla,\flb]\Big)=
D_x(\dle)-D_t(\cle)+[\cle,\dle]=0.
\end{gather}
From~\eqref{zgamma},~\eqref{zcrzcr0} it follows that $\mu(z_\gamma)=0$ 
for all $\gamma\in\mat_{d}$ and, therefore, $\mu(\frid)=0$. 
Hence $\mu\cl\frl\to\mcl$ induces the homomorphism 
$\wea(a)=\frl/\frid\to\mcl$ such that $\wfla_\al\mapsto\cle_\al$, 
$\wflb_\beta\mapsto\dle_\beta$.
\end{proof}

\begin{remark}\label{remzcrhw}
Suppose that functions $M(u^i_0)$, $N(u^i_k)$ with values in a Lie algebra~$\mathfrak{L}$ form a ZCR. 
Then the Taylor series of $M(u^i_0)$, $N(u^i_k)$ at the point~\eqref{upointmn} 
constitute a formal ZCR with coefficients in~$\mathfrak{L}$. 
Therefore, by Proposition~\ref{tlhom}, we obtain a homomorphism $\wea(a)\to\mathfrak{L}$.
\end{remark}

\subsection{WE algebras at different points}
\label{we-ai}

\begin{remark}
According to~\cite{dodd,kdv1,schief,ll,Prol} and references therein,
for many PDEs (including the KdV, nonlinear Schr\"odinger, Landau-Lifshitz, Harry-Dym equations), 
the WE algebra does not depend on the choice 
of numbers~$a^i_k$ in~\eqref{upointmn},~\eqref{ugamma}. 

In the present subsection we explain this property. 
The main idea can be outlined as follows. 
For each of these PDEs, 
there is a finite collection of analytic functions 
$$
f_1(u^i_0),\ \quad f_2(u^i_0),\ \quad\dots,
\ \quad f_{\nmf}(u^i_0),\qquad\qquad 
g_1(u^i_k),\ \quad g_2(u^i_k),\ \quad\dots,\ \quad g_{\nmg}(u^i_k)
$$
such that any ZCR $M(u^i_0)$, $N(u^i_k)$ is of the form 
\begin{equation}
\notag
M(u^i_0)=\sum_{j=1}^{\nmf}M_j\cdot f_j(u^i_0),\qquad\qquad N(u^i_k)=\sum_{l=1}^{\nmg}N_l\cdot g_l(u^i_k),
\end{equation} 
where $M_j$, $N_l$ are elements of a Lie algebra and 
satisfy some Lie algebraic relations. 
Using the Taylor series of $M(u^i_0)$ and $N(u^i_k)$ 
at the point~\eqref{upointmn}, we will show that  
$M_j$,~$N_l$ generate the WE algebra.

Since the Lie algebraic relations for $M_j$, $N_l$ do not 
depend on the choice of numbers~$a^i_k$ in~\eqref{upointmn},~\eqref{ugamma}, 
one obtains that the WE algebra does not depend on~$a^i_k$ for such PDEs.
The details of these arguments are presented below.
\end{remark}

We need first some auxiliary constructions. 
Recall that $\fanl$ is the algebra of $\fik$-valued admissible functions on~$\ess$. 
Fix positive integers $\nmf$, $\nmg$ and functions 
$f_j,g_l\in\fanl$ for $j=1,\dots,\nmf$, $l=1,\dots,\nmg$.  

Let $\ufl$ be the free Lie algebra generated by the symbols
$\flm_1,\dots,\flm_{\nmf},\,\fln_1,\dots,\fln_{\nmg}$.
Consider the following element of $\ufl\otimes_\fik\fanl$
$$
\flz=
\sum_{l}\fln_l\otimes D_x(g_l)-\sum_{j}\flm_j\otimes D_t(f_j)
+\sum_{j,l}[\flm_j,\fln_l]\otimes f_jg_l.
$$
An ideal $I\subset\ufl$ is said to be \emph{$\flz$-tame} if 
the natural map ${\ufl\otimes\fanl\to(\ufl/I)\otimes\fanl}$ 
sends~$\flz$ to zero.
Let $\idz\subset\ufl$ be the intersection of all $\flz$-tame ideals of $\ufl$.

\begin{remark}
\label{gridz}
A set of generators for the ideal~$\idz$ 
is constructed as follows. 
Let $v_1,\dots,v_q$ be a basis for the linear subspace of~$\fanl$ 
spanned by the functions $D_x(g_l)$, $D_t(f_j)$, $f_jg_l$. 
Then there are $e_1,\dots,e_q\in\ufl$ such that 
$$
\flz=\sum_{l}\fln_l\otimes D_x(g_l)-\sum_{j}\flm_j\otimes D_t(f_j)
+\sum_{j,l}[\flm_j,\fln_l]\otimes f_jg_l=\sum_{s=1}^qe_s\otimes v_s.
$$
Then $e_1,\dots,e_q$ generate the ideal $\idz$. 
Indeed, since $v_1,\dots,v_q$ are linearly independent, 
the elements $e_1,\dots,e_q$ belong to any $\flz$-tame ideal 
and, therefore, belong to $\idz$. On the other hand, 
the ideal generated by $e_1,\dots,e_q$ is $\flz$-tame and, consequently, 
contains $\idz$. 
In particular, one obtains that the ideal $\idz$ is $\flz$-tame.
\end{remark}

Consider the natural homomorphism 
$\sigma\cl\ufl\to\ufl/\idz$ and set 
${\wflm_j=\sigma(\flm_j)}$, ${\wfln_l=\sigma(\fln_l)}$.
Since $\idz$ is $\flz$-tame, one has
\begin{equation}
\label{zcrwmn}
\sum_{l}\wfln_l\otimes D_x(g_l)-\sum_{j}\wflm_j\otimes D_t(f_j)
+\sum_{j,l}[\wflm_j,\wfln_l]\otimes f_jg_l=0.
\end{equation}

\begin{theorem}
\label{wepoints}
Consider an evolution system~\eqref{sys} and the corresponding 
manifold $\CE$ with coordinates~\eqref{coor}.  
Suppose that there are a connected open subset $\ess\subset\CE$ 
and analytic functions 
$$
f_1(u^i_0),\ \quad f_2(u^i_0),\ \quad\dots,
\ \quad f_{\nmf}(u^i_0),\qquad\qquad 
g_1(u^i_k),\ \quad g_2(u^i_k),\ \quad\dots,\ \quad g_{\nmg}(u^i_k)
$$
on $\ess$ such that the following properties hold. 
\begin{itemize}
\item
The functions~$F^i$ from~\eqref{sys} are analytic on $\ess$. 

\item  
For any point~\eqref{upointmn} of $\ess$, any Lie algebra~$\mcl$, 
and any formal ZCR~\eqref{hatmn}, one has 
\begin{equation}
\label{cdpqfg}
\cle=\sum_{j=1}^{\nmf} \ple_j\cdot f_j,\qquad\qquad 
\dle=\sum_{l=1}^{\nmg} \qle_l\cdot g_l
\end{equation}
for some elements $\ple_j,\qle_l\in\mcl$.
In formulas~\eqref{cdpqfg} we regard $f_j$,~$g_l$ as power series, 
using the Taylor series of~$f_j$,~$g_l$ at the point~\eqref{upointmn}.

\end{itemize}
Consider the algebra~$\ufl$ and the ideal $\idz\subset\ufl$ 
corresponding to $f_1,\dots,f_{\nmf},\,g_1,\dots,g_{\nmg}$,  
as constructed above. 

Then for any $a\in\ess$ the WE algebra $\wea(a)$ of system~\eqref{sys} 
is isomorphic to $\ufl/\idz$. 
Hence for any $a,a'\in\ess$ one has $\wea(a)\cong\wea(a')$.
\end{theorem}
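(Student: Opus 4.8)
The plan is to show that $\wea(a)$ and $\ufl/\idz$ are isomorphic by constructing mutually inverse homomorphisms, using Proposition~\ref{tlhom} in one direction and the universal property of $\ufl/\idz$ (together with the hypothesis~\eqref{cdpqfg}) in the other. First I would establish the map $\ufl/\idz\to\wea(a)$. By~\eqref{zcrwea} the power series $\wfla,\wflb$ form a formal ZCR with coefficients in $\wea(a)$; applying the hypothesis~\eqref{cdpqfg} to this formal ZCR (with $\mcl=\wea(a)$) gives elements $\ple_j,\qle_l\in\wea(a)$ with $\wfla=\sum_j\ple_j\cdot f_j$ and $\wflb=\sum_l\qle_l\cdot g_l$. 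Since $\ufl$ is free on the $\flm_j,\fln_l$, there is a homomorphism $\ufl\to\wea(a)$ sending $\flm_j\mapsto\ple_j$, $\fln_l\mapsto\qle_l$. Substituting these into the formal ZCR equation for $\wfla,\wflb$ and comparing with the definition of $\flz$ shows that $\flz$ maps to zero in $\wea(a)\otimes\fanl$; hence the kernel ideal is $\flz$-tame and therefore contains $\idz$, so the map descends to a homomorphism $\Phi\cl\ufl/\idz\to\wea(a)$.

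For the reverse direction I would use~\eqref{zcrwmn}: the power series $\sum_j\wflm_j\cdot f_j$ and $\sum_l\wfln_l\cdot g_l$ (interpreting $f_j,g_l$ via their Taylor expansions at~\eqref{upointmn}, as in~\eqref{cdpqfg}) form a formal ZCR with coefficients in $\ufl/\idz$, because expanding $D_x$, $D_t$, and the bracket and collecting coefficients reduces the ZCR identity precisely to~\eqref{zcrwmn}. By Proposition~\ref{tlhom}, this formal ZCR yields a homomorphism $\Psi\cl\wea(a)\to\ufl/\idz$ with $\wfla_\al\mapsto$ (the $U^\al$-coefficient of $\sum_j\wflm_j f_j$) and similarly for $\wflb_\beta$. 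Then I would check $\Phi\circ\Psi=\mathrm{id}$ and $\Psi\circ\Phi=\mathrm{id}$. For $\Psi\circ\Phi$: it suffices to verify it fixes the generators $\wflm_j,\wfln_l$, which follows since $\Phi$ sends $\wflm_j$ to the $j$-th coefficient-element of the decomposition of $\wfla$ and $\Psi$ sends that back; the key point is that the decomposition~\eqref{cdpqfg} of a formal ZCR, when composed with a homomorphism, is compatible with the decomposition of the image ZCR, so tracking the coefficients through both maps returns the original generators. For $\Phi\circ\Psi$: it fixes the generators $\wfla_\al,\wflb_\beta$ of $\wea(a)$ by the same coefficient-bookkeeping.

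The main obstacle is the bookkeeping in the composite $\Psi\circ\Phi$. The subtlety is that the hypothesis~\eqref{cdpqfg} only asserts the \emph{existence} of a decomposition $\cle=\sum_j\ple_j\cdot f_j$, not its uniqueness; if the Taylor series of the $f_j$ (resp.\ $g_l$) at~\eqref{upointmn} are linearly dependent over $\fik$, the elements $\ple_j$ are not uniquely determined by $\cle$, and one must argue that the two natural decompositions involved — the one used to define $\Phi$ from $\wfla=\sum\ple_j f_j$, and the tautological one $\sum_j\wflm_j f_j$ defining $\Psi$ — are compatible modulo the relations built into $\idz$ and $\frid$. The clean way around this is to note that the construction of $\idz$ via Remark~\ref{gridz} uses a \emph{basis} $v_1,\dots,v_q$ of the span of the $D_x(g_l),D_t(f_j),f_jg_l$, so that the defining relations of $\ufl/\idz$ are exactly the coefficient relations with respect to a linearly independent set; one then checks that the same linear-independence passage applied inside $\frl$ shows the generators $\wfla_\al,\wflb_\beta$ of $\wea(a)$ are expressed in terms of the $\ple_j,\qle_l$ precisely by the Taylor coefficients of the $f_j,g_l$, and that these Taylor-coefficient substitutions are mutually inverse as linear maps between the relevant finitely generated subspaces. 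Once this linear-algebra compatibility is in place, $\Phi$ and $\Psi$ are seen to be mutual inverses, and connectedness of $\ess$ guarantees the $f_j,g_l$ (hence the whole construction) are consistently defined, giving $\wea(a)\cong\ufl/\idz$ for every $a\in\ess$ and therefore $\wea(a)\cong\wea(a')$.
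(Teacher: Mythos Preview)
Your approach is exactly the paper's: build $\Phi\cl\ufl/\idz\to\wea(a)$ from the decomposition~\eqref{cdpqfg} of the universal ZCR~\eqref{zcrwea}, build $\Psi\cl\wea(a)\to\ufl/\idz$ from the formal ZCR $\big(\sum_j\wflm_jf_j,\sum_l\wfln_lg_l\big)$ via Proposition~\ref{tlhom}, and check they are mutual inverses. The paper disposes of the last step with ``it is easy to check''; you have rightly flagged that this needs justification.

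However, your proposed resolution via Remark~\ref{gridz} does not close the gap. That remark produces a basis for the span of $D_x(g_l),\,D_t(f_j),\,f_jg_l$, which pins down the \emph{relations} defining $\idz$; it says nothing about linear independence of the $f_j$ or the $g_l$ themselves, and that is what is actually needed. Concretely: $\Phi\circ\Psi=\mathrm{id}_{\wea(a)}$ is unconditional, since applying $\Phi$ coefficientwise to $\sum_j\wflm_jf_j$ yields $\sum_j\ple_jf_j=\wfla$, and comparing $U^\al$-coefficients gives $\Phi\circ\Psi(\wfla_\al)=\wfla_\al$. For $\Psi\circ\Phi$, applying $\Psi$ coefficientwise to $\wfla=\sum_j\ple_jf_j$ gives $\sum_j\Psi(\ple_j)f_j=\sum_j\wflm_jf_j$, hence $\sum_j\big(\Psi(\ple_j)-\wflm_j\big)f_j=0$ as power series. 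This forces $\Psi(\ple_j)=\wflm_j$ \emph{only if} the Taylor series of the $f_j$ at $a$ are linearly independent over~$\fik$ (and similarly for the $g_l$). Without that, the conclusion is false: if $f_1=f_2$ then $\wflm_1-\wflm_2$ is a nonzero element of $\ufl/\idz$ outside the image of $\Psi$, so $\ufl/\idz\not\cong\wea(a)$. So the missing hypothesis is linear independence of the $f_j$ and of the $g_l$; this is tacit in the paper and holds in every application there (e.g.\ the functions $1,s^1,\dots,s^n$ in Section~\ref{comput}).
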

\begin{proof}
Recall that \eqref{zcrwea} is a formal ZCR with coefficients in~$\wea(a)$. 
Applying the assumption of the theorem to this formal ZCR, we get 
\begin{equation}
\label{weamn}
\wfla=\sum_{j=1}^{\nmf}\ple_j\cdot f_j,\qquad\qquad 
\wflb=\sum_{l=1}^{\nmg}\qle_l\cdot g_l
\end{equation}
for some elements $\ple_j,\,\qle_l\in\wea(a)$. 

Since $\ess$ is connected, any analytic function on $\ess$ 
is uniquely determined by its Taylor series at the point~\eqref{upointmn}. 
Therefore, the identity  ${D_x\big(\wflb\big)-D_t\big(\wfla\big)+\big[\wfla,\wflb\big]}=0$ 
is equivalent to the following equation in the space $\wea(a)\otimes_\fik\fanl$ 
\begin{equation}
\label{pqfgwea}
\sum_{l}\qle_l\otimes D_x(g_l)-\sum_{j}\ple_j\otimes D_t(f_j)
+\sum_{j,l}[\ple_j,\qle_l]\otimes f_jg_l=0.
\end{equation}
Since $\ufl$ is a free Lie algebra generated by 
$\flm_j$, $\fln_l$, one can consider the homomorphism 
$\tau\cl\ufl\to\wea(a)$ given by $\tau(\flm_j)=\ple_j$, $\tau(\fln_l)=\qle_l$. 

Equation~\eqref{pqfgwea} says that 
the kernel of $\tau$ is $\flz$-tame and, therefore, ${\tau(\idz)=0}$.
Thus $\tau\cl\ufl\to\wea(a)$ induces the homomorphism 
$\varphi\cl\ufl/\idz\to\wea(a)$ such that $\varphi\big(\wflm_j\big)=\ple_j$ and 
$\varphi\big(\wfln_l\big)=\qle_l$.

Using the Taylor series of $f_j$ and $g_l$, we regard the expressions   
$$
\wflm=\sum_{j}\wflm_j\cdot f_j,\qquad\qquad\wfln=\sum_{l}\wfln_l\cdot g_l,
$$
as power series with coefficients in~$\ufl/\idz$. 

Equation~\eqref{zcrwmn} implies  ${D_x\big(\wfln\big)-D_t\big(\wflm\big)+\big[\wflm,\wfln\big]=0}$. 
Hence $\wflm$, $\wfln$ constitute a formal ZCR. 

Let $\psi\cl\wea(a)\to\ufl/\idz$ be the homomorphism 
corresponding to this formal ZCR by Proposition~\ref{tlhom}.

It is easy to check that the constructed homomorphisms 
$\varphi\cl\ufl/\idz\to\wea(a)$ and $\psi\cl\wea(a)\to\ufl/\idz$ 
are inverse to each other. 
Thus for any $a\in\ess$ the algebra~$\wea(a)$ is 
isomorphic to $\ufl/\idz$. 
Hence
for any $a,a'\in\ess$ one has $\wea(a)\cong\ufl/\idz\cong\wea(a')$. 
\end{proof}
\begin{remark}
\label{remgenwe}
According to Section~\ref{defwea}, 
in general the WE algebra is given by an infinite number 
of generators and relations. 
However, if the assumptions of Theorem~\ref{wepoints} are satisfied, 
then the WE algebra is isomorphic to~$\ufl/\idz$, which is given 
by a finite number of generators and relations. 

Indeed, the elements 
$\wflm_1,\dots,\wflm_{\nmf},\wfln_1,\dots,\wfln_{\nmg}$ generate $\ufl/\idz$. 
Relations for these generators are given 
by $e_1,\dots,e_q\in\idz$ constructed in Remark~\ref{gridz}. 
\end{remark}

\begin{example}
To clarify the constructions of this section, 
consider a simple example in the case $m=1$. 
Set $u=u^1$. 
Let us describe generators and relations for 
the WE algebra of the equation $u_t=u_{xx}$. 

Similarly to~\eqref{coor}, 
we regard $u_k={\pd^k u}/{\pd x^k}$ 
as coordinates of the corresponding manifold~$\CE$.  
Formulas~\eqref{tdo} become 
$D_x={\pd_x+\sum_{k\ge 0}u_{k+1}\pd_{u_k}}$ and 
$D_t={\pd_t+\sum_{k\ge 0}u_{k+2}\pd_{u_k}}$, where $u_0=u$.

For a Lie algebra~$\mcl$, 
a formal ZCR~\eqref{hatmn} at a point $a\in\CE$ is given by formal power series 
\begin{gather}
\notag
\cle=\sum_{i_0=0}^\infty\cle_{i_0} (u_0-a_0)^{i_0},\qquad\quad 
\dle=\sum_{i_0,i_1=0}^\infty
\dle_{i_0,i_1}(u_0-a_0)^{i_0}(u_1-a_1)^{i_1},\qquad\quad 
\cle_{i_0},\dle_{i_0,i_1}\in\mcl,\\
\label{zcutu2}
D_x(\dle)-D_t(\cle)+[\cle,\dle]=0,
\end{gather}
where the numbers $a_k\in\fik$ determine the point~$a\in\CE$, 
similarly to~\eqref{upointmn}.

It is easy to check that equation~\eqref{zcutu2} is satisfied 
if and only if $\cle$, $\dle$ are of the form   
\begin{equation}
\label{cdpqfgu2}
\cle=\ple_1\cdot u_0+\ple_2\cdot 1,\qquad\qquad
\dle=\qle_1\cdot u_1+\qle_2\cdot u_0+\qle_3\cdot 1
\end{equation}
for some $\ple_j,\qle_l\in\mcl$ satisfying 
$\qle_1-\ple_1=0$, $[\ple_1,\qle_1]=0$, $\qle_2+[\ple_2,\qle_1]=0$, 
$[\ple_1,\qle_2]=0$, $[\ple_1,\qle_3]+[\ple_2,\qle_2]=0$, $[\ple_2,\qle_3]=0$.
According to formulas~\eqref{cdpqfgu2}, 
one can apply Theorem~\ref{wepoints} for 
$\ess=\CE$, $\nmf=2$, $\nmg=3$, $f_1=u_0$, $f_2=1$, $g_1=u_1$, $g_2=u_0$, $g_3=1$.

By Theorem~\ref{wepoints}, 
the WE algebra at any point $a\in\CE$ is isomorphic to~$\ufl/\idz$. 
Applying Remark~\ref{remgenwe} to this example, 
we obtain that the algebra $\ufl/\idz$ is given by the generators 
$\wflm_1$, $\wflm_2$, $\wfln_1$, $\wfln_2$, $\wfln_3$ and the relations
$\wfln_1-\wflm_1=0$, $[\wflm_1,\wfln_1]=0$, $\wfln_2+[\wflm_2,\wfln_1]=0$,
$[\wflm_1,\wfln_2]=0$, $[\wflm_1,\wfln_3]+[\wflm_2,\wfln_2]=0$, $[\wflm_2,\wfln_3]=0$.
\end{example}
According to the computations of~\cite{dodd,schief,ll,Prol} 
and references therein,
Theorem~\ref{wepoints} is applicable also to 
the KdV, nonlinear Schr\"odinger, Landau-Lifshitz, 
Harry-Dym equations, and many other analytic evolution PDEs. 
Although the papers~\cite{dodd,schief,ll,Prol} 
consider only smooth or analytic ZCRs, 
for these PDEs the computations essentially remain the same 
for any formal ZCRs~\eqref{hatmn}, 
so one can apply Theorem~\ref{wepoints}. 
In Section~\ref{comput} we will show that Theorem~\ref{wepoints} 
is applicable also to system~\eqref{main}, 
if we rewrite this system as~\eqref{sp},~\eqref{pt}.

\section{The WE algebra of the multicomponent Landau-Lifshitz system}
\label{comput}

For any $m\in\zp$ and $m$-dimensional vectors 
$v={(v^1,\dots,v^m)}$, $w={(w^1,\dots,w^m)}$, set 
$\langle v,w\rangle=\sum_{i=1}^mv^iw^i$.

In order to compute the WE algebra of system~\eqref{main},
we need to resolve the constraint $\langle S,S\rangle=1$
for the vector-function $S=\big(s^1(x,t),\dots,s^n(x,t)\big)$.
Following~\cite{mll}, we do this as 
\begin{equation}
\label{sp}
s^j=\frac{2u^j}{1+\langle u,u\rangle},\qquad\qquad
j=1,\dots,n-1,\qquad\qquad
s^n=\frac{1-\langle u,u\rangle}{1+\langle u,u\rangle},
\end{equation}
where $u=\big(u^1(x,t),\dots,u^{n-1}(x,t)\big)$ 
is an $(n-1)$-dimensional vector-function.

We assume $n\ge 3$. 
The reasons for this assumption were explained in Section~\ref{detdesc}.

As is shown in~\cite{mll}, using~\eqref{sp}, 
one can rewrite system~\eqref{main} as 
\begin{multline}
\label{pt}
u_t=u_{xxx}-6\langle u,u_x\rangle\Delta^{-1}u_{xx}
+\bigl(-6\langle u,u_{xx}\rangle\Delta^{-1}+24\langle u,u_{x}\rangle^2\Delta^{-2}
-6\langle u,u\rangle\langle u_x,u_{x}\rangle\Delta^{-2}\bigl)u_x+\\
+\bigl(6\langle u_x,u_{xx}\rangle\Delta^{-1}-12\langle u,u_x\rangle\langle u_x,u_{x}\rangle\Delta^{-2}\bigl)u
+\frac32\Bigl(r_n+4\Delta^{-2}\sum_{i=1}^{n-1}(r_i-r_n)(u^i)^2\Bigl)u_x,
\end{multline}
where $\Delta=1+\langle u,u\rangle$, and $r_1,\dots,r_n$ 
are the distinct numbers such that 
$R=\mathrm{diag}\,(r_1,\dots,r_n)$ in~\eqref{main}.

Set $u^i_k=\dfrac{\pd^k u^i}{\pd x^k}$ for $i=1,\dots,n-1$ and $k\in\zp$. 
In particular, $u^i_0=u^i$.
Similarly to~\eqref{coor}, we regard $u^i_k$ as coordinates of 
the corresponding manifold~$\CE$. 
Recall that $u^i_k$ take values in~$\fik$, 
where $\fik$ is either $\Com$ or $\mathbb{R}$. 
For simplicity of notation, we will write $u^i$ instead of~$u^i_0$.

Since the right hand-side of~\eqref{pt} contains negative powers 
of~$\Delta=1+\sum_{i}(u^i)^2$, we introduce the following open subset~$\ess\subset\CE$ 
$$
\ess=\Big\{\big(u^1,\dots,u^{n-1},
u^1_1,\dots,u^{n-1}_1,u^1_2,\dots,u^{n-1}_2,\dots\big)\in\CE
\,\ \Big|\,\ 1+\sum_i(u^i)^2\neq 0\Big\}.
$$  
System~\eqref{pt} is of the form
\begin{equation}\label{eq.landau}
\frac{\pd u^j}{\pd t}=u^j_3+G^j\big(u^i,u^i_1,u^i_2\big),\qquad\qquad 
j=1,\dots,n-1,
\end{equation}
and the functions $G^j\big(u^i,u^i_1,u^i_2\big)$ are analytic on~$\ess$. 

According to Section~\ref{wea}, 
in order to compute the WE algebra of~\eqref{pt}, 
we need to study the equation 
\begin{gather}
\label{gc}
D_x(B)-D_t(A)+[A,B]=0,\\
\notag
A=A(u^1,\dots,u^{n-1}),\qquad\quad
B=B(u^1,\dots,u^{n-1},u^1_1,\dots,u^{n-1}_1,u^1_2,\dots,u^{n-1}_2).
\end{gather}
Here $A$, $B$ can be either smooth functions 
with values in a Lie algebra~$\mcl$
or formal power series with coefficients in~$\mcl$.

In the case of smooth functions, we assume that $A$, $B$ are defined 
on a connected open subset of~$\ess$. 

In the case of formal power series, one has
$$
A=\sum_{i_1,\dots,i_{n-1}\ge 0} A_{i_1\dots i_{n-1}}
(u^1-a^1_0)^{i_1}\dots(u^{n-1}-a^{n-1}_0)^{i_{n-1}},\qquad\quad
A_{i_1\dots i_{n-1}}\in\mcl,
$$
and $B$ is a power series in the variables 
${u^i-a^i_0}$, ${u^i_1-a^i_1}$, ${u^i_2-a^i_2}$ for some fixed numbers $a^i_k\in\fik$ satisfying ${1+\sum_{i=1}^{n-1}(a^i_0)^2\neq 0}$.

We will show that in both cases equation~\eqref{gc} implies that $A$, $B$
are of the form
\begin{equation}
\notag
A=\sum_{j=1}^{\nmf} \ple_j\cdot f_j(u^i),\qquad\quad 
B=\sum_{l=1}^{\nmg} 
\qle_l\cdot g_l(u^i,u^i_1,u^i_2),\qquad\quad\ple_j,\qle_l\in\mcl,
\end{equation}
for some functions $f_j(u^i)$, $g_l(u^i,u^i_1,u^i_2)$, which
are certain polynomials in $s^m$, $D_x(s^m)$, $D_x^2(s^m)$.
Here $s^m=s^m(u^i)$ for ${m=1,\dots,n}$ are given by~\eqref{sp}.
In particular, the functions $f_j$, $g_l$
will be analytic on~$\ess$, 
so we will be able to use Theorem~\ref{wepoints}.

Differentiating equation~\eqref{gc} with respect to~$u^i_{3}$, 
we see that $B$ is of the form 
\begin{equation}\label{baf}
B=\sum_{i=1}^{n-1} u^i_{2} A_{u^i}+\fnt,
\end{equation} 
where $\fnt$ may depend only on $u^j$ and $u^j_1$. 
Here and below, 
the subscripts~$u^i$ denote derivatives with respect to~$u^i$.
That is, ${A_{u^i}=\pd A/\pd u^i}$. 

Then equation~\eqref{gc} becomes 
\begin{equation}\label{eq.covering.2}
\sum_{i,j=1}^{n-1}u^i_2 u^j_{1} A_{u^i u^j}+
\sum_{j=1}^{n-1}\Big(u^j_{1} \fnt_{u^j}+u^j_{2}\frac{\pd\fnt}{\pd u^j_1}
-G^jA_{u^j}+u^j_{2}\left[A,A_{u^j}\right]\Big)+[A,\fnt]=0,
\end{equation}
where $G^j$ is defined by~\eqref{pt},~\eqref{eq.landau} and satisfies 
\begin{equation}
\label{glui2}
\frac{\pd G^j}{\pd u^i_{2}}=\Delta^{-1}
\Big(-6\delta_{ij}\sum_{k} u^ku^k_1-6u^iu^j_1+6u^i_1u^j\Big)
\qquad\qquad\forall\,j,i.
\end{equation}

Differentiating~\eqref{eq.covering.2} with respect to~$u^i_{2}$ 
and using~\eqref{glui2}, one gets
\begin{equation}
\label{fntu1}
\frac{\pd\fnt}{\pd u^i_1}=
-\sum_{j=1}^{n-1}\Big(u^j_1A_{u^iu^j}+
\Delta^{-1}\Big(6\delta_{ij}\sum_{k} u^ku^k_1 
+6u^iu^j_1-6u^i_1u^j\Big)A_{u^j}\Big)-[A,A_{u^i}]\quad\qquad\forall\,i.
\end{equation} 

Integrating equations~\eqref{fntu1} with respect to~$u^i_1$, 
we obtain that $\fnt$ is of the form
\begin{equation}\label{eq.F1.WE}
\fnt=-\frac12\sum_{i,j}u_1^{i}u_1^{j}
\big(A_{u^{i}u^{j}}+12u^{i}\Delta^{-1}A_{u^{j}}\big)
+\sum_{i,j}3\Delta^{-1}\big(u_{1}^{i}\big)^{2}u^jA_{u^j}
+\sum_{i} u_{1}^{i}[A_{u^i},A] + \hlt,
\end{equation}
where $\hlt$ may depend only on $u^1,u^2,\dots,u^{n-1}$.

Substituting~\eqref{eq.F1.WE} in~\eqref{eq.covering.2}, 
we see that the left-hand side of~\eqref{eq.covering.2} 
is a third degree polynomial in~$u^i_1$. 
Equating to zero the coefficients of~$u^{i_1}_1u^{i_2}_1u^{i_3}_1$ 
of this polynomial, one gets the following equations 
\begin{gather}
\label{auuu1}
\begin{split}
A_{u^{i}u^{i}u^{i}}&=6\Delta^{-1}
\bigg(\sum_ku^{k}A_{u^{i}u^{k}}-2u^{i}A_{u^{i}u^{i}}
-A_{u^{i}}\bigg)+\\
&+12\Delta^{-2}\bigg(
\sum_ku^{i}u^{k}A_{u^{k}}+\langle u,u\rangle A_{u^{i}}
-2\big(u^{i}\big)^{2}A_{u^{i}}\bigg)\quad\qquad\forall\,i, 
\end{split}\\
\label{auuu2}
\begin{split}
A_{u^{i}u^{i}u^{h}}&=2\Delta^{-1}
\bigg(\sum_ku^{k}A_{u^{h}u^{k}}-4u^{i}A_{u^{i}u^{h}}
-A_{u^{h}}-2u^{h}A_{u^{i}u^{i}}\bigg)+\\
&+4\Delta^{-2}\bigg(\sum_ku^hu^k A_{u^k}-4u^{i}u^{h}A_{u^{i}} -2\big(u^{i}\big)^{2}A_{u^{h}}+\langle u,u\rangle A_{u^{h}}\bigg)\quad\qquad 
\forall\,i\neq h, 
\end{split}\\
\label{auuu3}
\begin{split}
A_{u^{i}u^{j}u^{h}}&=-4\Delta^{-1}
\big(u^{j}A_{u^{i}u^{h}}+u^{i}A_{u^{j}u^{h}}
+u^{h}A_{u^{i}u^{j}}\big)+\\
&+8\Delta^{-2}
\left(-u^{j}u^{h}A_{u^{i}}-
u^{i}u^{h}A_{u^{j}}-u^{i}u^{j}A_{u^{h}}\right)
\quad\qquad\forall\,i<j<h.
\end{split}
\end{gather}

\begin{proposition}\label{accks} 
Let $A=A(u^1,\dots,u^{n-1})$ be either a smooth function 
with values in a Lie algebra~$\mcl$
or a formal power series with coefficients in~$\mcl$.
Then $A$ satisfies~\eqref{auuu1},~\eqref{auuu2},~\eqref{auuu3} 
if and only if 
\begin{equation}\label{eq.X.nice.bis}
A=C_0+\sum_{l=1}^{n}C_ls^l
\end{equation}
for some $C_0,C_1,\dots,C_n\in\mcl$. 
Here the functions $s^l=s^l(u^1,\dots,u^{n-1})$ are given by~\eqref{sp}.
\end{proposition}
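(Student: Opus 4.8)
The plan is to verify the "if" direction by direct substitution and then establish the "only if" direction by exhibiting enough solutions to span the solution space, exploiting the linearity of the system~\eqref{auuu1},~\eqref{auuu2},~\eqref{auuu3} in~$A$. First I would check that each function $s^l(u^1,\dots,u^{n-1})$ from~\eqref{sp}, regarded as a scalar function, satisfies the scalar analogues of~\eqref{auuu1},~\eqref{auuu2},~\eqref{auuu3} (with $A$ replaced by $s^l$); since~$C_0$ is constant it trivially satisfies them, and the equations are $\mcl$-linear in~$A$, so any $A$ of the form~\eqref{eq.X.nice.bis} is a solution. This is a routine computation using $\Delta = 1 + \langle u,u\rangle$ and the explicit formulas for $\pd s^l/\pd u^i$; I would not grind through it but simply note that it reduces to a finite identity among rational functions of the~$u^i$.

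For the converse, the key observation is that~\eqref{auuu1},~\eqref{auuu2},~\eqref{auuu3} express every third-order partial derivative $A_{u^iu^ju^h}$ as an explicit $\mcl$-linear combination of the first- and second-order derivatives $A_{u^k}$, $A_{u^ku^l}$ with coefficients that are fixed scalar functions on~$\ess$. Hence, in the formal power series case, the entire Taylor expansion of~$A$ at the base point~$a$ is determined by the finitely many coefficients corresponding to $A$, $A_{u^i}$, $A_{u^iu^j}$ at~$a$; equivalently, the solution space of~\eqref{auuu1},~\eqref{auuu2},~\eqref{auuu3} has dimension at most $\bigl(1 + (n-1) + \binom{n}{2}\bigr)\cdot\dim\mcl$ over~$\fik$ (the number of free initial data, counting $A(a)$, the gradient, and the Hessian). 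I would make this precise by an induction on the total order $|\gamma|$ of the multi-index: for $|\gamma|\ge 3$ one writes $\gamma = \gamma' + e_i + e_j + e_h$ and uses the appropriate one of~\eqref{auuu1}–\eqref{auuu3} (differentiated $|\gamma'|$ times and evaluated at~$a$) to express the coefficient in terms of lower-order data — here one must check that the three cases (all indices equal, two equal, all distinct) together cover every third-order derivative, which they do by symmetry of mixed partials.

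On the other hand, the functions $1, s^1, \dots, s^n$ are $\fik$-linearly independent as functions on~$\ess$ (indeed $s^n$ together with $s^1,\dots,s^{n-1}$ and the constant are independent since $s^n = 1 - 2\langle u,u\rangle/\Delta$ and the $s^j$ are the independent coordinates $2u^j/\Delta$), so the solutions $A = C_0 + \sum_l C_l s^l$ already fill out a subspace of dimension exactly $(n+1)\dim\mcl$. A short check shows $n + 1 = 1 + (n-1) + \binom{n}{2}$ fails in general — so instead the bound must come not from counting all of $A(a)$, $\nabla A(a)$, $\mathrm{Hess}\,A(a)$ freely, but from observing that~\eqref{auuu1}–\eqref{auuu3} are not the only constraints: already~\eqref{fntu1} (or rather the second-order part of the original system) forces relations among the second derivatives. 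The cleaner route, which I would actually take, is: set $\tilde A = A - A(a) - \sum_j \bigl(s^j - s^j(a)\bigr)\,\ple_j$ for suitable $\ple_j\in\mcl$ chosen to kill the first derivatives of $A$ at $a$ (possible since the Jacobian $(\pd s^j/\pd u^i)$ is invertible at $a$), observe $\tilde A$ still solves the homogeneous system~\eqref{auuu1}–\eqref{auuu3} and vanishes to second order at~$a$, and then show by the induction above that $\tilde A \equiv 0$ — the vanishing of the $0$th and $1$st order data propagates, and the $2$nd order data is itself constrained to vanish by feeding $\tilde A$ back into~\eqref{auuu1}–\eqref{auuu3} at order zero. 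In the smooth (analytic) case the same conclusion follows because an analytic function with vanishing Taylor series on the connected set~$\ess$ is identically zero.

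The main obstacle I anticipate is precisely this last point: showing that the second-order Taylor data of $\tilde A$ at $a$ must vanish. The equations~\eqref{auuu1},~\eqref{auuu2},~\eqref{auuu3} evaluated at $a$ do not by themselves say $A_{u^iu^j}(a) = 0$; one has to either invoke an additional consequence of the full prolongation system (not just its third-order-coefficient part) or, more economically, verify directly that the only obstruction-free choices of $\{A(a), A_{u^i}(a), A_{u^iu^j}(a)\}$ compatible with global solvability of~\eqref{auuu1}–\eqref{auuu3} are those coming from~\eqref{eq.X.nice.bis} — i.e., that the formal Frobenius-type integrability conditions (equality of the two ways of computing a fourth derivative via~\eqref{auuu1}–\eqref{auuu3}) cut the $2$nd-order data down to the $(n-1) + \binom{n}{2} - \bigl((n-1)+\binom{n}{2} - n\bigr)$ ... dimensionally, down to exactly what the $n$ functions $s^l$ supply. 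Carrying out that compatibility check is the real content; everything else is bookkeeping.
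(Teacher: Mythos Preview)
Your overall strategy is exactly the paper's: verify the ``if'' direction by direct substitution, then bound the dimension of the solution space of~\eqref{auuu1}--\eqref{auuu3} and note that the $n+1$ functions $1,s^1,\dots,s^n$ already saturate it. You correctly locate the crux --- the third-order system alone leaves $1+(n-1)+\binom{n}{2}$ free second-order data, which is too many, and you need the Frobenius-type compatibility conditions to cut this down. What the paper does, and what you defer, is precisely that computation: from $\partial_{u^h}R(i)=\partial_{u^i}\tilde R(i,h)$ and $\partial_{u^i}\tilde R(i,h)=A_{u^iu^iu^h}$ one extracts the explicit second-order consequences
\[
A_{u^iu^h}=-2\Delta^{-1}\bigl(u^hA_{u^i}+u^iA_{u^h}\bigr)\quad(i\neq h),\qquad
A_{u^iu^i}-A_{u^ju^j}=-4\Delta^{-1}\bigl(u^iA_{u^i}-u^jA_{u^j}\bigr).
\]
These pin every off-diagonal $A_{u^iu^h}$ and every difference $A_{u^iu^i}-A_{u^ju^j}$ to first-order data, so the genuinely free initial data is $A(a)$, $\nabla A(a)$, and a \emph{single} diagonal second derivative --- that is $1+(n-1)+1=n+1$, matching the span of $1,s^1,\dots,s^n$. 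So your plan is sound; the ``real content'' you flag is a concrete and short derivation, not an open-ended compatibility analysis.

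There is one genuine gap. Your last sentence (``in the smooth (analytic) case the same conclusion follows because an analytic function with vanishing Taylor series\dots'') covers $\fik=\Com$ but not $\fik=\mathbb{R}$: in the real case the paper only assumes $A$ is smooth, and a smooth real function with vanishing Taylor series at one point need not vanish. The paper handles this by converting the PDE system (together with the two second-order relations above) into a linear ODE for the vector $\bigl(A,\,A_{u^1},\dots,A_{u^{n-1}},\,A_{u^1u^1}\bigr)$ along any smooth path in the connected domain, and then invoking uniqueness for linear ODEs. You should add that step (or some equivalent propagation-of-zeroes argument) for the real smooth case.

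A minor point: your phrase ``the Jacobian $(\partial s^j/\partial u^i)$ is invertible at~$a$'' is imprecise since that matrix is $n\times(n-1)$; what you need (and what is true) is that the gradients $\nabla s^j(a)$ span $\fik^{n-1}$, so you can always choose the $\ple_j$ to kill $\nabla A(a)$.
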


\begin{remark}
\label{sk_solution}
We would like to explain how one can guess that 
$A$ in~\eqref{gc} must be of the form~\eqref{eq.X.nice.bis}.
Since the original system~\eqref{main} is written in terms 
of $S=(s^1,\dots,s^n)$, 
it is natural to expect that $A$ can be expressed in terms of~$s^l$.
Then the simplest possibility is that $A$ depends linearly on~$s^l$.
According to Proposition~\ref{accks}, 
this natural guess turns out to be correct.

For $n=3$ some analog of formula~\eqref{eq.X.nice.bis} 
appears in the description of ZCRs of the classical Landau-Lifshitz equation~\cite{ll}.
\end{remark}

\begin{proof}[Proof of Proposition~\textup{\ref{accks}}]
We regard~\eqref{auuu1},~\eqref{auuu2},~\eqref{auuu3} as PDEs for 
${A=A(u^1,\dots,u^{n-1})}$.
Let us compute some differential consequences of these PDEs.

Denote by~$R(i)$ the right-hand side of~\eqref{auuu1} 
and by~$\tilde{R}(i,h)$ the right-hand side of~\eqref{auuu2}. 
For any $i\neq h$, 
let us differentiate equation~\eqref{auuu1} with respect to~$u^h$ 
and equation~\eqref{auuu2} with respect to~$u^i$. One gets 
\begin{equation}\label{auuuua}
A_{u^{i}u^{i}u^{i}u^{h}}=\frac{\pd}{\pd u^h}\Big(R(i)\Big),\qquad\qquad
A_{u^{i}u^{i}u^{h}u^{i}}=\frac{\pd}{\pd u^i}\Big(\tilde{R}(i,h)\Big).
\end{equation}
Since $A_{u^{i}u^{i}u^{i}u^{h}}=A_{u^{i}u^{i}u^{h}u^{i}}$, 
equations~\eqref{auuuua} imply
\begin{equation}\label{ririh}
\frac{\pd}{\pd u^h}\Big(R(i)\Big)=\frac{\pd}{\pd u^i}\Big(\tilde{R}(i,h)\Big)
\qquad\qquad\forall\,i\neq h.
\end{equation}
Equations~\eqref{ririh} are PDEs of third order for~$A$. 
Let us replace the third order derivatives of~$A$ by the right-hand sides 
of~\eqref{auuu1},~\eqref{auuu2},~\eqref{auuu3}. 
Then equations~\eqref{ririh} become PDEs of second order. 
It is straightforward to show that the obtained system of second order PDEs is equivalent to 
\begin{equation}\label{eq.appp}
A_{u^{i}u^{h}}=-2\Delta^{-1}\big(u^hA_{u^{i}} + u^i A_{u^{h}}\big) 
\qquad\qquad\forall\,i\neq h.
\end{equation}

Since $\tilde{R}(i,h)$ is the right-hand side of~\eqref{auuu2}, 
one has ${A_{u^{i}u^{i}u^{h}}=\tilde{R}(i,h)}$.
Differentiating~\eqref{eq.appp} with respect to~$u^i$ 
and replacing $A_{u^{i}u^{h}u^{i}}$ by~$\tilde{R}(i,h)$, we obtain
\begin{equation}
\label{trihpd}
\tilde{R}(i,h)=\frac{\pd}{\pd u^i}
\Big(-2\Delta^{-1}\big(u^hA_{u^{i}} + u^i A_{u^{h}}\big)\Big)
\qquad\qquad\forall\,i\neq h.
\end{equation}
Using~\eqref{eq.appp}, 
in~\eqref{trihpd} we can replace $A_{u^{j}u^{l}}$ by 
$-2\Delta^{-1}\big(u^lA_{u^{j}} + u^j A_{u^{l}}\big)$ for any $j\neq l$.
As a result, one gets 
\begin{equation}\label{eq.apppp}
\big(A_{u^iu^i} - A_{u^ju^j}\big)
+4\Delta^{-1}\big(u^iA_{u^i} - u^jA_{u^j}\big)=0 
\qquad\qquad\forall\,i\neq j.
\end{equation}

Consider first the case when $A$ 
is a formal power series with coefficients in~$\fik$.
\begin{lemma}
\label{fpsass}
Let $a^1_0,\dots,a^{n-1}_0\in\fik$ be such that 
${1+\sum_i(a^i_0)^2\neq 0}$.
A formal power series 
\begin{equation}
\label{aiaiu}
A=\sum_{i_1,\dots,i_{n-1}\ge 0} A_{i_1\dots i_{n-1}}
(u^1-a^1_0)^{i_1}\dots(u^{n-1}-a^{n-1}_0)^{i_{n-1}},\qquad\qquad
A_{i_1\dots i_{n-1}}\in\fik,
\end{equation}
satisfies~\eqref{auuu1},~\eqref{auuu2},~\eqref{auuu3} iff 
${A=b_0+\sum_{l=1}^{n}b_ls^l}$ for some ${b_0,b_1,\dots,b_n\in\fik}$, 
where $s^l=s^l(u^1,\dots,u^{n-1})$ are given by~\eqref{sp}. 

Here we regard the functions $s^l=s^l(u^1,\dots,u^{n-1})$ 
as power series, using the corresponding Taylor series at the 
point~$u^i=a^i_0$.
\end{lemma}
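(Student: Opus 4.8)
The plan is to regard \eqref{auuu1}, \eqref{auuu2}, \eqref{auuu3} as a system of PDEs of finite type for the scalar (formal power) series $A$ and to show that its space of solutions is exactly $(n+1)$-dimensional, spanned by the constant series $1$ and the series $s^1,\dots,s^n$. Once this is done the lemma follows immediately.

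For the ``if'' direction I would verify by a direct differentiation that $1$ and each $s^l$ satisfy \eqref{auuu1}, \eqref{auuu2}, \eqref{auuu3}. Since these equations are linear and homogeneous in $A$, and the functions $s^j$ with $j<n$ from \eqref{sp} are permuted among themselves when the coordinates $u^1,\dots,u^{n-1}$ are relabelled (the equations being covariant under such relabelling), this reduces to the two cases $A=s^1$ and $A=s^n$; after multiplying through by a suitable power of $\Delta=1+\langle u,u\rangle$, each of \eqref{auuu1}--\eqref{auuu3} becomes a polynomial identity in $u^1,\dots,u^{n-1}$, which one checks directly and which then holds as a power series identity around any base point with $1+\sum_i(a^i_0)^2\neq 0$. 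Linear independence of $1,s^1,\dots,s^n$ over $\fik$ is seen by multiplying a hypothetical relation $b_0+\sum_l b_ls^l=0$ by $\Delta$: one obtains the polynomial identity $(b_0+b_n)+2\sum_{j<n}b_ju^j+(b_0-b_n)\langle u,u\rangle=0$ valid near $(a^1_0,\dots,a^{n-1}_0)$, which forces $b_0=\dots=b_n=0$.

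For the ``only if'' direction I would invoke the computation carried out just before the lemma: any $A$ satisfying \eqref{auuu1}--\eqref{auuu3} also satisfies the second-order equations \eqref{eq.appp} and \eqref{eq.apppp}. Equation \eqref{eq.appp} expresses every mixed second derivative $A_{u^iu^h}$ with $i\neq h$ through the first derivatives $A_{u^1},\dots,A_{u^{n-1}}$, while \eqref{eq.apppp} (with $j=1$) expresses each $A_{u^iu^i}$ through $A_{u^1u^1}$ and the first derivatives. Substituting these into \eqref{auuu1} with $i=1$ and into \eqref{auuu2} with repeated index $1$ then expresses the third derivatives $A_{u^1u^1u^k}$ ($k=1,\dots,n-1$) through $A_{u^1u^1}$ and the first derivatives as well. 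Hence the column $Y=(A,\,A_{u^1},\dots,A_{u^{n-1}},\,A_{u^1u^1})$ of $n+1$ series satisfies a first-order linear system $\pd Y/\pd u^i=\Phi_i\,Y$, $i=1,\dots,n-1$, where each $\Phi_i$ is an $(n+1)\times(n+1)$ matrix whose entries are rational in $u^1,\dots,u^{n-1}$ with denominators powers of $\Delta$, hence honest power series at the base point. Consequently all Taylor coefficients of $Y$, and therefore of $A$, at $u^i=a^i_0$ are determined by the single vector $Y(a^1_0,\dots,a^{n-1}_0)\in\fik^{n+1}$: if this vector vanishes, induction on the order of differentiation (using $\pd Y/\pd u^i=\Phi_iY$) shows $Y\equiv 0$. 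Thus the solution space has dimension at most $n+1$; since by the ``if'' part it contains the $(n+1)$-dimensional span of $1,s^1,\dots,s^n$, it coincides with that span.

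The step I expect to require the most care is the explicit reduction to the first-order system $\pd Y/\pd u^i=\Phi_iY$: one must check that after substituting \eqref{eq.appp} and \eqref{eq.apppp} into \eqref{auuu1} and \eqref{auuu2} no second derivative other than $A_{u^1u^1}$ survives, and that the coefficient functions collapse to genuine power series at the base point (no spurious negative powers of $\Delta$ after simplification). I would emphasize that, for the dimension bound, compatibility of this first-order system is \emph{not} needed, so there is no need to verify $\pd_{u^i}(\Phi_jY)=\pd_{u^j}(\Phi_iY)$ directly; that compatibility holds a posteriori because of the $n+1$ explicit solutions produced in the ``if'' direction.
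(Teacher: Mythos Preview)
Your proposal is correct and follows essentially the same approach as the paper: both arguments show that the solution space has dimension at most $n+1$ by using the differential consequences \eqref{eq.appp} and \eqref{eq.apppp} to reduce the data determining $A$ to the $(n+1)$-tuple $(A,A_{u^1},\dots,A_{u^{n-1}},A_{u^1u^1})$ at the base point, and then exhibit $1,s^1,\dots,s^n$ as an explicit basis. The only cosmetic difference is that you package the reduction as a first-order linear system $\partial Y/\partial u^i=\Phi_iY$ (which is precisely what the paper does in the smooth case, Lemma~\ref{smabs}), whereas the paper's proof of Lemma~\ref{fpsass} argues more tersely at the level of Taylor coefficients; the content is the same.
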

\begin{proof}
Let $\mathcal{V}$ be the vector space of formal power series~\eqref{aiaiu} 
satisfying~\eqref{auuu1},~\eqref{auuu2},~\eqref{auuu3}. 
If $A\in\mathcal{V}$ then $A$ obeys also~\eqref{eq.appp},~\eqref{eq.apppp}. 
Let $A\in\mathcal{V}$ be given by~\eqref{aiaiu}. 
According to~\eqref{auuu1}, \eqref{auuu2}, \eqref{auuu3}, 
any third order derivative of~$A$ 
is expressed in terms of lower order derivatives.
Therefore, 
if $A_{i_1\dots i_{n-1}}=0$ for all $i_1,\dots,i_{n-1}\ge 0$ such that   ${i_1+\dots+i_{n-1}\le 2}$, then $A=0$.   
 
Combining this with~\eqref{eq.appp},~\eqref{eq.apppp}, we see the following. 
If $A_{20\dots 0}=0$ and $A_{j_1\dots j_{n-1}}=0$ for all $j_1,\dots,j_{n-1}\ge 0$ satisfying ${j_1+\dots+j_{n-1}\le 1}$, then $A=0$. 

Thus any power series $A\in\mathcal{V}$ is uniquely determined 
by the coefficients 
$$
A_{20\dots 0},\qquad A_{j_1\dots j_{n-1}},\qquad\quad 
j_1,\dots,j_{n-1}\ge 0,\qquad j_1+\dots+j_{n-1}\le 1,
$$
hence $\dim\mathcal{V}\le n+1$. 
It is easy to check that the functions 
\begin{equation}
\label{sss1}
1,\qquad s^1(u^1,\dots,u^{n-1}),\qquad s^2(u^1,\dots,u^{n-1}),\qquad \dots,\qquad 
s^n(u^1,\dots,u^{n-1})
\end{equation} 
satisfy PDEs~\eqref{auuu1},~\eqref{auuu2},~\eqref{auuu3}.
The functions~\eqref{sss1} are linearly independent over~$\fik$ 
and are analytic on a neighborhood of the point~$u^i=a^i_0$. 
Therefore, the Taylor series of the functions~\eqref{sss1} 
are linearly independent and belong to~$\mathcal{V}$. 
Since $\dim\mathcal{V}\le n+1$, this implies the statement of the lemma.  
\end{proof}
Now let us study the case 
when $A$ is a smooth function with values in~$\fik$. 

\begin{lemma}
\label{smabs}
Consider the space $\fik^{n-1}$ with the coordinates 
${u^1,\dots,u^{n-1}}$ and the open subset 
$$
U=\Big\{\big(u^1,\dots,u^{n-1}\big)\in\fik^{n-1}
\,\ \Big|\,\ 1+\sum_i(u^i)^2\neq 0\,\Big\}.
$$ 
Let $W$ be a connected open subset of~$U$. 
A smooth $\fik$-valued function $A(u^1,\dots,u^{n-1})$ on~$W$ 
satisfies \eqref{auuu1}, \eqref{auuu2}, \eqref{auuu3} iff 
${A=b_0+\sum_{l=1}^{n}b_ls^l}$ for some ${b_0,b_1,\dots,b_n\in\fik}$.
\end{lemma}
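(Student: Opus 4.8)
The plan is to reduce the smooth case (Lemma~\ref{smabs}) to the already-established formal power series case (Lemma~\ref{fpsass}) by a ``rigidity'' argument. First I would observe that the ``if'' direction is immediate: one checks directly that the functions $1,s^1,\dots,s^n$ given by~\eqref{sp} satisfy the PDEs~\eqref{auuu1},~\eqref{auuu2},~\eqref{auuu3}, and these PDEs are linear in $A$, so any $\fik$-linear combination $b_0+\sum_l b_l s^l$ is a solution. (This verification is the same computation already invoked in the proof of Lemma~\ref{fpsass}.) So the content is the ``only if'' direction.

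For the ``only if'' direction, the key point is that, just as in the formal case, equations~\eqref{auuu1},~\eqref{auuu2},~\eqref{auuu3} express every third-order partial derivative of $A$ in terms of lower-order derivatives, and~\eqref{eq.appp},~\eqref{eq.apppp} — which are differential consequences derived above and hence automatically hold for any smooth solution on $W$ — further reduce the second-order derivatives. Pick a point $c=(c^1,\dots,c^{n-1})\in W$. Then I would argue that the Taylor expansion of $A$ to second order at $c$ involves only the $n+1$ free quantities $A(c)$, $A_{u^1}(c),\dots,A_{u^{n-1}}(c)$, and $A_{u^1u^1}(c)$ (using~\eqref{eq.appp} to kill the mixed second derivatives and~\eqref{eq.apppp} to express $A_{u^iu^i}(c)$ in terms of $A_{u^1u^1}(c)$ and first derivatives). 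Since the functions $1,s^1,\dots,s^n$ are linearly independent over $\fik$ near $c$ and span an $(n+1)$-dimensional space of solutions, and since by Lemma~\ref{fpsass} the space of \emph{formal} solutions at $c$ has dimension at most $n+1$, it follows that the map sending a formal solution to its jet of second order at $c$ is a bijection onto the appropriate $(n+1)$-dimensional space of $2$-jets. Hence there exist $b_0,\dots,b_n\in\fik$ (depending a priori on $c$) such that the power series of $A-\bigl(b_0+\sum_{l=1}^n b_l s^l\bigr)$ at $c$ has vanishing $2$-jet; but then, because third-order derivatives are determined by lower-order ones, by induction on the order every Taylor coefficient of this difference at $c$ vanishes.

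The concluding step is to pass from ``all Taylor coefficients at $c$ vanish'' to ``$A=b_0+\sum b_l s^l$ on all of $W$''. In the analytic case ($\fik=\Com$) this is immediate from connectedness of $W$ and the identity theorem. In the smooth case ($\fik=\mathbb{R}$) one cannot argue this way directly; instead I would show that the set of points of $W$ at which $A$ locally agrees with some combination $b_0+\sum b_l s^l$ is open (clear) and closed in $W$ (because at a boundary point the $2$-jet of $A$ still lies in the $(n+1)$-dimensional jet space, pinning down $b_0,\dots,b_n$, and continuity forces agreement of the full jets), and that the coefficients $b_l$ are locally constant hence globally constant on the connected set $W$ — using that the $s^l$ together with $1$ are everywhere linearly independent on $U$, so the representation $b_0+\sum b_l s^l$ determines the $b_l$ uniquely and they vary continuously, hence are constant.

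The main obstacle I expect is this last transition in the $C^\infty$ case: making rigorous that a smooth solution of an overdetermined linear PDE system whose $\infty$-jet at one point matches that of an explicit solution must coincide with that solution on the whole connected domain. The clean way around it is to note that the system~\eqref{auuu1}--\eqref{auuu3} together with~\eqref{eq.appp},~\eqref{eq.apppp} is (after reduction) a \emph{Frobenius-integrable} first-order system for the collection $(A, A_{u^1},\dots,A_{u^{n-1}}, A_{u^1u^1})$ — i.e. a flat connection on a rank-$(n+1)$ trivial bundle over $U$ — so its solution space over any connected $W$ has dimension exactly $n+1$ and a solution is determined by its value at a single point; since $1,s^1,\dots,s^n$ already furnish $n+1$ independent solutions, they span everything. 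I would phrase the proof in this Frobenius language to cover the smooth and analytic cases uniformly, and then Proposition~\ref{accks} follows by tensoring with an arbitrary Lie algebra $\mcl$ (applying the scalar statement coordinate-wise in a basis of $\mcl$, or to each linear functional on $\mcl$).
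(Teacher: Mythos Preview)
Your proposal is correct and follows essentially the same route as the paper. Both arguments reduce to Lemma~\ref{fpsass} at a chosen point to pin down $b_0,\dots,b_n$, handle $\fik=\Com$ by analyticity, and for $\fik=\mathbb{R}$ exploit that the reduced system for $(A,A_{u^1},\dots,A_{u^{n-1}},A_{u^1u^1})$ is a rank-$(n+1)$ first-order linear system whose solutions are determined by their value at one point; the paper does this concretely by pulling the system back along a smooth path $\varphi\colon[0,1]\to W$ and invoking ODE uniqueness, which is exactly the path-restriction of your Frobenius/flat-connection formulation.

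One small remark: your intermediate open--closed argument is indeed shaky as you suspected (the ``closed'' step does not follow from $2$-jet continuity alone), so it is right that you abandon it. For the Frobenius formulation you do not need to verify the curvature vanishes directly: the existence of $n+1$ linearly independent global solutions $1,s^1,\dots,s^n$ already forces the solution space over connected $W$ to have the maximal dimension $n+1$, which is what you use.
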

\begin{proof}
If $\fik=\mathbb{C}$, 
then, according to the assumptions of Section~\ref{defwea}, 
the function $A(u^1,\dots,u^{n-1})$ is analytic 
and the statement follows from Lemma~\ref{fpsass}.

Consider the case $\fik=\mathbb{R}$. 
Since the functions~\eqref{sss1} satisfy PDEs~\eqref{auuu1},~\eqref{auuu2},~\eqref{auuu3}, 
we see that $b_0+\sum_{l=1}^{n}b_ls^l$ obeys these PDEs 
for any $b_0,\dots,b_n\in\fik$. 

Suppose that a smooth function $A=A(u^1,\dots,u^{n-1})$ on~$W$ 
satisfies~\eqref{auuu1},~\eqref{auuu2},~\eqref{auuu3}. 

Let $p\in W$. 
Applying Lemma~\ref{fpsass} to the Taylor series of~$A$ at the point $p\in W$, we obtain the following.  
There are $b_0,\dots,b_n\in\fik$ such that the function 
${\tilde{A}=A-b_0-\sum_{l}b_ls^l}$ satisfies $\tilde{A}(p)=0$ and all partial derivatives of~$\tilde{A}$ vanish at~$p$. 
It remains to prove that $\tilde{A}(p')=0$ for any $p'\in W$.

Since $W$ is connected, there is a smooth map $\vf\cl[0,1]\to W$ 
such that ${\vf(0)=p}$ and ${\vf(1)=p'}$, 
where $[0,1]\subset\mathbb{R}$ is the unit interval.
Set
\begin{equation*}
\psi_0(y)=\tilde{A}\big(\vf(y)\big),
\quad\psi_i(y)=\frac{\pd\tilde{A}}{\pd{u^i}}\big(\vf(y)\big),
\quad i=1,\dots,n-1,\quad
\psi_n(y)=\frac{\pd^2\tilde{A}}{\pd{u^1}\pd{u^1}}\big(\vf(y)\big),
\quad y\in[0,1].
\end{equation*}

Since $A$ satisfies \eqref{auuu1}, \eqref{auuu2}, \eqref{auuu3}, \eqref{eq.appp}, 
\eqref{eq.apppp}, the function~$\tilde{A}$ obeys these PDEs as well.  
According to~\eqref{auuu1}, \eqref{auuu2}, \eqref{auuu3}, 
any third order derivative of~$\tilde{A}$ is expressed linearly 
in terms of lower order derivatives of~$\tilde{A}$.
Equations~\eqref{eq.appp}, \eqref{eq.apppp} say that any second order derivative of~$\tilde{A}$ is expressed linearly in terms of 
$\tilde{A}_{u^1},\dots,\tilde{A}_{u^{n-1}},\,\tilde{A}_{u^1u^1}$.

This implies that $\psi_0,\dots,\psi_n$ satisfy 
some linear ordinary differential equations 
\begin{equation}
\label{psiode}
\frac{d\psi_i}{dy}=\sum_{j=0}^{n}g_{ij}(y)\psi_j(y),
\qquad\qquad i=0,1,\dots,n.
\end{equation}
Since $\psi_j(0)=0$ for all $j=0,1,\dots,n$, 
equations~\eqref{psiode} imply ${\psi_j(1)=0}$. 
Hence $\tilde{A}(p')=\psi_0(1)=0$.
\end{proof}

Return to the proof of Proposition~\ref{accks}.

Consider the case when $A$ is a smooth function with values in~$\mcl$. 
That is, $A$ belongs to the tensor product 
$\mcl\otimes_\fik\fanl_0$, where $\fanl_0$ is the space 
of $\fik$-valued smooth functions in the variables $u^1,\dots,u^{n-1}$. 

There are linearly independent elements $E_1,\dots,E_q\in\mcl$ 
such that $A=\sum_{r=1}^q{E_r\otimes A^r}$ for some $A^r\in\fanl_0$.
Then $A$ satisfies PDEs~\eqref{auuu1}, \eqref{auuu2}, \eqref{auuu3} 
iff for all $r=1,\dots,q$ the function~$A^r$ obeys these PDEs. 
Then formula~\eqref{eq.X.nice.bis} follows 
from Lemma~\ref{smabs} applied to~$A^r$.

Finally, it remains to study the case when $A$ is 
a formal power series
\begin{equation*}
A=\sum_{i_1,\dots,i_{n-1}\ge 0} A_{i_1\dots i_{n-1}}
(u^1-a^1_0)^{i_1}\dots(u^{n-1}-a^{n-1}_0)^{i_{n-1}},\qquad\qquad A_{i_1\dots i_{n-1}}\in\mcl.
\end{equation*}
Denote by $V\subset\mcl$ the vector subspace spanned by 
$A_{j_1\dots j_{n-1}}$ for $j_1+\dots+j_{n-1}\le 2$.
 
Let $D_1,\dots,D_q$ be a basis of~$V$.
Equations~\eqref{auuu1},~\eqref{auuu2},~\eqref{auuu3} imply that 
$A_{i_1\dots i_{n-1}}\in V$ for all ${i_1,\dots,i_{n-1}}$.
Therefore, $A$ obeys~\eqref{auuu1},~\eqref{auuu2},~\eqref{auuu3} iff
$A$ is of the form ${A=\sum_{r=1}^q D_r\tilde{A}^r}$, 
where $\tilde{A}^r$ are power series with coefficients in~$\fik$ 
and satisfy~\eqref{auuu1},~\eqref{auuu2},~\eqref{auuu3}. 
Then formula~\eqref{eq.X.nice.bis} follows 
from Lemma~\ref{fpsass} applied to~$\tilde{A}^r$.
\end{proof}

Recall that the left-hand side of~\eqref{eq.covering.2} 
is a third degree polynomial in~$u^i_1$.
As we have shown above, 
the coefficients of~$u^{i_1}_1u^{i_2}_1u^{i_3}_1$ of this polynomial 
vanish iff $A$ is of the form~\eqref{eq.X.nice.bis}. 
Therefore, 
from now on we can assume that $A$ is given by~\eqref{eq.X.nice.bis}.

Substituting~\eqref{eq.X.nice.bis} in~\eqref{eq.F1.WE} and~\eqref{eq.covering.2}, one obtains that the coefficients of~$u^{i_1}_1u^{i_2}_1$ in~\eqref{eq.covering.2} vanish iff 
\begin{equation}\label{eq.algebra.1.bis}
[C_0,C_k]=0,\qquad\qquad k=1,\dots,n.
\end{equation}

Equating to zero the linear in~$u^j_1$ part of~\eqref{eq.covering.2}, 
we get 
\begin{equation}\label{eq.diff.of.F2}
\hlt_{u^j}=\frac{3}{2}\sum_{i,k=1}^{n}r_iC_k\big(s^i\big)^2 s^k_{u^j}  
- \sum_{i,m,k=1}^{n}[C_i,[C_m,C_k]]s^i s^m_{u^j} s^k.
\end{equation}
Recall that 
the subscripts~$u^j$ denote derivatives with respect to~$u^j$. 
So, $\hlt_{u^j}=\pd\hlt/\pd{u^j}$ and $s^k_{u^j}=\pd s^k/\pd {u^j}$.

Differentiating~\eqref{eq.diff.of.F2} with respect to~$u^h$, one obtains
\begin{equation}
\label{f2ujuh}
\hlt_{u^j u^h}=\frac32
\sum_{i,k=1}^{n}r_iC_k\Big(2s^i s^i_{u^h} s^k_{u^j} + \big(s^i\big)^2 s^k_{u^j u^h}\Big)
- \sum_{i,m,k=1}^{n} [C_i,[C_m,C_k]]\big(s^i_{u^h} s^m_{u^j} s^k + s^i s^m_{u^j u^h} s^k + s^i s^m_{u^j} s^k_{u^h}\big).
\end{equation}
Since $\hlt_{u^j u^h}=\hlt_{u^h u^j}$, equations~\eqref{f2ujuh} imply 
\begin{multline}
\label{sumsss}
\sum_{i,m,k=1}^{n} [C_i,[C_m,C_k]]
\big(s^i_{u^h} s^m_{u^j} s^k - s^i_{u^j} s^m_{u^h} s^k 
+ s^i s^m_{u^j} s^k_{u^h} - s^i s^m_{u^h} s^k_{u^j}\big)=\\
=3\sum_{i,k=1}^{n}r_iC_k
\big(s^i s^i_{u^h} s^k_{u^j} - s^i s^i_{u^j} s^k_{u^h}\big),\qquad
j,h=1,\dots,n-1.
\end{multline}
Substituting~\eqref{sp} in~\eqref{sumsss}, 
we obtain that equations~\eqref{sumsss} are equivalent to 
\begin{gather}
\label{cccijk0}
[C_i,[C_j,C_k]]=0,\quad\qquad i\neq j\neq k\neq i,
\quad\qquad i,j,k=1,\dots,n,
\\
\label{ccciik}
[C_i,[C_i,C_k]]-[C_j,[C_j,C_k]] = (r_j-r_i)C_k,
\qquad i\neq k,\qquad j\neq k,\qquad i,j,k=1,\dots,n.
\end{gather}

Set 
\begin{equation}
\label{lkdefin}
Y_1=[C_2,[C_2,C_1]]+r_2C_1,\qquad\quad
Y_m=[C_1,[C_1,C_m]]+r_1C_m,\qquad 
m=2,3,\dots,n.
\end{equation}
From~\eqref{ccciik},~\eqref{lkdefin} it follows that 
\begin{equation}
\label{ljcici} 
Y_j=[C_i,[C_i,C_j]]+r_iC_j\qquad\qquad
\forall\,i\neq j,\qquad\qquad i,j=1,\dots,n.
\end{equation}

Using~\eqref{cccijk0},~\eqref{ljcici}, and  
$\sum_i(s^i)^2=1$, $\sum_is^is^i_{u^j}=0$, 
we can rewrite~\eqref{eq.diff.of.F2} as 
\begin{equation}
\label{f2ujk1}
\hlt_{u^j}= \sum_{k=1}^{n}Y_ks_{u^{j}}^{k}+ 
\frac{1}{2}\sum_{i,k=1}^{n}r_iC_k\big(s^i\big)^2 s^k_{u^j} + 
\sum_{i,k=1}^{n} r_iC_k s^i s^i_{u^j} s^k,\quad\qquad j=1,\dots, n-1.
\end{equation}
Integrating equations~\eqref{f2ujk1} with respect to~$u^j$,
we see that $\hlt$ is of the form 
\begin{equation}\label{eq.F2.new.bis}
\hlt=\sum_{k=1}^n Y_ks^k+
\frac{1}{2}\sum_{i,k=1}^n r_iC_k s^k\big(s^i\big)^2+C_{n+1}\qquad 
\text{for some}\,\ 
C_{n+1}\in\mcl.
\end{equation}
 
Then equation~\eqref{eq.covering.2} reduces to $[A,\hlt]=0$. 
Using~\eqref{eq.X.nice.bis},~\eqref{eq.algebra.1.bis},~\eqref{eq.F2.new.bis}, 
one shows that the equation~$[A,\hlt]=0$ is equivalent to
\begin{equation}
\label{ccn1m1s}
[C_0,C_{n+1}] + \sum_{l=1}^n s^l[C_l,C_{n+1}] + \sum_{l,k=1}^n s^l s^k[C_l,Y_k]=0.
\end{equation}
To study equation~\eqref{ccn1m1s}, we need the following lemma.
\begin{lemma}
\label{lemmacpyq}
Recall that $n\ge 3$.
If $C_1,\dots,C_n\in\mcl$ satisfy~\eqref{cccijk0},~\eqref{ccciik} then
\begin{equation}
\label{cplq}
[C_p,Y_q]=-[C_q,Y_p],\qquad\qquad p,q=1,\dots,n.
\end{equation}
\end{lemma}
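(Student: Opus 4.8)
The plan is to prove the identity $[C_p,Y_q]=-[C_q,Y_p]$ by treating the off-diagonal case $p\neq q$ and the diagonal case $p=q$ separately, in both cases exploiting the hypothesis $n\ge 3$ to introduce an auxiliary index and then reducing everything to the Jacobi identity together with relations \eqref{cccijk0} and \eqref{ljcici} (the latter being the convenient reformulation of \eqref{ccciik} recorded just above).

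For $p\neq q$, since $n\ge 3$ I can choose an index $k\notin\{p,q\}$ and use \eqref{ljcici} in the form $Y_q=[C_k,[C_k,C_q]]+r_kC_q$ and $Y_p=[C_k,[C_k,C_p]]+r_kC_p$. Then $[C_p,Y_q]+[C_q,Y_p]$ equals $[C_p,[C_k,[C_k,C_q]]]+[C_q,[C_k,[C_k,C_p]]]$, since the terms $r_k[C_p,C_q]$ and $r_k[C_q,C_p]$ cancel. Applying the Jacobi identity to each triple bracket and using that $[C_p,[C_k,C_q]]=0$ and $[C_q,[C_k,C_p]]=0$ by \eqref{cccijk0} (the indices $p,q,k$ being pairwise distinct), these triple brackets become $[[C_p,C_k],[C_k,C_q]]$ and $[[C_q,C_k],[C_k,C_p]]$ respectively. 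Writing $X=[C_p,C_k]$ and $Z=[C_q,C_k]$, so that $[C_k,C_q]=-Z$ and $[C_k,C_p]=-X$, their sum is $-[X,Z]-[Z,X]=0$, which settles this case.

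For $p=q$ the statement asserts $[C_p,Y_p]=0$. Here I would first establish, for any $i\neq p$, the relation $[C_p,Y_p]=-[C_i,Y_i]$. This follows from the elementary Lie-algebra identity $[a,[b,[b,a]]]+[b,[a,[a,b]]]=0$ applied with $a=C_p$, $b=C_i$: by \eqref{ljcici} one has $[C_i,[C_i,C_p]]=Y_p-r_iC_p$ and $[C_p,[C_p,C_i]]=Y_i-r_pC_i$, so after bracketing these against $C_p$ and $C_i$ respectively the first term becomes $[C_p,Y_p]$ and the second becomes $[C_i,Y_i]$. Now, since $n\ge 3$, choose three pairwise distinct indices $p,i,j$; the relation just proved yields $[C_p,Y_p]=-[C_i,Y_i]=[C_j,Y_j]=-[C_p,Y_p]$, hence $2[C_p,Y_p]=0$ and therefore $[C_p,Y_p]=0$ (we are over $\fik$, of characteristic zero), and since $p$ was arbitrary this holds for every index.

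The only genuinely non-routine point is the diagonal case: the naive substitution leaves the triple bracket $[C_p,[C_i,[C_i,C_p]]]$, which is not visibly zero, and the resolution is the antisymmetry identity above combined with the three-term cycle made available precisely by the assumption $n\ge 3$. Everything else — the two Jacobi expansions and the bilinear cancellations — is straightforward, and the verification of the Lie-algebra identity $[a,[b,[b,a]]]+[b,[a,[a,b]]]=0$ is a short computation in the free Lie algebra on two generators.
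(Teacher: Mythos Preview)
Your proof is correct and follows essentially the same approach as the paper: for $p\neq q$ you make the identical choice of auxiliary index and the same Jacobi/\eqref{cccijk0} reduction, and for $p=q$ both arguments rest on the antisymmetry $[C_p,[C_i,[C_i,C_p]]]=-[C_i,[C_p,[C_p,C_i]]]$ together with a three-index cycle enabled by $n\ge 3$. Your packaging of the diagonal case as $[C_p,Y_p]=-[C_i,Y_i]$ is slightly cleaner than the paper's, which works with the triple brackets directly and invokes the auxiliary identity $[C_k,[C_i,[C_i,C_k]]]=[C_k,[C_j,[C_j,C_k]]]$; that identity becomes tautological once one rewrites everything in terms of the $Y$'s as you do.
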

\begin{proof}
Let $l\in\{1,\dots,n\}$ be such that 
$l\neq p$, $l\neq q$. 
By~\eqref{ljcici}, 
\begin{equation}\label{lplql}
Y_p=[C_l,[C_l,C_p]]+r_lC_p,\qquad\qquad
Y_q=[C_l,[C_l,C_q]]+r_lC_q.
\end{equation} 

Consider first the case $p\neq q$. 
Using the Jacobi identity and~\eqref{cccijk0}, we get  
$$
[C_p,[C_l,[C_l,C_q]]]=[[C_p,C_l],[C_l,C_q]]+[C_l,[C_p,[C_l,C_q]]]=
[[C_p,C_l],[C_l,C_q]],
$$
because $[C_p,[C_l,C_q]]=0$ by~\eqref{cccijk0}.
Similarly, one has $[C_q,[C_l,[C_l,C_p]]]=[[C_q,C_l],[C_l,C_p]]$.
Therefore,
\begin{multline*}
[C_p,Y_q]+[C_q,Y_p]=[C_p,[C_l,[C_l,C_q]]+r_l[C_p,C_q]+
[C_q,[C_l,[C_l,C_p]]+r_l[C_q,C_p]=\\
=[C_p,[C_l,[C_l,C_q]]]+[C_q,[C_l,[C_l,C_p]]]
=[[C_p,C_l],[C_l,C_q]]+[[C_q,C_l],[C_l,C_p]]=0.
\end{multline*}
Consider the case $p=q$. By~\eqref{lplql}, 
for $p=q$ equation~\eqref{cplq} is equivalent to 
\begin{equation}\label{cpcllcp}
[C_p,[C_l,[C_l,C_p]]]=0,
\end{equation}
so we need to prove~\eqref{cpcllcp}.
Applying $\mathrm{ad}\,C_k$ to~\eqref{ccciik}, we get
\begin{equation}\label{cccckiik} 
[C_k,[C_i,[C_i,C_k]]]=[C_k,[C_j,[C_j,C_k]]],
\qquad i\neq k,\qquad j\neq k.
\end{equation}
By the Jacobi identity,
\begin{equation}\label{ccjacob} 
[C_k,[C_i,[C_i,C_k]]]=[C_i,[C_k,[C_i,C_k]]]=-[C_i,[C_k,[C_k,C_i]]].
\end{equation}
Let $m\in\{1,\dots,n\}$ be such that $m\neq p$, $m\neq l$. 
Using~\eqref{cccckiik},~\eqref{ccjacob}, one obtains 
\begin{multline*}
[C_p,[C_l,[C_l,C_p]]]=[C_p,[C_m,[C_m,C_p]]]=-[C_m,[C_p,[C_p,C_m]]]=\\
=-[C_m,[C_l,[C_l,C_m]]]=[C_l,[C_m,[C_m,C_l]]]=[C_l,[C_p,[C_p,C_l]]].
\end{multline*}
On the other hand, by~\eqref{ccjacob},  
$[C_p,[C_l,[C_l,C_p]]]=-[C_l,[C_p,[C_p,C_l]]]$.
Therefore, we get~\eqref{cpcllcp}.
\end{proof}

Since $[C_l,Y_k]=-[C_k,Y_l]$ by Lemma~\ref{lemmacpyq}, 
one has $\sum_{l,k=1}^{n} s^l s^k[C_l,Y_k]=0$. 
Hence equation~\eqref{ccn1m1s} reads
\begin{equation}
\label{ccn1m1snew}
[C_0,C_{n+1}] + \sum_{l=1}^n s^l[C_l,C_{n+1}]=0.
\end{equation}
Since~$1,\,s^1,\,s^2,\dots,s^n$ are linearly independent, 
equation~\eqref{ccn1m1snew} is equivalent to
\begin{equation}\label{rel.F2.abelian}
[C_k,C_{n+1}]=0,\qquad\qquad k=0,1,2,\dots,n.
\end{equation}

Combining \eqref{sp}, \eqref{baf}, \eqref{eq.F1.WE}, \eqref{eq.X.nice.bis}, \eqref{lkdefin}, \eqref{eq.F2.new.bis}, 
one obtains
\begin{multline}\label{bfinal}
B=D_x^2(A)+[D_x(A),A]+\frac32\sum_{i,k=1}^nC_ks^k\big(D_x(s^i)\big)^2+
\frac{1}{2}\sum_{i,k=1}^n r_iC_k s^k\big(s^i\big)^2+\\
+\big([C_2,[C_2,C_1]]+r_2C_1\big)s^1
+\sum_{j=2}^{n} \big([C_1,[C_1,C_j]]+r_1C_j\big)s^j+C_{n+1}.
\end{multline}

Thus we get the following result. 
\begin{theorem}
\label{theor-wezcrll}
Suppose that $n\ge 3$. 
Let 
$$
A=A(u^1,\dots,u^{n-1}),\quad\qquad
B=B(u^1,\dots,u^{n-1},u^1_1,\dots,u^{n-1}_1,u^1_2,\dots,u^{n-1}_2)
$$
be either smooth functions with values in a Lie algebra~$\mcl$
or formal power series with coefficients in~$\mcl$.

Then $A$, $B$ satisfy the ZCR equation $D_x(B)-D_t(A)+[A,B]=0$
for system~\eqref{pt} if and only if $A$, $B$ are of the 
form~\eqref{eq.X.nice.bis},~\eqref{bfinal}, where $C_0,C_1,\dots,C_{n+1}\in\mcl$ 
obey~\eqref{eq.algebra.1.bis}, \eqref{cccijk0}, \eqref{ccciik}, \eqref{rel.F2.abelian} and the functions $s^i=s^i(u^1,\dots,u^{n-1})$ are given by~\eqref{sp}.
\end{theorem}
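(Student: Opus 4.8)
The plan is to assemble the chain of computations carried out above and to observe that, at each stage, the ZCR equation $D_x(B)-D_t(A)+[A,B]=0$ for~\eqref{pt} is \emph{equivalent} to the conjunction of a more restrictive ansatz for $A$, $B$ and a shorter list of Lie-algebraic relations. Iterating this, the ZCR equation becomes equivalent to the assertion that $A$, $B$ have the forms~\eqref{eq.X.nice.bis},~\eqref{bfinal} with $C_0,\dots,C_{n+1}\in\mcl$ subject to~\eqref{eq.algebra.1.bis},~\eqref{cccijk0},~\eqref{ccciik},~\eqref{rel.F2.abelian}. This simultaneously proves both directions of the theorem, so the proof of Theorem~\ref{theor-wezcrll} is mostly bookkeeping on top of the work already done.

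Concretely, I would proceed as follows. Differentiating the ZCR equation in $u^i_3$ shows that it is equivalent to: $B$ has the form~\eqref{baf} with $\fnt$ depending only on the $u^j$ and $u^j_1$, together with~\eqref{eq.covering.2}. Differentiating~\eqref{eq.covering.2} in $u^i_2$ and using~\eqref{glui2} gives~\eqref{fntu1}; integrating, this is equivalent to $\fnt$ having the form~\eqref{eq.F1.WE} with a remaining term $\hlt$ in the $u^j$ alone, plus the polynomial identity in the $u^i_1$ obtained by substituting~\eqref{eq.F1.WE} into~\eqref{eq.covering.2}. Its degree-three part is exactly~\eqref{auuu1}--\eqref{auuu3}, so by Proposition~\ref{accks} it forces $A$ to have the form~\eqref{eq.X.nice.bis}. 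Assuming this, the degree-two part is equivalent to~\eqref{eq.algebra.1.bis}, and the degree-one part to~\eqref{eq.diff.of.F2}. Imposing $\hlt_{u^ju^h}=\hlt_{u^hu^j}$ on~\eqref{eq.diff.of.F2} and substituting~\eqref{sp} is equivalent to~\eqref{cccijk0},~\eqref{ccciik}; then, introducing the elements $Y_k$ by~\eqref{lkdefin} (so that~\eqref{ljcici} holds), equation~\eqref{eq.diff.of.F2} rewrites as~\eqref{f2ujk1} and integrates to~\eqref{eq.F2.new.bis} with an integration constant $C_{n+1}\in\mcl$. Finally, the $u^i_1$-free part of~\eqref{eq.covering.2} is $[A,\hlt]=0$, equivalent to~\eqref{ccn1m1s}; applying Lemma~\ref{lemmacpyq} (which requires $n\ge3$) annihilates the term $\sum_{l,k}s^ls^k[C_l,Y_k]$, leaving~\eqref{ccn1m1snew}, equivalent by linear independence of $1,s^1,\dots,s^n$ to~\eqref{rel.F2.abelian}. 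Collecting~\eqref{sp},~\eqref{baf},~\eqref{eq.F1.WE},~\eqref{eq.X.nice.bis},~\eqref{lkdefin},~\eqref{eq.F2.new.bis} yields the form~\eqref{bfinal} of $B$, and running all the equivalences backwards gives the converse.

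The same argument must run uniformly in the two settings of the statement --- smooth $\mcl$-valued functions on a connected open subset of $\ess$, and formal power series with coefficients in $\mcl$ --- but every operation used (differentiation in the jet variables, integration, comparison of coefficients, linear independence of $1,s^1,\dots,s^n$ either as functions or as Taylor series, and Proposition~\ref{accks}) has already been set up in both cases, so no extra work is needed there. The real content has already been supplied: the main obstacle is Proposition~\ref{accks} --- the rigidity result that pins $A$ down to the form~\eqref{eq.X.nice.bis} --- together with Lemma~\ref{lemmacpyq}, which is precisely what forces the cross terms $[C_l,Y_k]$ to be antisymmetric so that the last relation collapses to the clean list. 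With these in hand, the theorem follows by the stepwise equivalences above.
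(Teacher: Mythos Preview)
Your proposal is correct and follows essentially the same approach as the paper: the theorem is stated there as a summary of the computations in Section~\ref{comput}, and your outline reproduces exactly that chain of equivalences (the $u^i_3$-differentiation yielding~\eqref{baf}, the $u^i_2$-differentiation and integration yielding~\eqref{eq.F1.WE}, the cubic, quadratic, linear, and constant parts in $u^i_1$ giving respectively Proposition~\ref{accks}, \eqref{eq.algebra.1.bis}, \eqref{eq.diff.of.F2}, and $[A,\hlt]=0$, with Lemma~\ref{lemmacpyq} used to reduce the last step). Your remark that each step is an equivalence, so both directions follow at once, is precisely how the paper uses these computations.
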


Theorem~\ref{theor-wezcrll} implies that system~\eqref{pt} satisfies 
the conditions of Theorem~\ref{wepoints}.
This allows us to give the following description of the WE algebra 
of~\eqref{pt}.

\begin{theorem}
\label{theor-wea}
Let $n\ge 3$. 
For any point $a\in\ess$, 
the WE algebra $\wea(a)$ of system~\eqref{pt} 
is isomorphic to the Lie algebra given by generators $p_0,p_1,\dots,p_{n+1}$ and the relations 
\begin{gather}
\label{p0pn1}
[p_0,p_l]=[p_{n+1},p_l]=[p_0,p_{n+1}]=0,\quad\qquad l=1,\dots,n,\\
\label{pppijk0}
[p_i,[p_j,p_k]]=0,\quad\qquad i\neq j\neq k\neq i,
\quad\qquad i,j,k=1,\dots,n,
\\
\label{pppiik}
[p_i,[p_i,p_k]]-[p_j,[p_j,p_k]] = (r_j-r_i)p_k,
\qquad i\neq k,\qquad j\neq k,\qquad i,j,k=1,\dots,n.
\end{gather}
The algebra $\wea(a)$ is isomorphic 
to the direct sum $\fik^2\oplus\mg(n)$. Here $\mg(n)$ 
is the subalgebra generated by $p_1,\dots,p_n$, 
and $\fik^2$ is the abelian subalgebra spanned by $p_0$, $p_{n+1}$.
\end{theorem}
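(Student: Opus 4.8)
The plan is to deduce Theorem~\ref{theor-wea} from Theorem~\ref{theor-wezcrll} together with the general machinery of Section~\ref{wea}; the genuinely hard computation has already been done in Theorem~\ref{theor-wezcrll}, so what remains is essentially bookkeeping. First I would verify that system~\eqref{pt} satisfies the hypotheses of Theorem~\ref{wepoints}. The set $\ess$ is connected (for $\fik=\mathbb{R}$ one has $\ess=\CE$, while for $\fik=\Com$ it is the complement of the hypersurface $1+\sum_i(u^i)^2=0$, which is irreducible since $n\ge 3$), and the functions $G^j$ are analytic on $\ess$. By Theorem~\ref{theor-wezcrll}, for every Lie algebra $\mcl$, every point of $\ess$, and every formal ZCR, the series $A$ is an $\mcl$-linear combination of the functions $1,s^1,\dots,s^n$, and, after expanding $D_x^2(A)$, $[D_x(A),A]$ and the remaining terms of~\eqref{bfinal}, the series $B$ is an $\mcl$-linear combination of a fixed finite list of polynomials in $s^i$, $D_x(s^i)$, $D_x^2(s^i)$; these functions are analytic on $\ess$ and do not depend on $\mcl$ or on the chosen ZCR. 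Hence the hypotheses of Theorem~\ref{wepoints} hold with $f_1=1,f_2=s^1,\dots,f_{n+1}=s^n$ and the $g_l$ the listed polynomials, so $\wea(a)\cong\ufl/\idz$ for every $a\in\ess$ (in particular independently of $a$), and by Remark~\ref{remgenwe} this algebra is presented by finitely many generators and relations.

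Next I would identify $\ufl/\idz$ with the Lie algebra $\mathfrak{P}$ defined by the presentation in the statement. The cleanest route is the universal property: $\wea(a)$ carries the formal ZCR~\eqref{zcrwea}, and by Proposition~\ref{tlhom}, together with the fact that the coefficients of~\eqref{zcrwea} generate $\wea(a)$, this ZCR is universal, i.e.\ for every Lie algebra $\mcl$ the assignment of the induced formal ZCR to a homomorphism $\wea(a)\to\mcl$ is a bijection onto the set of all formal ZCRs for~\eqref{pt} with coefficients in $\mcl$. By Theorem~\ref{theor-wezcrll} the latter set is in natural bijection with the set of tuples $(C_0,\dots,C_{n+1})\in\mcl^{n+2}$ satisfying~\eqref{eq.algebra.1.bis},~\eqref{cccijk0},~\eqref{ccciik},~\eqref{rel.F2.abelian}, which is exactly $\mathrm{Hom}_{\mathrm{Lie}}(\mathfrak{P},\mcl)$, since~\eqref{p0pn1} is the union of~\eqref{eq.algebra.1.bis} and~\eqref{rel.F2.abelian} while~\eqref{pppijk0},~\eqref{pppiik} coincide with~\eqref{cccijk0},~\eqref{ccciik}. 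Thus $\mathrm{Hom}(\wea(a),-)\cong\mathrm{Hom}(\mathfrak{P},-)$ naturally and $\wea(a)\cong\mathfrak{P}$ by Yoneda's lemma, with $p_i$ corresponding to $C_i$. Concretely this amounts to producing two mutually inverse homomorphisms: $\mathfrak{P}\to\wea(a)$ is obtained by applying Theorem~\ref{theor-wezcrll} to~\eqref{zcrwea} to get elements $C_i\in\wea(a)$ satisfying~\eqref{p0pn1},~\eqref{pppijk0},~\eqref{pppiik}, and $\wea(a)\to\mathfrak{P}$ is obtained by substituting $C_i=p_i$ into~\eqref{eq.X.nice.bis},~\eqref{bfinal} (reading $s^i$ as Taylor series at $a$) to get a formal ZCR with coefficients in $\mathfrak{P}$ and invoking Proposition~\ref{tlhom}. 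The point to be careful about here — and the main obstacle in this step — is checking that $\wea(a)$ is generated by $C_0,\dots,C_{n+1}$ and carries no relations beyond the listed ones, i.e.\ that the passage through $\ufl/\idz$ neither loses nor adds a relation; both follow from the linear independence of $1,s^1,\dots,s^n$ (so that the $C_i$ are recovered from the coefficients of~\eqref{zcrwea}) together with the ``only if'' direction of Theorem~\ref{theor-wezcrll}.

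Finally I would establish the splitting $\wea(a)\cong\fik^2\oplus\mg(n)$. By~\eqref{p0pn1} the elements $p_0$ and $p_{n+1}$ are central, so $\fik p_0+\fik p_{n+1}$ is a central ideal, $\wea(a)=\fik p_0+\fik p_{n+1}+\mg(n)$, and $[\fik p_0+\fik p_{n+1},\mg(n)]=0$; in particular $\mg(n)$ is an ideal. To see the sum is direct I would use the homomorphism $\wea(a)\to\fik^2$ onto the $2$-dimensional abelian Lie algebra given by $p_0\mapsto(1,0)$, $p_{n+1}\mapsto(0,1)$, $p_i\mapsto 0$ for $i=1,\dots,n$, which is well defined because all the relations~\eqref{p0pn1},~\eqref{pppijk0},~\eqref{pppiik} map to $0$ (the right-hand side $(r_j-r_i)p_k$ of~\eqref{pppiik} goes to $0$). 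This homomorphism kills $\mg(n)$ and sends $\fik p_0+\fik p_{n+1}$ isomorphically onto $\fik^2$, so $p_0,p_{n+1}$ are linearly independent and $(\fik p_0+\fik p_{n+1})\cap\mg(n)=0$. Therefore $\wea(a)$ is the direct sum of the central ideal $\fik^2=\fik p_0\oplus\fik p_{n+1}$ and the ideal $\mg(n)$, which is the assertion of the theorem. I do not expect a serious obstacle in this last part: it is the elementary splitting argument just described.
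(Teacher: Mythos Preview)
Your proposal is correct and follows essentially the same approach as the paper. The paper too constructs mutually inverse homomorphisms $\varphi\colon\mathfrak{H}\to\wea(a)$ (by applying Theorem~\ref{theor-wezcrll} to the universal formal ZCR~\eqref{zcrwea}) and $\psi\colon\wea(a)\to\mathfrak{H}$ (by substituting $C_i=p_i$ into~\eqref{eq.X.nice.bis},~\eqref{bfinal} and invoking Proposition~\ref{tlhom}); your Yoneda packaging of this as a natural bijection $\mathrm{Hom}(\wea(a),-)\cong\mathrm{Hom}(\mathfrak{P},-)$ is a cosmetic reformulation of the same argument, and your explicit passage through Theorem~\ref{wepoints} is what the paper means by ``similarly to the proof of Theorem~\ref{wepoints}.'' Your treatment of the splitting $\fik^2\oplus\mg(n)$ via the homomorphism $p_0\mapsto(1,0),\ p_{n+1}\mapsto(0,1),\ p_i\mapsto 0$ is more explicit than the paper's one-line remark, but amounts to the same observation.
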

\begin{proof}
Let $\mathfrak{H}$ be the Lie algebra given by 
generators $p_0,p_1,\dots,p_{n+1}$ and relations~\eqref{p0pn1},
\eqref{pppijk0}, \eqref{pppiik}. 
From~\eqref{p0pn1}, \eqref{pppijk0}, \eqref{pppiik} 
it follows that $\mathfrak{H}$ is isomorphic to $\fik^2\oplus\mg(n)$, 
where $\mg(n)$ is the subalgebra generated by $p_1,\dots,p_n$, 
and $\fik^2$ is the abelian subalgebra spanned by $p_0$, $p_{n+1}$.  

We are going to construct an isomorphism $\mathfrak{H}\cong\wea(a)$ 
similarly to the proof of Theorem~\ref{wepoints}.
In Section~\ref{defwea}, for any system~\eqref{sys}, 
we defined a formal ZCR with coefficients in the WE algebra of~\eqref{sys}. 
Let $A$, $B$ be the power series with coefficients in~$\wea(a)$ 
that determine this ZCR for system~\eqref{pt}. 

Applying Theorem~\ref{theor-wezcrll} to $\mcl=\wea(a)$, 
we obtain that $A$, $B$ are of the form~\eqref{eq.X.nice.bis},~\eqref{bfinal} 
for some elements $C_0,C_1,\dots,C_{n+1}\in\wea(a)$.
Since $C_0,C_1,\dots,C_{n+1}\in\wea(a)$ 
satisfy~\eqref{eq.algebra.1.bis}, \eqref{cccijk0}, \eqref{ccciik}, \eqref{rel.F2.abelian}, 
one has the homomorphism $\varphi\cl\mathfrak{H}\to\wea(a)$
given by $\varphi(p_i)=C_i$. 

On the other hand, by Theorem~\ref{theor-wezcrll}, the formulas $\tilde{A}=p_0+\sum_{l=1}^{n}p_ls^l$ 
and 
\begin{multline*}
\tilde{B}=
D_x^2\big(\tilde{A}\big)+\big[D_x\big(\tilde{A}\big),\tilde{A}\big]
+\frac32\sum_{i,k=1}^np_ks^k\big(D_x(s^i)\big)^2+
\frac{1}{2}\sum_{i,k=1}^n r_ip_k s^k\big(s^i\big)^2+\\
+\big([p_2,[p_2,p_1]]+r_2p_1\big)s^1
+\sum_{j=2}^{n} \big([p_1,[p_1,p_j]]+r_1p_j\big)s^j+p_{n+1}
\end{multline*}
determine a ZCR with values in $\mathfrak{H}$.  
Applying Proposition~\ref{tlhom} and Remark~\ref{remzcrhw} to this ZCR, 
we get a homomorphism $\psi\cl\wea(a)\to\mathfrak{H}$. 
It is easy to verify that the constructed homomorphisms 
$\varphi\cl\mathfrak{H}\to\wea(a)$ and $\psi\cl\wea(a)\to\mathfrak{H}$ 
are inverse to each other. 
\end{proof}

\begin{remark}\label{remhomwea}
Theorems~\ref{theor-wezcrll},~\ref{theor-wea} imply 
that any ZCR~\eqref{eq.X.nice.bis},~\eqref{bfinal}  
with values in a Lie algebra~$\mcl$ determines a homomorphism 
${\wea(a)\to\mcl}$ given by $p_i\mapsto C_i$.
\end{remark}

\section{The explicit structure of the WE algebra}
\label{sect_expl_str}

Let $\mg(n)$ be the Lie algebra given by generators $p_1,\dots,p_n$ 
and the relations 
\begin{gather}
\label{rel1}
[p_i,[p_j,p_k]]=0,\quad\qquad i\neq j\neq k\neq i,
\quad\qquad i,j,k=1,\dots,n,
\\
\label{rel2}
[p_i,[p_i,p_k]]-[p_j,[p_j,p_k]] = (r_j-r_i)p_k,
\qquad i\neq k,\qquad j\neq k,\qquad i,j,k=1,\dots,n.
\end{gather}

According to Theorem~\ref{theor-wea}, 
the WE algebra of system~\eqref{pt} is isomorphic to $\fik^2\oplus\mg(n)$.
To describe the explicit structure of~$\mg(n)$, 
we need some auxiliary constructions. 

Denote by $\mathfrak{gl}_{n+1}(\fik)$ the space of matrices of 
size $(n+1)\times(n+1)$ with entries from~$\fik$.
Let $E_{i,j}\in\mathfrak{gl}_{n+1}(\fik)$ be the matrix with 
$(i,j)$-th entry equal to 1 and all other entries equal to zero. 

The Lie subalgebra $\mathfrak{so}_{n,1}\subset\mathfrak{gl}_{n+1}(\fik)$ 
was defined in Section~\ref{detdesc}. 
It has the following basis
$$
E_{i,j}-E_{j,i},\qquad i<j\le n,\qquad\quad 
E_{l,n+1}+E_{n+1,l},\qquad l=1,\dots,n.
$$
From the results of~\cite{mll,skr-jmp} one can obtain the following 
$\mathfrak{so}_{n,1}$-valued ZCR for system~\eqref{main}
\begin{gather}
\label{M}
M=\sum_{i=1}^ns^i\la_i(E_{i,n+1}+E_{n+1,i}),\\
\label{N}
N=D_x^2(M)+[D_x(M),M]
+\Big(r_1+\la_1^2+\frac12\langle S,RS\rangle+\frac32\langle S_x,S_x\rangle\Big)M,\\
\notag
D_x(N)-D_t(M)+[M,N]=0.
\end{gather}
Here $\la_1,\dots,\la_n\in\fik$ are parameters satisfying~\eqref{curve}. 
If $S=(s^1,\dots,s^n)$ is given by formulas~\eqref{sp} then~\eqref{M},~\eqref{N} 
determine a ZCR for system~\eqref{pt}. 

Let us regard $\la_1,\dots,\la_n$ as abstract variables 
and consider the algebra $\fik[\la_1,\dots,\la_n]$ 
of polynomials in $\la_1,\dots,\la_n$. 
Let $\ipol\subset\fik[\la_1,\dots,\la_n]$ be the ideal  
generated by $\la_i^2-\la_j^2+r_i-r_j$ for $i,j=1,\dots,n$. 
 
Consider the quotient algebra $\qalg=\fik[\la_1,\dots,\la_n]/\ipol$.
If $\fik=\Com$ then $\qalg$ is isomorphic to the 
algebra of polynomial functions on the algebraic curve~\eqref{curve}. 

The space $\mathfrak{so}_{n,1}\otimes_\fik \qalg$ 
is an infinite-dimensional Lie algebra over $\fik$ 
with the Lie bracket 
$$
[M_1\otimes h_1,\,M_2\otimes h_2]=[M_1,M_2]\otimes h_1h_2,
\qquad\quad 
M_1,M_2\in \mathfrak{so}_{n,1},\qquad\quad h_1,h_2\in \qalg.
$$
We have the natural homomorphism 
$\xi\cl\fik[\la_1,\dots,\la_n]\to\fik[\la_1,\dots,\la_n]/\ipol=\qalg$. 
Set $\hat\la_i=\xi(\la_i)\in \qalg$. 

Formula~\eqref{M} suggests to study 
the following elements of ${\mathfrak{so}_{n,1}\otimes \qalg}$
\begin{equation}
\label{qie}
Q_i=(E_{i,n+1}+E_{n+1,i})\otimes\hat\la_i,
\qquad\qquad i=1,\dots,n.
\end{equation}
Denote by $L(n)\subset\mathfrak{so}_{n,1}\otimes \qalg$ the Lie subalgebra  
generated by~$Q_1,\dots,Q_n$. 

To construct a basis for $L(n)$, 
we need to describe some properties of~$\qalg$. 

Since $\hat{\la}_i^2-\hat\la_j^2+r_i-r_j=0$ in $\qalg$,
the element 
$\hat\la=\hat\la_i^2+r_i\in \qalg$ does not depend on $i$.  
\begin{lemma} 
\label{lemq} 
The elements 
\begin{equation}
\label{elemq}
{\hat\la}^k\hat\la_l,\qquad {\hat\la}^k\hat\la_i\hat\la_j,
\qquad i,j,l\in\{1,\dots,n\},\qquad 
i<j,\qquad k\in\zp,
\end{equation}
are linearly independent over $\fik$.
\end{lemma}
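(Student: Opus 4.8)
The plan is to produce an explicit $\fik$-vector-space decomposition of $\qalg$ in which the elements~\eqref{elemq} are visibly part of a basis.

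First I would shrink the generating set of the ideal $\ipol$. Setting $g_i=\la_i^2-\la_1^2+r_i-r_1$ for $i=2,\dots,n$ and adopting the convention $g_1:=0$, we have $\la_i^2-\la_j^2+r_i-r_j=g_i-g_j$ for all $i,j$, so $\ipol=(g_2,\dots,g_n)$. Writing $c_i=\la_1^2-r_i+r_1\in\fik[\la_1]$, so that $g_i=\la_i^2-c_i$, I would view $\qalg=\fik[\la_1][\la_2,\dots,\la_n]/(\la_2^2-c_2,\dots,\la_n^2-c_n)$. Since each $\la_i^2-c_i$ is monic of degree two in $\la_i$ over the base ring $\fik[\la_1]$, adjoining $\la_2,\dots,\la_n$ successively shows that $\qalg$ is a free $\fik[\la_1]$-module with basis the squarefree monomials $\hat\la_I:=\prod_{i\in I}\hat\la_i$, $I\subseteq\{2,\dots,n\}$. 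Hence $\qalg=\bigoplus_{I\subseteq\{2,\dots,n\}}\fik[\hat\la_1]\,\hat\la_I$ as a $\fik$-vector space (note that $\hat\la_1$ generates a polynomial subring $\fik[\hat\la_1]$, being the basis element indexed by $I=\emptyset$). Throughout I will identify $\hat\la=\hat\la_1^2+r_1$ with the degree-two polynomial $\hat\la_1^2+r_1\in\fik[\hat\la_1]$.

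Next I would locate each element~\eqref{elemq} in this decomposition. For $l=1$, $\hat\la^k\hat\la_1=\hat\la_1(\hat\la_1^2+r_1)^k$ lies in the summand $I=\emptyset$; for $l\ge 2$, $\hat\la^k\hat\la_l=(\hat\la_1^2+r_1)^k\hat\la_l$ lies in the summand $I=\{l\}$. For $i=1$ (so $j\ge 2$), $\hat\la^k\hat\la_1\hat\la_j=\hat\la_1(\hat\la_1^2+r_1)^k\hat\la_j$ lies in the summand $I=\{j\}$; for $2\le i<j$, $\hat\la^k\hat\la_i\hat\la_j=(\hat\la_1^2+r_1)^k\hat\la_i\hat\la_j$ lies in the summand $I=\{i,j\}$. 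In particular no element~\eqref{elemq} has a component in any summand with $|I|\ge 3$, and the summands that do occur receive the following $\fik[\hat\la_1]$-coefficients: the summand $I=\emptyset$ receives only $\hat\la_1(\hat\la_1^2+r_1)^k$, $k\in\zp$ (degrees $1,3,5,\dots$); a summand $I=\{m\}$ with $m\ge 2$ receives $(\hat\la_1^2+r_1)^k$ (degrees $0,2,4,\dots$) from the terms $\hat\la^k\hat\la_m$ together with $\hat\la_1(\hat\la_1^2+r_1)^k$ (degrees $1,3,5,\dots$) from the terms $\hat\la^k\hat\la_1\hat\la_m$; a summand $I=\{i,j\}$ with $2\le i<j$ receives only $(\hat\la_1^2+r_1)^k$ (degrees $0,2,4,\dots$).

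Finally I would take an arbitrary finite $\fik$-linear combination of the elements~\eqref{elemq} that equals zero and project it onto each summand $\fik[\hat\la_1]\hat\la_I$. In every case the coefficients that appear are polynomials in $\hat\la_1$ of pairwise distinct degrees (within a summand $I=\{m\}$, $m\ge 2$, the two families occupy the disjoint sets of even and odd degrees, so their union is again a family of distinct degrees), hence are linearly independent in $\fik[\hat\la_1]$. Therefore every coefficient of the original combination vanishes, which proves the lemma. I do not expect a genuine obstacle here: the only step requiring a little care is the justification of the free $\fik[\hat\la_1]$-module structure of $\qalg$ — the reduction to the generators $g_2,\dots,g_n$ and the iterated adjunction of the $\la_i$ — after which the argument is elementary bookkeeping with degrees in the single variable $\hat\la_1$.
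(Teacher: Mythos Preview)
Your proof is correct and takes a genuinely different, more structural route than the paper's. You exhibit $\qalg$ explicitly as a free $\fik[\hat\la_1]$-module of rank $2^{n-1}$ with basis the squarefree monomials in $\hat\la_2,\dots,\hat\la_n$, and then read off the linear independence of the elements~\eqref{elemq} by locating each in a single summand and noting that the $\fik[\hat\la_1]$-coefficients occurring in any one summand have pairwise distinct degrees. The paper instead assumes a vanishing linear combination, lifts it to a polynomial $\Psi\in\ipol$, applies the sign-change automorphisms $\rho_l\colon\la_l\mapsto-\la_l$ (which preserve $\ipol$) to separate $\Psi$ into monomial pieces lying in $\ipol$, and then evaluates at suitably chosen complex points of the curve~\eqref{curve} to derive a contradiction. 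Your approach yields more information---a full $\fik$-basis of $\qalg$---and works uniformly over $\fik=\mathbb{R}$ or $\Com$ without passing to complex evaluation points; the paper's approach is more hands-on, avoiding any appeal to the free-module structure of iterated quadratic extensions in exchange for explicit symmetry and point-evaluation arguments.
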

\begin{proof}
Suppose that some linear combination 
of the elements~\eqref{elemq} is zero in $\qalg$ 
\begin{equation}\label{lincomb}
\sum_{l,k}
a_{lk}{\hat\la}^k\hat\la_l+
\sum_{i,j,k,\ i<j}
b_{ijk}{\hat\la}^k\hat\la_i\hat\la_j=0,\quad\qquad a_{lk},b_{ijk}\in\fik,
\end{equation}
where only a finite number of the coefficients 
$a_{lk}$, $b_{ijk}$ may be nonzero.
Set 
$$
\Psi_1=\sum_{l,k} a_{lk}(\la_1^2+r_1)^k\la_l,\qquad\qquad
\Psi_2=\sum_{i,j,k,\ i<j}b_{ijk}(\la_1^2+r_1)^k\la_i\la_j,\qquad\qquad \Psi=\Psi_1+\Psi_2. 
$$
Since $\xi(\la_1^2+r_1)=\hat\la$, 
the left-hand side of~\eqref{lincomb} is equal to $\xi(\Psi)$. 
Hence~\eqref{lincomb} is equivalent to $\Psi\in \ipol$. 

For $l=1,\dots,n$, let $\rho_l$  
be the automorphism of the algebra $\fik[\la_1,\dots,\la_n]$ given by 
$\rho_l(\la_l)=-\la_l$ and $\rho_l(\la_i)=\la_i$ for all $i\neq l$.  
Obviously, $\rho_l(\ipol)=\ipol$.

One has $(\rho_1\rho_2\dots\rho_n)(\Psi)=\Psi-2\Psi_1$. 
Since $\Psi\in \ipol$, we obtain $\Psi_1\in \ipol$ and $\Psi_2=\Psi-\Psi_1\in \ipol$. 
Then the identity $\rho_l(\Psi_1)=\Psi_1-2\la_l\sum_k a_{lk}(\la_1^2+r_1)^k$ 
implies 
\begin{equation}\label{lamini}
\la_l\sum_k a_{lk}(\la_1^2+r_1)^k\in \ipol,\qquad\qquad l=1,\dots,n.
\end{equation}

We have also $\rho_m(\Psi_2)=\Psi_2-2\Phi_m$ for all $m=1,\dots,n$, where
$$
\Phi_m=\la_m\bigg(\sum_{i,k,\ i<m}b_{imk}(\la_1^2+r_1)^k\la_i
+\sum_{j,k,\ j>m}b_{mjk}(\la_1^2+r_1)^k\la_j\bigg). 
$$
Therefore, since $\Psi_2\in \ipol$, one gets $\Phi_m\in \ipol$.
Then the identity 
$$
\rho_i(\Phi_m)=\Phi_m-2\la_m\la_i\sum_kb_{imk}(\la_1^2+r_1)^k\qquad\qquad 
\forall\,i<m,
$$
yields
\begin{equation}
\label{lamlal}
\la_m\la_i\sum_k b_{imk}(\la_1^2+r_1)^k\in \ipol\qquad\qquad \forall\,i<m.
\end{equation}

Suppose that $a_{lk_0}\neq 0$ for some $l\in\{1,\dots,n\}$ and $k_0\in\zp$. 
Then there exists $c_1\in\Com$ such that 
\begin{equation}
\label{cccneq0}
\sum_k a_{lk}(c_1^2+r_1)^k\neq 0,\qquad\qquad
c_1^2+r_1-r_l\neq 0.
\end{equation}
Let $c_2,c_3,\dots,c_n\in\Com$
be such that $c_q^2=c_1^2+r_1-r_q$ for $q=2,3,\dots,n$.
Then $c_i^2-c_j^2+r_i-r_j=0$ for all $i,j=1,\dots,n$. 
Therefore, $P(c_1,\dots,c_n)=0$ 
for any polynomial $P(\la_1,\dots,\la_n)\in \ipol$. 

From~\eqref{cccneq0} we get 
$c_l\sum_k a_{lk}(c_1^2+r_1)^k\neq 0$, which contradicts to~\eqref{lamini}. 
Hence $a_{lk}=0$ for all $l,k$.

Similarly,~\eqref{lamlal} implies $b_{imk}=0$ for all $k$ and $i<m$. 
Thus we have proved that equation~\eqref{lincomb} yields $a_{lk}=b_{ijk}=0$. 
Therefore, the elements~\eqref{elemq} are linearly independent. 
\end{proof}

For $i,j\in\{1,\dots,n\}$ and $k\in\zsp$, 
consider the following elements of 
${\mathfrak{so}_{n,1}\otimes_\fik \qalg}$  
$$
Q^{2k-1}_i=(E_{i,n+1}+E_{n+1,i})\otimes\hat\la^{k-1}\hat\la_i,\qquad\qquad
Q^{2k}_{ij}=(E_{i,j}-E_{j,i})\otimes\hat\la^{k-1}\hat\la_i\hat\la_j. 
$$
For $i,j,l,m\in\{1,\dots,n\}$ and $k_1,k_2\in\zsp$ one has 
\begin{multline} 
\label{q1}
[Q^{2k_1}_{ij},\,Q^{2k_2}_{lm}]=\delta_{lj}Q^{2(k_1+k_2)}_{im} 
-\delta_{im}Q^{2(k_1+k_2)}_{lj}
+\delta_{jm}Q^{2(k_1+k_2)}_{li}-\delta_{il}Q^{2(k_1+k_2)}_{jm}+\\
+r_i\delta_{im}Q^{2(k_1+k_2-1)}_{lj}
-r_j\delta_{lj}Q^{2(k_1+k_2-1)}_{im}+r_i\delta_{il}Q^{2(k_1+k_2-1)}_{jm} 
-r_j\delta_{jm}Q^{2(k_1+k_2-1)}_{li},  
\end{multline}
\begin{equation} 
\label{q2}
[Q^{2k_1}_{ij},\,Q^{2k_2-1}_{l}]=\delta_{lj}Q^{2k_1+2k_2-1}_{i}
-\delta_{il}Q^{2k_1+2k_2-1}_{j}-r_j\delta_{lj}Q^{2k_1+2k_2-3}_{i}+r_i\delta_{il}Q^{2k_1+2k_2-3}_{j},
\end{equation}
\begin{equation} 
\label{q3}
[Q^{2k_1-1}_{i},\,Q^{2k_2-1}_{j}]=Q^{2(k_1+k_2-1)}_{ij},\qquad\qquad
[Q^{2k_1-1}_{i},\,Q^{2k_2-1}_{i}]=0.   
\end{equation}

Since $Q^1_i=Q_i$ and $Q^{2k}_{ij}=-Q^{2k}_{ji}$, 
from~\eqref{q1},~\eqref{q2},~\eqref{q3} we obtain that the elements 
\begin{equation}
\label{qlnelem}
Q^{2k-1}_l,\qquad\quad Q^{2k}_{ij},\qquad\quad i,j,l\in\{1,\dots,n\},\qquad\quad 
i<j,\qquad\quad k\in\zsp,
\end{equation}
span the Lie algebra $L(n)$. 
From Lemma~\ref{lemq}
it follows that the elements~\eqref{qlnelem} are linearly independent 
over $\fik$ and, therefore, form a basis of $L(n)$. 

For $k\in\zsp$ set $L_{2k-1}=\lspan\big\{ Q^{2k-1}_l\,\big|\,l=1,\dots,n\big\}$ 
and $L_{2k}=\lspan\big\{ Q^{2k}_{ij}\,\big|\,i,j=1,\dots,n\big\}$. 
Here and below, for elements $v_1,\dots,v_s$ of a vector space, 
the expression $\lspan\{v_1,\dots,v_s\}$ denotes the linear span 
of $v_1,\dots,v_s$ over $\fik$. 

Then from~\eqref{q1}, \eqref{q2}, \eqref{q3} one gets 
$L(n)=\bigoplus_{i=1}^\infty L_i$ and $[L_i, L_j]\subset  L_{i+j}+ L_{i+j-2}$.
Thus the Lie algebra $L(n)$ is quasigraded (almost graded) 
in the sense of~\cite{quasigraded,skr-jmp}. Note that 
the algebra $L(n)$ is very similar to infinite-dimensional 
Lie algebras that were studied in~\cite{skr,skr-jmp}. 

It is easy to check that $Q_i$ satisfy 
relations~\eqref{rel1},~\eqref{rel2}, 
if we replace $p_i$ by $Q_i$ in these relations.  
Therefore, one has the homomorphism 
\begin{equation}
\label{is}
\vf\cl\mg(n)\to L(n),\quad\qquad \vf(p_i)=Q_i,\quad\qquad i=1,\dots,n.
\end{equation}

\begin{theorem} 
\label{gnL}
For all $n\ge 3$, the homomorphism~\eqref{is} is an isomorphism. 
Thus $\mg(n)$ is isomorphic to $L(n)$.
\end{theorem}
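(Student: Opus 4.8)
The plan is to show that $\vf$ from \eqref{is} is bijective. Surjectivity is immediate: by construction $L(n)$ is generated by $Q_1,\dots,Q_n=\vf(p_1),\dots,\vf(p_n)$. So the whole problem is to prove that $\vf$ is injective, and for this I would compare two compatible filtrations. On $\mg(n)$ introduce the filtration $F_0=\{0\}\subset F_1\subset F_2\subset\cdots$, where $F_m$ is the linear span of all iterated brackets of $p_1,\dots,p_n$ of length at most $m$; thus $F_1=\lspan\{p_1,\dots,p_n\}$, $[F_i,F_j]\subset F_{i+j}$, and $\mg(n)=\bigcup_m F_m$. On $L(n)$ use the filtration $L^{(m)}=L_1\oplus\cdots\oplus L_m$ coming from the quasigrading $L(n)=\bigoplus_{i\ge 1}L_i$. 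Since $\vf(p_i)=Q_i=Q^1_i\in L_1$ and $[L_i,L_j]\subset L_{i+j}+L_{i+j-2}$, a bracket of length $\ell$ in the $Q_i$ lies in $L_\ell+L_{\ell-2}+\cdots$, so $\vf(F_m)\subset L^{(m)}$ for every $m$.

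Next I would prove that in fact $\vf(F_m)=L^{(m)}$ for all $m$, by induction on $m$, the case $m=1$ being clear. For the inductive step one uses the explicit bracket formulas \eqref{q1}--\eqref{q3}. By \eqref{q3}, $[Q^1_i,Q^{2k-1}_j]=Q^{2k}_{ij}$, so, once $Q^{2k-1}_j\in\vf(F_{2k-1})$ is known, every $Q^{2k}_{ij}$ lies in $\vf([F_1,F_{2k-1}])\subset\vf(F_{2k})$; and by \eqref{q2}, for $l=j\neq i$ one has $[Q^{2k-2}_{ij},Q^1_j]=Q^{2k-1}_i-r_jQ^{2k-3}_i$, so, since $Q^{2k-2}_{ij}$ and $Q^{2k-3}_i$ are already in $\vf(F_{2k-2})$, every $Q^{2k-1}_i$ lies in $\vf([F_{2k-2},F_1]+F_{2k-2})\subset\vf(F_{2k-1})$ (here $n\ge 2$ is used to pick $j\neq i$). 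This gives $L^{(m)}\subset\vf(F_m)$, hence equality. Consequently $\vf$ induces, for each $m$, a \emph{surjective} linear map $\overline{\vf}_m\colon F_m/F_{m-1}\to L^{(m)}/L^{(m-1)}=L_m$.

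The remaining point is the dimension estimate $\dim(F_m/F_{m-1})\le\dim L_m$, that is, $\dim(F_{2k-1}/F_{2k-2})\le n$ and $\dim(F_{2k}/F_{2k-1})\le n(n-1)/2$, these being the dimensions of $L_{2k-1}$ and $L_{2k}$ established earlier via Lemma~\ref{lemq}. This is exactly the content of the technical Lemma~\ref{lemma} (proved in Section~\ref{sec_ap}): working purely inside $\mg(n)$, one constructs from $p_1,\dots,p_n$ a family of elements $P^{2k-1}_l\in F_{2k-1}$ and $P^{2k}_{ij}=-P^{2k}_{ji}\in F_{2k}$, $k\in\zsp$, modelled on $Q^{2k-1}_l$, $Q^{2k}_{ij}$, and shows, by repeatedly feeding the defining relations \eqref{rel1}, \eqref{rel2} into iterated brackets (this is where $n\ge 3$ enters, through the Jacobi manipulations), that $F_{2k-1}$ is spanned modulo $F_{2k-2}$ by $P^{2k-1}_1,\dots,P^{2k-1}_n$ and $F_{2k}$ is spanned modulo $F_{2k-1}$ by the $P^{2k}_{ij}$ with $i<j$. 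I expect this spanning/reduction argument to be the main obstacle: it is the combinatorial heart of the theorem and the one place where the relations of $\mg(n)$ must be exploited in full.

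Once the estimate is available the proof closes quickly. For each $m$ the map $\overline{\vf}_m\colon F_m/F_{m-1}\to L_m$ is surjective and $\dim(F_m/F_{m-1})\le\dim L_m<\infty$, so $\overline{\vf}_m$ is an isomorphism. If $x\in\ker\vf$ were nonzero, let $m$ be least with $x\in F_m$; then $x\notin F_{m-1}$, so the class of $x$ in $F_m/F_{m-1}$ is nonzero, while $\overline{\vf}_m$ sends it to the class of $\vf(x)=0$, contradicting injectivity of $\overline{\vf}_m$. Hence $\ker\vf=0$, and, together with surjectivity, $\vf\colon\mg(n)\to L(n)$ is an isomorphism; by Theorem~\ref{theor-wea} this also identifies the nonabelian part of the WE algebra of~\eqref{pt}.
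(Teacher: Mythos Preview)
Your proposal is correct and follows essentially the same approach as the paper: compare compatible filtrations on $\mg(n)$ and $L(n)$, use the explicit basis~\eqref{qlnelem} of $L(n)$ to compute the target dimensions, and invoke Lemma~\ref{lemma} for the spanning estimate $\dim(F_m/F_{m-1})\le\dim L_m$ in $\mg(n)$. The only cosmetic difference is that the paper defines the filtration on $L(n)$ intrinsically (via iterated brackets of the $Q_i$), so that $\vf(\mg^m)=L^m$ is automatic, and then identifies $L^m$ with your $L^{(m)}=L_1\oplus\cdots\oplus L_m$ by exhibiting a basis; you instead start from the quasigraded filtration $L^{(m)}$ and prove $\vf(F_m)=L^{(m)}$ by the inductive bracket computation. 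These are the same content. One small precision: Lemma~\ref{lemma} as stated in the paper assumes $n\ge 4$; the case $n=3$ of the dimension bound is taken from~\cite{ll} rather than from the Jacobi manipulations you allude to.
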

\begin{proof}
In the case $n=3$ this was proved in~\cite{ll}
for a different matrix representation of $L(3)$.

Define a filtration on $L(n)$ by vector subspaces $L^m\subset L(n)$ 
for $m\in\zp$ as follows 
\begin{equation*}
L^0=0,\qquad\quad
L^1=\lspan\{Q_1,\dots,Q_n\},\quad\qquad 
L^m=L^1+\sum_{i,j>0,\ i+j\le m}[L^i,L^j]\qquad\text{for}\,\ m>1. 
\end{equation*}
One has $L^m\subset L^{m+1}$ for all $m\in\zp$ and 
$L(n)=\bigcup_m L^m$. 

Since the elements~\eqref{qlnelem} are linearly independent, 
from~\eqref{q1},~\eqref{q2},~\eqref{q3} it follows that for all $q\in\zsp$ 
\begin{itemize}
\item the elements $Q^{2d-1}_l$, $Q^{2d}_{ij}$, 
$i,j,l\in\{1,\dots,n\}$, $i<j$, $1\le d\le q$,  
form a basis of $L^{2q}$,
\item $Q^{2d_1-1}_l$, $Q^{2d_2}_{ij}$, 
$i,j,l\in\{1,\dots,n\}$, $i<j$, $1\le d_1\le q$, 
$1\le d_2\le q-1$, form a basis of $L^{2q-1}$. 
\end{itemize}
This implies for all $m>0$ 
\begin{equation}
\label{L/L}
\dim\big(L^m/L^{m-1}\big)=
\left\{
\begin{array}{c}
n,\quad\text{if $m$ is odd},\\
{n(n-1)}/{2},\quad\text{if $m$ is even}.
\end{array}
\right.
\end{equation}

Consider a similar filtration on $\mg(n)$ by vector subspaces $\mg^m\subset\mg(n)$ 
\begin{equation*}
\mg^0=0,\qquad\quad
\mg^1=\lspan\{p_1,\dots,p_n\},\qquad\quad 
\mg^m=\mg^1+\sum_{i,j>0,\ i+j\le m}[\mg^i,\mg^j]\qquad\text{for}\,\ m>1.
\end{equation*}
Clearly, 
\begin{equation}
\label{gL}
\vf(\mg^m)=L^m\qquad\qquad\forall\,m\in\zp.
\end{equation} 
Combining~\eqref{gL} with~\eqref{L/L}, 
we see that it remains to prove for all $m>0$
\begin{equation}
\label{g/g}
\dim\big(\mg^m/\mg^{m-1}\big)\le
\left\{
\begin{array}{c}
n,\quad\text{if $m$ is odd},\\
{n(n-1)}/{2},\quad\text{if $m$ is even}.
\end{array}
\right.
\end{equation} 
Indeed, if~\eqref{g/g} holds then properties~\eqref{L/L},~\eqref{gL} 
imply that $\vf$ is an isomorphism. 

For $n=3$ the statement~\eqref{g/g} was proved in~\cite{ll}. 
Below we suppose $n\ge 4$. 
For $k\in\zsp$, set 
\begin{gather*}
P^{2k}_{ij}=(\ad p_i)^{2k-1}(p_j),\qquad\qquad 
i,j=1,\dots,n,\\
P^{2k-1}_{1}=(\ad p_2)^{2k-2}(p_1),\qquad\qquad
P^{2k-1}_{l}=(\ad p_1)^{2k-2}(p_l),\qquad l=2,3,\dots,n.
\end{gather*}

We will use the following notation for iterated Lie brackets 
of elements of $\mg(n)$  
\begin{equation}\label{iterated}
[e_1\,e_2\,\dots e_{s-1}\,e_s]=[e_1,[e_2,\dots,[e_{s-1},e_s]]\dots],
\qquad\quad e_1,\dots,e_s\in\mg(n).
\end{equation}
In such Lie brackets, 
for brevity we replace each $p_i$ by the corresponding index $i$. 
For example, 
\begin{gather}
\label{iijjklk}
[ii[jjk]lk]=[p_i,[p_i,[[p_j,[p_j,p_k]],[p_l,p_k]]]],\qquad\qquad 
P^{2k}_{ij}=[\ub{i\dots i}_{2k-1}j], \\
\label{PPPub} 
P^{2k-1}_{1}=[\ub{2\dots 2}_{2k-2}1],\qquad\qquad 
P^{2k-1}_{l}=[\ub{1\dots 1}_{2k-2}l],\qquad l=2,3,\dots,n. 
\end{gather}

For $V_1,V_2\in\mg(n)$ and $m\in\zp$, the notation 
\begin{equation}\label{modmgm}
V_1\equiv V_2\quad\mod\mg^m
\end{equation}
means that $V_1-V_2\in\mg^m$. 
The following lemma is proved in Section~\ref{sec_ap}.

\begin{lemma}[Section~\ref{sec_ap}] 
\label{lemma}
Let $n\ge 4$. 
Let $i,j,i',j'$ be distinct integers from $\{1,\dots,n\}$. 
Then for all $k_1,k_2\in\zp$ one has 
\begin{gather}
\label{PP0} 
[[\ub{i\dots i}_{2k_1}j][\ub{i\dots i}_{2k_2}j]]\equiv 0,\qquad  
\text{in particular},\quad 
[P^{2k_1+1}_j,P^{2k_2+1}_j]\equiv 0 
\mod\mg^{2k_1+2k_2+1},\\ 
\label{PP1}
P^{2(k_1+k_2+1)}_{ij}\equiv
 -P^{2(k_1+k_2+1)}_{ji}\quad\mod\mg^{2k_1+2k_2+1},\\ 
\label{PP2}
[P^{2k_1}_{ij},P^{2k_2+2}_{ij}]\equiv 0
\quad\mod\mg^{2k_1+2k_2+1}
\qquad\text{for}\quad k_1\ge 1,\\ 
\label{PP3}
[P^{2k_1+1}_i,P^{2k_2+1}_{j}]\equiv P^{2(k_1+k_2+1)}_{ij}
\quad\mod\mg^{2k_1+2k_2+1},\\
\label{PP4}
[P^{2k_1+1}_i,P^{2k_2+2}_{ij}]\equiv P^{2(k_1+k_2)+3}_{j}
\quad\mod\mg^{2k_1+2k_2+2},\\ 
\label{PP5}
[P^{2k_1+1}_{i},P^{2k_2+2}_{i'j'}]\equiv 0\quad\mod\mg^{2k_1+2k_2+2},\\ 
\label{PP6}
[P^{2k_1}_{ij},P^{2k_2+2}_{i'j'}]\equiv 0
\quad\mod\mg^{2k_1+2k_2+1}
\qquad\text{for}\quad k_1\ge 1,\\ 
\label{PP7}
[P^{2k_1}_{ij},P^{2k_2+2}_{ij'}]\equiv -P^{2(k_1+k_2+1)}_{jj'}
\mod\mg^{2k_1+2k_2+1}
\quad\text{for}\ 
k_1\ge 1.
\end{gather}
\end{lemma}

From Lemma~\ref{lemma}, by induction on $k\in\zsp$, we obtain that 
\begin{itemize}
\item the elements $P^{2d-1}_l$, $P^{2d}_{ij}$, 
$i,j,l\in\{1,\dots,n\}$, $i<j$, $1\le d\le k$, 
span the space $\mg^{2k}$,
\item 
$P^{2d_1-1}_l$, $P^{2d_2}_{ij}$, 
$i,j,l\in\{1,\dots,n\}$, $i<j$, $1\le d_1\le k$, 
$1\le d_2\le k-1$, span the space $\mg^{2k-1}$,
\end{itemize}
which implies~\eqref{g/g}.
\end{proof}

\begin{remark}
Clearly, formulas~\eqref{M},~\eqref{N} can be regarded as a ZCR 
with values in the Lie algebra~$L(n)$. 
Then formula~\eqref{M} becomes $M=\sum_{i=1}^n s^iQ_i$, 
where $Q_i\in L(n)$ is given by~\eqref{qie}. 
The homomorphism~\eqref{is} corresponds to this ZCR by Remark~\ref{remhomwea}.
\end{remark}

\section{Miura type transformations} 
\label{miura}

The definition of Miura type transformations (MTTs) was given 
in Section~\ref{detdesc}. 
In the present section we assume that all functions take values in~$\Com$.

Since the matrices~\eqref{M},~\eqref{N} form a ZCR for~\eqref{main}, 
the following system is compatible modulo~\eqref{main}
\begin{equation}\label{wxwtmn}
W_x=M^{\mathrm{T}}\cdot W,\qquad\qquad W_t=N^{\mathrm{T}}\cdot W,
\end{equation}
where $W=\big(w^1(x,t),\dots,w^{n+1}(x,t)\big)$ 
is a column-vector of dimension $n+1$ 
and $M^{\mathrm{T}}$, $N^{\mathrm{T}}$  
are the transposes of the matrices $M$, $N$ given by~\eqref{M},~\eqref{N}. 

Using~\eqref{M},~\eqref{N}, we see that equations~\eqref{wxwtmn} read  
\begin{gather}\label{vlas}
w^i_x=\la_is^iw^{n+1},\qquad\qquad i=1,\dots,n,\qquad\qquad
w^{n+1}_x=\sum_{j=1}^n\la_js^jw^i,\\
\label{vitlas}
\begin{split}
w^i_t=\la_iw^{n+1}\bigg(s^i_{xx}+s^i\Big(r_1+\la_1^2 
&+\frac12\langle S,RS\rangle+\frac32\langle S_x,S_x\rangle\Big)\bigg)+\\
&+\sum_{j=1}^n\la_i\la_jw^j\big(s^j_xs^i-s^i_xs^j\big),\qquad i=1,\dots,n,
\end{split}\\
\label{vn1tlas}
w^{n+1}_t=\sum_{j=1}^n\la_j w^j\bigg(s^j_{xx}+s^j\Big(r_1+\la_1^2 
+\frac12\langle S,RS\rangle +\frac32\langle S_x,S_x\rangle\Big)\bigg).
\end{gather}
Here $\la_1,\dots,\la_n\in\Com$ are parameters satisfying~\eqref{curve}.
In this section we assume $\la_i\neq 0$ for all~$i$.

To construct MTTs for~\eqref{main},
we are going to use some reduction of system~\eqref{vlas},  
\eqref{vitlas}, \eqref{vn1tlas}.
Equations~\eqref{vlas}, \eqref{vitlas}, \eqref{vn1tlas} imply  
\begin{equation}\label{pdvv0}
\frac{\pd}{\pd x}\bigg(\big(w^{n+1}\big)^2-\sum_{i=1}^n\big(w^i\big)^2\bigg)=0,
\qquad\qquad
\frac{\pd}{\pd t}
\bigg(\big(w^{n+1}\big)^2-\sum_{i=1}^n\big(w^i\big)^2\bigg)=0.
\end{equation}
Therefore, we can impose the constraint 
\begin{equation}\label{w2w2}
\big(w^{n+1}\big)^2=\sum_{i=1}^n\big(w^i\big)^2.
\end{equation}

Set 
\begin{equation}\label{wvi} 
\mtf^i={w^i}/{w^{n+1}},\qquad\qquad i=1,\dots,n. 
\end{equation}
From~\eqref{w2w2},~\eqref{wvi} one gets
\begin{equation}\label{vi21}
\sum_{i=1}^n\big(\mtf^i\big)^2=1.
\end{equation}

Since $\mtf^i=w^i/w^{n+1}$, 
one has $\mtf^i_x=w^i_x/w^{n+1}-\mtf^iw^{n+1}_x/w^{n+1}$.  
Combining this with~\eqref{vlas}, we get 
\begin{equation}\label{wixlas}
\mtf^i_x=\la_is^i-\mtf^i\sum_{j=1}^n\la_js^j\mtf^j,\qquad\qquad i=1,\dots,n.
\end{equation}
Similarly, using the formula $\mtf^i_t=w^i_t/w^{n+1}-\mtf^iw^{n+1}_t/w^{n+1}$ and equations~\eqref{vitlas},~\eqref{vn1tlas}, one obtains 
\begin{multline}\label{witlas}
\mtf^i_t=\la_is^i_{xx}+\sum_{j=1}^n\la_i\la_j\mtf^j\big(s^j_xs^i-s^i_xs^j\big)
-\mtf^i\sum_{j=1}^n\la_j\mtf^js^j_{xx}+\\
+\Big(r_1+\la_1^2 
+\frac12\langle S,RS\rangle +\frac32\langle S_x,S_x\rangle\Big)
\Big(\la_is^i-\mtf^i\sum_{j=1}^n\la_j\mtf^js^j\Big),\qquad
i=1,\dots,n.
\end{multline}

Using equations~\eqref{wixlas}, 
we want to express (at least locally) the functions $s^i$ 
in terms of $\la_j$, $\mtf^j$, $\mtf^j_x$.

Locally one can resolve the constraint $\sum_{j=1}^n (s^j)^2=1$ 
by taking $s^k=\sqrt{1-\sum_{i\neq k}(s^i)^2}$ 
for some $k\in\{1,\dots,n\}$. 
Here and below, we choose a suitable branch of the multivalued 
function $\sqrt{1-\sum_{i\neq k}(s^i)^2}$. 
For simplicity of notation, assume $k=n$. 
(The case $k\neq n$ can be studied analogously.) 
Then $s^n=\sqrt{1-\sum_{j=1}^{n-1}(s^j)^2}$.

Similarly, 
on a neighborhood of the point $\mtf^1=\mtf^2=\dots=\mtf^{n-1}=0$, $\mtf^n=1$, 
equation~\eqref{vi21} is equivalent to 
$\mtf^n=\sqrt{1-\sum_{j=1}^{n-1}(\mtf^j)^2}$, and system~\eqref{wixlas} becomes 
\begin{equation}\label{win1}
\mtf^i_x=\la_is^i-\mtf^i\sum_{j=1}^{n-1}\la_js^j\mtf^j-\mtf^i\la_n
\sqrt{1-\sum_{j=1}^{n-1}(s^j)^2}
\sqrt{1-\sum_{j=1}^{n-1}(\mtf^j)^2},\qquad\qquad i=1,\dots,n-1.
\end{equation}
Denote by $a^i=a^i(\la_1,\dots,\la_n,\mtf^1,\dots,\mtf^{n-1},s^1,\dots,s^{n-1})$ the right-hand side of~\eqref{win1}. 

For $\mtf^1=\dots=\mtf^{n-1}=0$ we have $\dfrac{\pd a^i}{\pd s^j}=\delta_{ij}\la_i$.
Recall that $\la_i\neq 0$. Therefore, 
by the implicit function theorem, 
on a neighborhood of the point $\mtf^1=\dots=\mtf^{n-1}=0$ from equations~\eqref{win1} we can express 
\begin{equation}
\label{uc-new}
s^i=R^i(\la_1,\dots,\la_n,\mtf^j,\mtf^j_x),\qquad\qquad i=1,\dots,n-1.
\end{equation} 
Combining~\eqref{uc-new} with the formula $s^n=\sqrt{1-\sum_{j=1}^{n-1}(s^j)^2}$, one gets 
\begin{equation}\label{snrj} 
s^n=
\sqrt{1-\sum_{j=1}^{n-1}\Big(R^j(\la_1,\dots,\la_n,\mtf^j,\mtf^j_x)\Big)^2}.
\end{equation}
Substituting~\eqref{uc-new},~\eqref{snrj} to~\eqref{witlas}, 
we obtain an evolution system of the form 
\begin{equation}
\label{wih}
\mtf^i_t=P^i(\la_1,\dots,\la_n,
\mtf^j,\mtf^j_x,\mtf^j_{xx},\mtf^j_{xxx}),\qquad\quad i=1,\dots,n,\qquad\qquad 
\sum_{i=1}^n(\mtf^i)^2=1.
\end{equation}
System~\eqref{wih} is connected with~\eqref{main} 
by the Miura type transformation~\eqref{uc-new},~\eqref{snrj}. 

Note that for system~\eqref{main} many solutions are known~\cite{ll-backl}. 
Therefore, it makes sense to describe 
how to construct solutions for~\eqref{wih} from solutions of~\eqref{main}.

Recall that~\eqref{wih} is obtained from~\eqref{vi21},~\eqref{wixlas},~\eqref{witlas} 
by eliminating $s^i$. 
Hence we need to describe solutions $\mtf^i$ 
of system~\eqref{vi21},~\eqref{wixlas},~\eqref{witlas} 
for a given solution $s^1,\dots,s^n$ of~\eqref{main}. 
We can use the fact that system~\eqref{vi21},~\eqref{wixlas},~\eqref{witlas} is obtained by 
the reduction~\eqref{w2w2},~\eqref{wvi} of~\eqref{wxwtmn}.

So let us fix 
a solution $S=\big(s^1(x,t),\dots,s^n(x,t)\big)$ of~\eqref{main}.
Then system~\eqref{wxwtmn} is compatible and is equivalent 
to a system of linear ordinary differential equations (ODEs). 
Indeed, one can first solve $W_x=M^{\mathrm{T}}\cdot W$ 
as an ODE with respect to~$x$, treating~$t$ as a parameter. 
Then one can substitute the obtained solution 
to the equation $W_t=M^{\mathrm{T}}\cdot W$, 
which is an ODE with respect to~$t$.

Suppose that the functions $s^i(x,t)$ are defined 
on a neighborhood of a point $(x_0,t_0)$. 
For any $z_1,\dots,z_{n+1}\in\Com$ satisfying 
\begin{equation}
\label{z2z2z}
(z_{n+1})^2=\sum_{i=1}^n(z_i)^2,\qquad\qquad z_{n+1}\neq 0,
\end{equation} 
consider the solution $w^1,\dots,w^{n+1}$ 
of the linear system~\eqref{wxwtmn} with the initial condition 
$w^j(x_0,t_0)=z_j$. 

From~\eqref{pdvv0},~\eqref{z2z2z} 
it follows that $w^j$ obey~\eqref{w2w2}. 
Since $w^{n+1}(x_0,t_0)=z_{n+1}\neq 0$, one has $w^{n+1}(x,t)\neq 0$ 
on some neighborhood of $(x_0,t_0)$. 
Then $\mtf^i(x,t)$ given by~\eqref{wvi} 
satisfy~\eqref{vi21},~\eqref{wixlas}, \eqref{witlas}.

For example, suppose that $s^i$ 
are constant, i.e., do not depend on $x$, $t$.  
Then $S=(s^1,\dots,s^n)$ is a constant solution of~\eqref{main}.
Since $s^i_x=0$, from~\eqref{M},~\eqref{N} 
we see that equations~\eqref{wxwtmn} read  
\begin{gather}
\notag
W_x=\tilde MW,\qquad W_t=\tilde NW,\qquad 
\tilde M=\sum_{i=1}^ns^i\la_i\big(E_{i,n+1}+E_{n+1,i}\big),\qquad
\tilde N=\Big(r_1+\la_1^2+\frac12\langle S,RS\rangle\Big)\tilde M.
\end{gather}
Since $[\tilde M,\tilde N]=0$ and 
the matrices $\tilde M$, $\tilde N$ do not depend on $x$, $t$, one has 
$$
W=\mathrm{e}^{(x-x_0)\tilde M+(t-t_0)\tilde N}\cdot Z,
$$ 
where $Z=(z_1,\dots,z_{n+1})$ is a column-vector satisfying~\eqref{z2z2z}. 
The corresponding functions $\mtf^i(x,t)$ are given by~\eqref{wvi}, 
where $w^i$ are the components of the vector $W$.

\begin{remark}
\label{mttvector}
It is well known that 
vector field representations of the WE algebra of an evolution PDE 
often lead to B\"acklund transformations. 
Let us show that 
the Miura type transformations constructed above 
correspond to some vector field representations of the WE 
algebra of~\eqref{main}.

The constructed MTTs are determined by system~\eqref{wixlas},~\eqref{witlas}, which is compatible modulo~\eqref{main}.
Let ${\tilde{a}}^i(\la_l,\mtf^l,s^l)$
be the right-hand side of~\eqref{wixlas} 
and ${\tilde{b}}^i(\la_l,\mtf^l,s^l,s^l_x,s^l_{xx})$
be the right-hand side of~\eqref{witlas}. Set 
\begin{equation}\label{abvf}
A=\sum_{i=1}^n
{\tilde{a}}^i(\la_l,\mtf^l,s^l)\frac{\pd}{\pd v^i},
\qquad\qquad
B=\sum_{i=1}^n{\tilde{b}}^i(\la_l,\mtf^l,s^l,s^l_x,s^l_{xx})
\frac{\pd}{\pd v^i}.
\end{equation}
Then compatibility of system~\eqref{wixlas},~\eqref{witlas} is equivalent to 
the equation
\begin{equation}\label{zcrvf}
D_x(B)-D_t(A)+[A,B]=0,
\end{equation}
where $D_x$, $D_t$ are the total derivative operators corresponding 
to system~\eqref{main}.

Let $\mathfrak{D}$ be the Lie algebra of vector fields 
on the space~$\Com^n$ with coordinates $\mtf^1,\dots,\mtf^n$. 
That is, $\mathfrak{D}$ consists of vector fields of the form 
$\sum_{i=1}^n h^i(\mtf^1,\dots,\mtf^n)\dfrac{\pd}{\pd \mtf^i}$. 

Equation~\eqref{zcrvf} says that formulas~\eqref{abvf} can be regarded as a ZCR 
with values in~$\mathfrak{D}$. By Remark~\ref{remhomwea}, 
this ZCR determines a homomorphism from the WE algebra of~\eqref{main} 
to~$\mathfrak{D}$. The homomorphism is given by 
\begin{equation}\label{hompv}
p_0\mapsto 0,\quad\qquad p_{n+1}\mapsto 0,\qquad\quad
p_j\,\mapsto\, 
\la_j\frac{\pd}{\pd \mtf^j}-\la_j
\mtf^j\sum_{i=1}^n\mtf^i\frac{\pd}{\pd \mtf^i},
\qquad\quad j=1,\dots,n,
\end{equation}
where $p_0,p_1,\dots,p_{n+1}$ are the generators of the WE algebra 
described in Theorem~\ref{theor-wea}.
Note that the vector fields~\eqref{hompv} are tangent 
to the submanifold given by equation~\eqref{vi21}.

\end{remark}

\section{Proof of Lemma~\ref{lemma}}
\label{sec_ap}

We prove Lemma~\ref{lemma} by induction on $k_1+k_2$. 
For $k_1+k_2=0$ (that is, $k_1=k_2=0$) 
the statements of Lemma~\ref{lemma} 
follow easily  from~\eqref{rel1},~\eqref{rel2}. 
Let $m\in\zp$ be such that the statements~\eqref{PP0}--\eqref{PP7} 
are valid for $k_1+k_2\le m$.  
We must prove \eqref{PP0}--\eqref{PP7} for $k_1+k_2=m+1$. 

Below $l\in\{1,\dots,n\}$ is such that $l\neq i$, $l\neq j$. 
In what follows, the symbol ``$=$'' denotes equality in the usual sense, 
and the symbol ``$\equiv$'' is used in the sense of~\eqref{modmgm}. 

Also, we often use the following property. 
If $V_1\equiv V_2\mod\mg^{r}$ for some $r\in\zp$ and $V_1,V_2\in\mg(n)$, 
then $[V_3,V_1]\equiv[V_3,V_2]\mod\mg^{r+r'}$ for any $r'\in\zp$ and 
$V_3\in\mg^{r'}$.

\textbf{Proof of~\eqref{PP0}.} 
We continue to use the notation~\eqref{iterated},~\eqref{iijjklk},~\eqref{PPPub}  
for Lie brackets of elements of~$\mg(n)$. 
For example, according to this notation, $[iP^{2q+2}_{ij}]=[p_i,P^{2q+2}_{ij}]$ 
and $[P^1_iP^{2q+2}_{ij}]=[P^1_i,P^{2q+2}_{ij}]$.

By the induction assumption, for all $q\le m$ one has 
\begin{gather}
\label{ii2q2j}
[\ub{i\dots i}_{2q+2}j]=[iP^{2q+2}_{ij}]=
[P^1_iP^{2q+2}_{ij}]\equiv P^{2q+3}_j\mod\mg^{2q+2},\\
\notag
[ll\ub{i\dots i}_{2q}j]=[ll[P^1_iP^{2q}_{ij}]]\equiv 
[llP^{2q+1}_j]=[l[P^1_lP^{2q+1}_{j}]]
\equiv [lP^{2q+2}_{lj}]=[P^1_lP^{2q+2}_{lj}]\equiv 
P^{2q+3}_j\mod\mg^{2q+2}.  
\end{gather}
Since~\eqref{ii2q2j} is valid for any $i\neq j$, 
we have also $[\ub{l\dots l}_{2q+2}j]\equiv P^{2q+3}_j\mod\mg^{2q+2}$. 
Therefore,  
\begin{equation} 
\label{lij}
[ll\ub{i\dots i}_{2q}j]\equiv [\ub{i\dots i}_{2q+2}j]\equiv   
[\ub{l\dots l}_{2q+2}j]\equiv P^{2q+3}_j\mod\mg^{2q+2}\quad\qquad
\forall\,i\neq j\neq l\neq i,\qquad\forall\,q\le m. 
\end{equation}

Without loss of generality, we can assume $k_2\ge 1$ in~\eqref{PP0}. 
By the induction assumption, we have   
$[[\ub{i\dots i}_{2k_1}j]\ub{i\dots i}_{2k_2-2}j]\equiv 0
\mod\mg^{2k_1+2k_2-1}$. 
Using this and the Jacobi identity, one gets
\begin{multline}\label{liijk2}
[l[\ub{i\dots i}_{2k_1}j][l\ub{i\dots i}_{2k_2-2}j]]=
[l[[\ub{i\dots i}_{2k_1}j]l]\ub{i\dots i}_{2k_2-2}j]
+[ll[\ub{i\dots i}_{2k_1}j]\ub{i\dots i}_{2k_2-2}j]
\equiv \\
\equiv
-[l[l\ub{i\dots i}_{2k_1}j]
[\ub{i\dots i}_{2k_2-2}j]]\mod\mg^{2k_1+2k_2+1},
\end{multline}

Using~\eqref{lij} and~\eqref{liijk2}, we obtain  
\begin{multline}
\label{l1}
[[\ub{i\dots i}_{2k_1}j][\ub{i\dots i}_{2k_2}j]]\equiv 
[[\ub{i\dots i}_{2k_1}j][ll\ub{i\dots i}_{2k_2-2}j]]
=-[[l\ub{i\dots i}_{2k_1}j][l\ub{i\dots i}_{2k_2-2}j]]+
[l[\ub{i\dots i}_{2k_1}j][l\ub{i\dots i}_{2k_2-2}j]]=\\
=[[ll\ub{i\dots i}_{2k_1}j][\ub{i\dots i}_{2k_2-2}j]]
-[l[l\ub{i\dots i}_{2k_1}j][\ub{i\dots i}_{2k_2-2}j]]
+[l[\ub{i\dots i}_{2k_1}j][l\ub{i\dots i}_{2k_2-2}j]]\equiv\\
\equiv
[[ll\ub{i\dots i}_{2k_1}j][\ub{i\dots i}_{2k_2-2}j]]
-2[l[l\ub{i\dots i}_{2k_1}j][\ub{i\dots i}_{2k_2-2}j]]
\mod\mg^{2k_1+2k_2+1}. 
\end{multline}

Since, by~\eqref{lij}, 
$[ll\ub{i\dots i}_{2k_1}j]\equiv [\ub{i\dots i}_{2k_1+2}j]
\mod\mg^{2k_1+2}$, 
from~\eqref{l1} it follows that  
\begin{equation}
\label{iikkjj}
[[\ub{i\dots i}_{2k_1}j][\ub{i\dots i}_{2k_2}j]]\equiv 
[[\ub{i\dots i}_{2k_1+2}j][\ub{i\dots i}_{2k_2-2}j]]-2[l[l\ub{i\dots i}_{2k_1}j][\ub{i\dots i}_{2k_2-2}j]]\mod\mg^{2k_1+2k_2+1}.
\end{equation} 
If $k_2\ge 2$, 
applying the same procedure to the term 
$[[\ub{i\dots i}_{2k_1+2}j][\ub{i\dots i}_{2k_2-2}j]]$ 
in equation~\eqref{iikkjj}, one gets 
\begin{equation*}
[[\ub{i\dots i}_{2k_1}j][\ub{i\dots i}_{2k_2}j]]
\equiv  
[[\ub{i\dots i}_{2k_1+4}j][\ub{i\dots i}_{2k_2-4}j]]
-2[l[l\ub{i\dots i}_{2k_1}j][\ub{i\dots i}_{2k_2-2}j]]
-2[l[l\ub{i\dots i}_{2k_1+2}j][\ub{i\dots i}_{2k_2-4}j]]
\mod\mg^{2k_1+2k_2+1}.
\end{equation*}
Thus, applying this procedure several times to the first summand of the right-hand side, we obtain 
\begin{equation}
\label{ijl}
[[\ub{i\dots i}_{2k_1}j][\ub{i\dots i}_{2k_2}j]]\equiv 
[[\ub{i\dots i}_{2(k_1+k_2)}j]j]
-2\sum_{s=1}^{k_2} [l[l\ub{i\dots i}_{2(k_1+s-1)}j][\ub{i\dots i}_{2(k_2-s)}j]]
\mod\mg^{2k_1+2k_2+1}.
\end{equation}
By the induction assumption and~\eqref{lij}, one has for all $s=1,\dots,k_2$
$$
[[l\ub{i\dots i}_{2(k_1+s-1)}j]i]\equiv 
[[P^1_lP^{2(k_1+s)-1}_j]P^1_i]\equiv [P^{2(k_1+s)}_{lj}P^1_i]\equiv 0
\mod\mg^{2(k_1+s)}.
$$
Therefore, 
\begin{equation}
\label{ijl1}
[l[l\ub{i\dots i}_{2(k_1+s-1)}j][\ub{i\dots i}_{2(k_2-s)}j]]\equiv 
[l\ub{i\dots i}_{2(k_2-s)}[l\ub{i\dots i}_{2(k_1+s-1)}j]j]
=-[l\ub{i\dots i}_{2(k_2-s)}jl\ub{i\dots i}_{2(k_1+s-1)}j]
\mod\mg^{2k_1+2k_2+1}.
\end{equation}
By the induction assumption and~\eqref{lij}, 
\begin{equation}
\label{verylong}
[l\ub{i\dots i}_{2(k_1+s-1)}j]\equiv [P^1_lP^{2(k_1+s)-1}_j]\equiv 
[P^{2(k_1+s)}_{lj}]
\equiv -[P^1_jP^{2(k_1+s)-1}_l]\equiv 
-[j\ub{i\dots i}_{2(k_1+s-1)}l]
\mod\mg^{2(k_1+s)-1}.
\end{equation}
Using~\eqref{ijl1},~\eqref{verylong}, and~\eqref{lij}, we obtain 
\begin{multline}
\label{longmult}
[l[l\ub{i\dots i}_{2(k_1+s-1)}j][\ub{i\dots i}_{2(k_2-s)}j]]\equiv
-[l\ub{i\dots i}_{2(k_2-s)}jl\ub{i\dots i}_{2(k_1+s-1)}j]
\equiv [l\ub{i\dots i}_{2(k_2-s)}jj\ub{i\dots i}_{2(k_1+s-1)}l]=\\
=[l\ub{i\dots i}_{2(k_2-s)}[jj\ub{i\dots i}_{2(k_1+s-1)}l]]
\equiv[l\ub{i\dots i}_{2(k_2-s)}[\ub{i\dots i}_{2(k_1+s)}l]]=
[l\ub{i\dots i}_{2(k_1+k_2)}l]
\mod\mg^{2k_1+2k_2+1}.
\end{multline}
Combining~\eqref{longmult} with~\eqref{ijl}, one gets 
\begin{equation}
\label{ijl2}
[[\ub{i\dots i}_{2k_1}j][\ub{i\dots i}_{2k_2}j]]\equiv 
-[j\ub{i\dots i}_{2(k_1+k_2)}j]
-2k_2[l\ub{i\dots i}_{2(k_1+k_2)}l]
\mod\mg^{2k_1+2k_2+1}\qquad
\forall\,k_1,k_2,\qquad k_1+k_2=m+1. 
\end{equation}
For $k_1=0$ and $k_2=m+1$, equation~\eqref{ijl2} implies 
\begin{equation}
\label{jii2m1}
[j\ub{i\dots i}_{2m+2}j]\equiv-(m+1)[l\ub{i\dots i}_{2m+2}l]\quad\mod\mg^{2k_1+2k_2+1}.
\end{equation}
Since we assume $n\ge 4$ in Lemma~\ref{lemma}, there is $b\in\{1,\dots,n\}$ such that 
$b\neq i$, $b\neq j$, $b\neq l$. Since~\eqref{jii2m1} is valid for any $i\neq j\neq l\neq i$, 
we get also 
\begin{equation}
\label{jii2m1b}
[j\ub{i\dots i}_{2m+2}j]\equiv-(m+1)[b\ub{i\dots i}_{2m+2}b],\qquad
[l\ub{i\dots i}_{2m+2}l]\equiv-(m+1)[b\ub{i\dots i}_{2m+2}b]\quad\mod\mg^{2k_1+2k_2+1}.
\end{equation}
Using~\eqref{jii2m1},~\eqref{jii2m1b}, one obtains
\begin{equation}
\label{m1bub}
(m+1)[b\ub{i\dots i}_{2m+2}b]\equiv-[j\ub{i\dots i}_{2m+2}j]
\equiv(m+1)[l\ub{i\dots i}_{2m+2}l]
\equiv -(m+1)^2[b\ub{i\dots i}_{2m+2}b]\mod\mg^{2k_1+2k_2+1}.
\end{equation} 
Equation~\eqref{m1bub} implies $[b\ub{i\dots i}_{2m+2}b]\equiv 0\mod\mg^{2k_1+2k_2+1}$. 
Combing this with~\eqref{ijl2},~\eqref{jii2m1b}, we obtain~\eqref{PP0}. 

\textbf{Proof of~\eqref{PP1}.} 
By the induction assumption and properties~\eqref{PP0},~\eqref{rel1},~\eqref{rel2}, 
\begin{gather*}
[\ub{i\dots i}_{2m+1}j]\equiv -[\ub{j\dots j}_{2m+1}i],\qquad 
[i\ub{j\dots j}_{2m}i]\equiv 0
\quad\mod\mg^{2m+1},\\ 
[[iij]j]\equiv 0\mod\mg^{3},
\quad\qquad [iij]\equiv [llj],\quad [jji]\equiv [lli] 
\mod\mg^{2},\quad\qquad [ilj]=0.
\end{gather*}
Using this, one gets 
\begin{multline*}
P^{2(k_1+k_2+1)}_{ij}=
[\ub{i\dots i}_{2m+3}j]=[ii\ub{i\dots i}_{2m+1}j]\equiv -[ii\ub{j\dots j}_{2m+1}i]
\equiv -[i[ij]\ub{j\dots j}_{2m}i]\equiv \\ 
\equiv -[[iij]\ub{j\dots j}_{2m}i]\equiv -[\ub{j\dots j}_{2m}[iij]i]
=[\ub{j\dots j}_{2m}iiij]
\equiv [\ub{j\dots j}_{2m}illj]=[\ub{j\dots j}_{2m}[il]lj]=\\
=[\ub{j\dots j}_{2m}[[il]l]j]=-[\ub{j\dots j}_{2m}jlli]
\equiv 
-[\ub{j\dots j}_{2m}jjji]=-P^{2(k_1+k_2+1)}_{ji}
\mod\mg^{2k_1+2k_2+1}.
\end{multline*}

\textbf{Proof of~\eqref{PP2}.} 
By the Jacobi identity and~\eqref{PP0}, 
\begin{multline}
\label{PP2p}
[P^{2k_1}_{ij},P^{2k_2+2}_{ij}]=
[[\ub{i\dots i}_{2k_1-1}j][\ub{i\dots i}_{2k_2+1}j]]
=[[[\ub{i\dots i}_{2k_1-1}j]i][\ub{i\dots i}_{2k_2}j]]+
[i[\ub{i\dots i}_{2k_1-1}j][\ub{i\dots i}_{2k_2}j]]=\\
=-[[\ub{i\dots i}_{2k_1}j][\ub{i\dots i}_{2k_2}j]]+
[i[\ub{i\dots i}_{2k_1-1}j][\ub{i\dots i}_{2k_2}j]]
\equiv[i[\ub{i\dots i}_{2k_1-1}j][\ub{i\dots i}_{2k_2}j]]
\mod\mg^{2k_1+2k_2+1}.  
\end{multline}
By the induction assumption and~\eqref{lij}, 
\begin{equation}
\label{ubidotsi2k}
[[\ub{i\dots i}_{2k_1-1}j][\ub{i\dots i}_{2k_2}j]]\equiv [P^{2k_1}_{ij}P^{2k_2+1}_{j}]\equiv 
P^{2(k_1+k_2)+1}_i
\mod\mg^{2k_1+2k_2}.
\end{equation}
Using~\eqref{PP2p},~\eqref{ubidotsi2k},~and~\eqref{PP0}, we obtain  
\begin{equation*}
[P^{2k_1}_{ij},P^{2k_2+2}_{ij}]\equiv[i[\ub{i\dots i}_{2k_1-1}j][\ub{i\dots i}_{2k_2}j]]
\equiv
[iP^{2(k_1+k_2)+1}_i]=[P^1_iP^{2(k_1+k_2)+1}_i]\equiv 0
\mod\mg^{2k_1+2k_2+1}. 
\end{equation*}

\textbf{Proof of~\eqref{PP3}.} 
By~\eqref{lij} and the induction assumption of~\eqref{PP5}, 
for any $q_1,\,q_2\in\zp$ such that $q_1+q_2\le m$, one has 
$[[\ub{l\dots l}_{2q_1}i][\ub{l\dots l}_{2q_2+1}j]]\equiv 
[P^{2q_1+1}_i,P^{2q_2+2}_{lj}]\equiv 0
\mod\mg^{2q_1+2q_2+2}$.
For $k_1+k_2=m+1$ this implies 
\begin{equation}
\label{llk1si}
[[\ub{l\dots l}_{2k_1+2s}i][\ub{l\dots l}_{2k_2-2s-1}j]]\equiv 0,\quad 
[[\ub{l\dots l}_{2k_1+2s+1}i][\ub{l\dots l}_{2k_2-2s-2}j]]\equiv 0
\mod\mg^{2k_1+2k_2}\qquad
\forall\,s,\quad 0\le s< k_2.
\end{equation}
Using~\eqref{llk1si} and the Jacobi identity, we get  
\begin{multline}\label{llk1ik2j}
[[\ub{l\dots l}_{2k_1}i][\ub{l\dots l}_{2k_2}j]]\equiv 
-[[\ub{l\dots l}_{2k_1+1}i][\ub{l\dots l}_{2k_2-1}j]]\equiv \\  
\equiv [[\ub{l\dots l}_{2k_1+2}i][\ub{l\dots l}_{2k_2-2}j]]\equiv\dots 
\equiv [[\ub{l\dots l}_{2(k_1+k_2)}i]j]=-[j\ub{l\dots l}_{2(k_1+k_2)}i] 
\mod\mg^{2k_1+2k_2+1}. 
\end{multline}
Combining~\eqref{llk1ik2j} with~\eqref{lij} and~\eqref{PP1}, one obtains 
\begin{multline*}
[P^{2k_1+1}_i,P^{2k_2+1}_{j}]\equiv [[\ub{l\dots l}_{2k_1}i][\ub{l\dots l}_{2k_2}j]]
\equiv -[j\ub{l\dots l}_{2(k_1+k_2)}i]\equiv \\ 
\equiv -[j\ub{j\dots j}_{2(k_1+k_2)}i]=
-P^{2(k_1+k_2+1)}_{ji}\equiv P^{2(k_1+k_2+1)}_{ij}
\mod\mg^{2k_1+2k_2+1}. 
\end{multline*}

\textbf{Proof of~\eqref{PP4}.} 
Consider first the case $k_1=0$. For $j\in\{1,\dots,n\}$, set 
$$
\tilde\jmath=
\left\{
\begin{array}{c}
1,\quad\text{if $j\neq 1$},\\
2,\quad\text{if $j=1$}.
\end{array}
\right.
$$
If $i=\tilde\jmath$ then 
$[P^{2k_1+1}_i,P^{2k_2+2}_{ij}]=
[i\ub{i\dots i}_{2k_2+1}j]=[\ub{\tilde\jmath\dots\tilde\jmath}_{2k_2+2}j]=
P_j^{2(k_1+k_2)+3}$ for $k_1=0$. 

Now suppose that $i\neq {\tilde\jmath}$ and $k_1=0$. By~\eqref{PP0}, 
\begin{equation}
\label{ijj2k2}
[i\ub{j\dots j}_{2k_2}i]\equiv 0\quad\mod\mg^{2k_2+1}. 
\end{equation}
From~\eqref{rel1} it follows that $[[ij]{\tilde\jmath}]=0$. 
Using~\eqref{PP1},~\eqref{lij},~\eqref{ijj2k2}, 
and $[[ij]{\tilde\jmath}]=0$, we obtain 
\begin{multline*}
[P^{2k_1+1}_i,P^{2k_2+2}_{ij}]\equiv -[P^{2k_1+1}_i,P^{2k_2+2}_{ji}]=
-[i\ub{j\dots j}_{2k_2+1}i]
\equiv -[[ij]\ub{j\dots j}_{2k_2}i]\equiv -[[ij]\ub{{\tilde\jmath}\dots {\tilde\jmath}}_{2k_2}i]=\\
=-[\ub{{\tilde\jmath}\dots {\tilde\jmath}}_{2k_2}[ij]i]=[\ub{{\tilde\jmath}\dots {\tilde\jmath}}_{2k_2}[iij]]
\equiv  
[\ub{{\tilde\jmath}\dots {\tilde\jmath}}_{2k_2}[{\tilde\jmath}{\tilde\jmath}j]]=P_j^{2(k_1+k_2)+3}
\mod\mg^{2k_1+2k_2+2}
\quad\text{for}\,\ k_1=0.  
\end{multline*}
If $k_1=0$ then 
$[P^{2k_1+1}_i,P^{2k_2+2}_{ij}]=[\ub{i\dots i}_{2m+4}j]$ for $m=k_1+k_2-1$.
We have proved~\eqref{PP4} for $k_1=0$, that is,  
\begin{equation}
\label{ubii2m2j}
[\ub{i\dots i}_{2m+4}j]\equiv P_j^{2m+5}\quad\mod\mg^{2m+4}
\qquad\qquad\forall\,i\neq j.
\end{equation}

Now consider the case $k_1\ge 1$. 
By~\eqref{lij}, for all $l\neq i$, $l\neq j$ one obtains
\begin{equation}
\label{p2k11ip2k2}
[P^{2k_1+1}_i,P^{2k_2+2}_{ij}]=-[P^{2k_2+2}_{ij},P^{2k_1+1}_i]\equiv
-[[\ub{i\dots i}_{2k_2+1}j][\ub{l\dots l}_{2k_1}i]]
\mod\mg^{2k_1+2k_2+2}.  
\end{equation}
By the induction assumption of~\eqref{PP5},
\begin{equation}
\label{ubii2k}
[[\ub{i\dots i}_{2k_2+1}j]l]=
[P^{2k_2+2}_{ij},P^{1}_l]\equiv 0\quad\mod\mg^{2k_2+2}.
\end{equation} 
Using~\eqref{lij},~\eqref{p2k11ip2k2},~\eqref{ubii2k}, we get  
\begin{equation*}
[P^{2k_1+1}_i,P^{2k_2+2}_{ij}]\equiv
-[[\ub{i\dots i}_{2k_2+1}j][\ub{l\dots l}_{2k_1}i]]\equiv 
-[\ub{l\dots l}_{2k_1}[\ub{i\dots i}_{2k_2+1}j]i]
=[\ub{l\dots l}_{2k_1}\ub{i\dots i}_{2k_2+2}j]\equiv 
[\ub{l\dots l}_{2m+4}j]\mod\mg^{2m+4}.  
\end{equation*}
Since~\eqref{ubii2m2j} is valid for all $i\neq j$, one has
$[\ub{l\dots l}_{2m+4}j]\equiv P_j^{2(k_1+k_2)+3}
\mod\mg^{2k_1+2k_2+2}$. 

\textbf{Proof of~\eqref{PP5}.} 
Since $n\ge 4$, there is $l\in\{1,\dots,n\}$ such that 
$l\neq i$, $l\neq i'$, $l\neq j'$. 

Consider first the case $k_1\ge 1$. 
Then $k_2\le m$. By the induction assumption of~\eqref{PP5}, 
$$
[[\ub{i'\dots i'}_{2k_2+1}j']l]=[P^{2k_2+2}_{i'j'},P^{1}_l]
\equiv 0,\qquad  
[[\ub{i'\dots i'}_{2k_2+1}j']i]=[P^{2k_2+2}_{i'j'},P^{1}_i]\equiv 0
\quad\mod\mg^{2k_2+2}. 
$$ 
Using this and~\eqref{lij}, one gets 
\begin{multline}
\label{PlP}
[P^{2k_1+1}_{i},P^{2k_2+2}_{i'j'}]=-[P^{2k_2+2}_{i'j'},P^{2k_1+1}_{i}]\equiv 
-[[\ub{i'\dots i'}_{2k_2+1}j']\ub{l\dots l}_{2k_1}i]
\equiv \\
\equiv
-[l\dots l[\ub{i'\dots i'}_{2k_2+1}j']i]\equiv 0
\mod\mg^{2k_1+2k_2+2}. 
\end{multline}
If we set $k_2=0$, $k_1=m+1$ 
then~\eqref{PlP} implies that for any distinct integers 
$c_1,c_2,c_3,c_4\in\{1,\dots,n\}$ 
\begin{equation}
\label{ccc'}
[[c_1c_2]\ub{c_4\dots c_4}_{2m+2}c_3]\equiv 0\quad\mod\mg^{2m+4}. 
\end{equation}
By~\eqref{lij}, one has $[\ub{c_4\dots c_4}_{2m+2}c_3]\equiv 
[\ub{c_2\dots c_2}_{2m+2}c_3]\mod\mg^{2m+2}$. 
Combining this with~\eqref{ccc'}, we obtain 
\begin{equation}
\label{ccc}
[[c_1c_2]\ub{c_2\dots c_2}_{2m+2}c_3]\equiv 0\quad\mod\mg^{2m+4}. 
\end{equation}
By the Jacobi identity, \eqref{ccc}, and~\eqref{lij},  
\begin{equation}
\label{ccc1}
[c_1\ub{c_2\dots c_2}_{2m+3}c_3]=
[[c_1c_2]\ub{c_2\dots c_2}_{2m+2}c_3]+[c_2c_1\ub{c_2\dots c_2}_{2m+2}c_3]
\equiv [c_2c_1\ub{c_2\dots c_2}_{2m+2}c_3]\equiv 
[c_2\ub{c_1\dots c_1}_{2m+3}c_3]\mod\mg^{2m+4}.
\end{equation}
Also, property~\eqref{PP1} implies 
\begin{equation}
\label{ccc2}
[c_1\ub{c_2\dots c_2}_{2m+3}c_3]\equiv -[c_1\ub{c_3\dots c_3}_{2m+3}c_2]
\quad\mod\mg^{2m+4}.
\end{equation} 

It remains to study the case $k_1=0$. 
Using~\eqref{ccc1} and~\eqref{ccc2}, for $k_1=0$ we get 
\begin{multline*}
[P^{2k_1+1}_{i},P^{2k_2+2}_{i'j'}]=
[i\ub{i'\dots i'}_{2k_2+1}j']\equiv 
[i'\ub{i\dots i}_{2k_2+1}j']\equiv
-[i'j'\dots j'i]\equiv -[j'i'\dots i'i]\equiv \\
\equiv[j'i\dots ii']\equiv
[ij'\dots j'i']\equiv -[ii'\dots i'j']=-[P^{2k_1+1}_{i},P^{2k_2+2}_{i'j'}]
\mod\mg^{2k_1+2k_2+2},\quad 
k_1=0. 
\end{multline*}
Therefore, 
$[P^{2k_1+1}_{i},P^{2k_2+2}_{i'j'}]\equiv 0\mod\mg^{2k_1+2k_2+2}$. 

\textbf{Proof of~\eqref{PP6}.} By~\eqref{PP5}, we have  
$[i,P^{2k_2+2}_{i'j'}]\equiv 0,\,\ [j,P^{2k_2+2}_{i'j'}]\equiv 0
\mod\mg^{2k_2+2}$. This implies~\eqref{PP6}. 

\textbf{Proof of~\eqref{PP7}.} 
By~\eqref{PP5}, 
$[[\ub{i\dots i}_{2k_1-1}j]j']\equiv 0\mod\mg^{2k_1}$. 
Using this and~\eqref{PP1},~\eqref{lij}, one obtains
\begin{multline*} 
[P^{2k_1}_{ij},P^{2k_2+2}_{ij'}]\equiv -[P^{2k_1}_{ij},P^{2k_2+2}_{j'i}]=
-[[\ub{i\dots i}_{2k_1-1}j]\ub{j'\dots j'}_{2k_2+1}i]
\equiv 
-[\ub{j'\dots j'}_{2k_2+1}[\ub{i\dots i}_{2k_1-1}j]i]=\\
=[\ub{j'\dots j'}_{2k_2+1}\ub{i\dots i}_{2k_1}j]\equiv
[\ub{j'\dots j'}_{2k_2+1}\ub{j'\dots j'}_{2k_1}j]
= P^{2(k_1+k_2+1)}_{j'j}\equiv 
-P^{2(k_1+k_2+1)}_{jj'}\mod\mg^{2k_1+2k_2+1}.
\end{multline*}

\section*{Acknowledgments}
The authors thank T.~Skrypnyk and V.~V.~Sokolov for useful discussions. 
Work of SI is supported by 
the Netherlands Organisation for Scientific Research (NWO) grants 639.031.515 and 613.000.906.    
JvdL is partially supported by the European Union through the FP6 Marie
Curie Grant (ENIGMA) and the European Science Foundation (MISGAM). 
SI is grateful to the Max Planck Institute for Mathematics (Bonn, Germany) 
for its hospitality and excellent working conditions during  02.2006--01.2007 and 06.2010--09.2010, 
when part of this research was done.

\end{document}